\pdfoutput=1
% use ACM `manuscript' style for the submission, with `nonacm' and `anonymous' options
% if you want to use ACM template for a preprint version, use `nonacm' option
%\documentclass[manuscript,nonacm,anonymous]{acmart}
%\documentclass[manuscript,nonacm]{acmart}
%\documentclass[manuscript,nonacm,sigconf,anonymous,review]{acmart}
\documentclass[a4paper,USenglish]{lipics-v2021}

\hideLIPIcs
\nolinenumbers %uncomment to disable line numbering

% ----------------------------------------------------------------
% package includes and custom commands go here
% do not use packages that alter the layout or style

%\usepackage[letterpaper]{geometry}
\usepackage{amsmath}
\usepackage{amsfonts}
\usepackage{amsthm}
\usepackage{subcaption}
\usepackage{graphicx}
\usepackage{xcolor}
\usepackage{algorithm}
\usepackage[noend]{algpseudocode}
\usepackage{comment}
\usepackage[title]{appendix}
\usepackage{tikz-cd}
\usetikzlibrary{patterns}

% TIP: for publicly available preprint using ACM template, you may want to use one-sided layout:
%\geometry{twoside=false} 
% this may generate warnings from fancyhdr package, which can be ignored 

%LIPIcs Author Fields
\author{Shunhao Oh}{School of Computer Science, Georgia Institute of Technology}{ohoh@gatech.edu}{}{}

\author{Dana Randall}{School of Computer Science, Georgia Institute of Technology}{randall@cc.gatech.edu}{https://orcid.org/0000-0002-1152-2627}{Supported by the National Science Foundation (NSF) award  CCF-2106687 and by U.S.\ Army Research Office (ARO) award MURI W911NF-19-1-0233.}

\author{Andr\'ea W.\ Richa}{School of Computing and Augmented Intelligence, Arizona State University}{aricha@asu.edu}{https://orcid.org/0000-0003-3592-3756}{Supported in part by the National Science Foundation (NSF) award CCF-2106917 and by U.S.\ Army Research Office (ARO) award MURI W911NF-19-1-0233.}

%
% =======================================================
%
% LIPIcs metadata
\authorrunning{S. Oh, D. Randall and A. W. Richa} %TODO mandatory. First: Use abbreviated first/middle names. Second (only in severe cases): Use first author plus 'et al.'

\Copyright{Shunhao Oh, Dana Randall and Andr\'ea W.\ Richa} %TODO mandatory, please use full first names. LIPIcs license is "CC-BY";  http://creativecommons.org/licenses/by/3.0/

%\ccsdesc[100]{\textcolor{red}{Replace ccsdesc macro with valid one}} %TODO mandatory: Please choose ACM 2012 classifications from https://dl.acm.org/ccs/ccs_flat.cfm 
\begin{CCSXML}
<ccs2012>
<concept>
<concept_id>10003752.10003809.10010172.10003824</concept_id>
<concept_desc>Theory of computation~Self-organization</concept_desc>
<concept_significance>500</concept_significance>
</concept>
<concept>
<concept_id>10003752.10010061.10010065</concept_id>
<concept_desc>Theory of computation~Random walks and Markov chains</concept_desc>
<concept_significance>500</concept_significance>
</concept>
</ccs2012>
\end{CCSXML}
\ccsdesc[500]{Theory of computation-Self-organization}
\ccsdesc[500]{Theory of computation-Random walks and Markov chains}

\keywords{Dynamic networks, adaptive stimuli, foraging, self-organizing particle systems, programmable matter} %TODO mandatory; please add comma-separated list of keywords

\title{Adaptive Collective Responses to Local Stimuli in Anonymous Dynamic Networks}
%collective responses to stimuli in dynamic networks.
%\title{Adaptive Consensus in Dynamic Networks Under Threats and Opportunities}
% adaptive self-organization  /    collective responses in dynamic networks 
% adaptive collective responses in anonymous dynamic networks
% 

%Editor-only macros:: begin (do not touch as author)%%%%%%%%%%%%%%%%%%%%%%%%%%%%%%%%%%
\EventEditors{David Doty and Paul Spirakis}
\EventNoEds{2}
\EventLongTitle{2nd Symposium on Algorithmic Foundations of Dynamic Networks (SAND 2023)}
\EventShortTitle{SAND 2023}
\EventAcronym{SAND}
\EventYear{2023}
\EventDate{June 19--21, 2023}
\EventLocation{Pisa, Italy}
\EventLogo{}
\SeriesVolume{257}
\ArticleNo{6}
%%%%%%%%%%%%%%%%%%%%%%%%%%%%%%%%%%%%%%%%%%%%%%%%%%%%%%
%
% =======================================================
%

\newcommand{\ex}{\mathbb{E}}

\newcommand{\mathbbNzero}{\mathbb{N}_{\geq 0}}

\algnewcommand{\algorithmicswitch}{\textbf{switch}}
\algnewcommand{\algorithmiccase}{\textbf{case}}
\algdef{SE}[SWITCH]{Switch}{EndSwitch}[1]{\algorithmicswitch\ #1\ \algorithmicdo}{\algorithmicend\ \algorithmicswitch}%
\algdef{SE}[CASE]{Case}{EndCase}[1]{\algorithmiccase\ #1:}{\algorithmicend\ \algorithmiccase}%
\algtext*{EndSwitch}%
\algtext*{EndCase}%

% Important notations and terms are defined here so that I can change them easily.
\newcommand{\aware}{\textsc{Aware}}
\newcommand{\unaware}{\textsc{Unaware}}
\newcommand{\alerttoken}{{alert token}}
\newcommand{\alerttokens}{{alert tokens}}
\newcommand{\cleartoken}{{all-clear token}}
\newcommand{\cleartokens}{{all-clear tokens}}
\newcommand{\allclear}{{all-clear}}
\newcommand{\Alerttoken}{{Alert token}}

\newcommand{\Cleartoken}{{All-clear token}}

\newcommand{\bstates}{{\widehat{\sigma}}}
\newcommand{\witset}{{\mathcal{W}}}

\newcommand{\behaviorgroup}{{behavior group}}
\newcommand{\behaviorgroups}{{behavior groups}}

\newcommand{\mobileb}{\textsc{Mobile}}
\newcommand{\unawareb}{\textsc{Unaware}}
\newcommand{\immobileb}{\textsc{Immobile}}

\newcommand{\stateU}{\mathcal{U}}
\newcommand{\stateAbase}{\mathcal{A}}
\newcommand{\stateA}{\mathcal{A}_{\emptyset}}
\newcommand{\stateAA}{\mathcal{A}_{\{A\}}}
\newcommand{\stateAW}{\mathcal{A}_{\{W\}}}
\newcommand{\stateAAW}{\mathcal{A}_{\{{A,W}\}}}
\newcommand{\stateAC}{\mathcal{A}_{\{C\}}}
\newcommand{\stateBU}{\widehat{\mathcal{U}}} % Unaware
\newcommand{\stateBA}{\widehat{\mathcal{M}}} % Aware
\newcommand{\stateBI}{\widehat{\mathcal{I}}} % Immobile
\newcommand{\NA}{N_{\stateAbase}}

\newcommand{\junk}[1]{}
\begin{document}
\maketitle

\begin{abstract}
We develop a framework for self-induced phase changes in programmable matter in which a collection of agents with limited computational and communication 
%and movement 
capabilities can collectively perform appropriate global tasks in response to local stimuli that dynamically appear and disappear.  Agents reside on graph vertices, where each stimulus is only recognized locally, and agents communicate via token passing along edges to alert other agents to transition to an \aware{} state when stimuli are present and an \unaware{} state when the stimuli disappear. We present an Adaptive Stimuli Algorithm that is robust to competing waves of messages as multiple stimuli change, possibly adversarially.  Moreover, in addition to handling arbitrary stimulus dynamics, the algorithm can handle agents reconfiguring the connections (edges) of the graph over time in a controlled way.

As an application, we show how this Adaptive Stimuli Algorithm on reconfigurable graphs can be used to solve the {\em foraging problem}, where food sources may be discovered, removed, or shifted at arbitrary times. We would like the agents  to consistently self-organize, 
using only local interactions, such that
if the food remains in a position long enough, the agents transition to a {\em gather phase} in which many collectively form a single large component with small perimeter %and compress 
around the food.
Alternatively, if no food source has existed recently, the agents should undergo a self-induced phase change and switch to a {\em search phase} in which they distribute themselves randomly throughout the lattice region to search for food.
Unlike previous approaches to foraging, this process is indefinitely repeatable, withstanding competing waves of messages that may interfere with each other. 
Like a physical phase change,  microscopic changes such as the deletion or addition of a single food source trigger these macroscopic, system-wide transitions as agents share information about the environment and respond locally to get the desired collective response.
%In this paper, we present a foraging algorithm that 
%is fully stochastic and builds upon the compression algorithm of Cannon et al.\ (PODC'16), motivated by a phase change that occurs in the fixed magnetization Ising model from statistical physics.
%When the food is present, agents incrementally  enter the gather phase where they attach and compress around the food; when the food disappears or moves, the agents transition to the search phase where they relax their ferromagnetic attraction 
%and instead move according to a simple exclusion process that causes the agents to disperse and explore the domain in search of new food.
%This algorithm is the first to leverage {\em self-induced phase changes as an algorithmic tool}. A key component of our algorithm is a careful token passing mechanism that ensures that the rate at which the compressed cluster around the food dissipates outpaces the rate at which the cluster may continue to grow, ensuring that the broadcast wave
%of the ``dispersion token'' will always outpace that of a compression wave. 
\end{abstract}
%\nopagenumber

\newpage
\section{Introduction}
\label{sec:intro}
Self-organizing collective behavior of interacting agents is a fundamental, nearly ubiquitous phenomenon across fields,  reliably producing rich and complex coordination.  In nature, examples at the micro- and nano-scales include coordinating cells, including our own immune system or self-repairing tissue (e.g.,~\cite{alberti}), and bacterial colonies~(e.g., \cite{Liu2015,Prindle2015}); 
%micro-scale swarm robotics, 
%and interacting particle systems in physics; 
at the macro-scale it can represent flocks of birds~\cite{Chazelle14}, shoals of fish aggregating to intimidate predators~\cite{Magurran1990}, fire ants forming rafts to survive floods~\cite{Mlot2011},
%coordination of drones~\cite{??},  
and human societal dynamics such as segregation~\cite{Schelling1971}.
Common characteristic of these disparate systems is that they are all self-actuated and respond to simple, local environmental stimuli to collectively change the ensemble's behavior.

In 1991, Toffoli and Margolus coined {\it programmable matter} to realize a physical computing medium composed of simple,  homogeneous entities that can dynamically alter its physical properties in a programmable fashion, controlled either by user input or its own autonomous sensing of its environment~\cite{Toffoli1991}. 
There are formidable challenges to realizing such
%programmable matter computationally
collective tasks and many researchers in distributed computing and swarm and modular robotics have investigated how such small, simply programmed entities can coordinate to solve complex tasks and exhibit useful emergent collective behavior (e.g.,~\cite{Sahin2005}).
%Additional algorithms based on this approach, which also exhibit similar phase changes, found applications to the {\em separation} problem~\cite{Cannon2019}, where colored particles would like to separate into clusters of same color particles, {\em aggregation}~\cite{Li2021-bobbots} where the system would like to compress but no longer needs to remain connected, shortcut bridging~\cite{Arroyo2018}, locomotion~\cite{Savoie2018}, alignment~\cite{kedia2022} and transport~\cite{Li2021-bobbots}.
A more ambitious goal, suggested by self-organizing collective systems in nature, is to design programmable matter systems of {\em self-actuated} individuals that {\em autonomously respond to continuous, local changes in their environment}.

%\subsection
%\paragraph*
%\vskip.1in
%{\bf The Dynamic Stimuli Problem: \ }
\vskip.1in
\noindent {\bf The Dynamic Stimuli Problem:} \ 
\label{section:model}
As a distributed framework for agents collectively 
self-organizing in response to changing stimuli, we consider the {\it dynamic stimuli problem}
in which we have a large number of agents that collectively respond  to local signals or stimuli. 
These agents have limited computational capabilities and each only communicates with a small set of immediate neighbors.
We represent these %communicating 
agents via a dynamic graph $G$ on $n$ vertices,
where the agents reside at the vertices and edges represent pairs that can perceive and interact with each other. At arbitrary points of time, that may be adversarially chosen, {\em stimuli} 
dynamically 
appear and disappear at the vertices of~$G$
 --- these can be a threat, such as 
 %the appearance of a 
 an unexpected predator, or an opportunity, such as new food or energy resources.
 
 An agent present at the same vertex as a stimulus acts as a {\em witness} and alerts other agents. 
If any agent continues to witness some stimulus over an extended period of time, we want all agents to eventually be alerted, switching to the \aware{} state; on the other hand, once witnesses stop sensing a stimulus for long enough, 
%if there are no witnesses for long enough, 
all agents should return to the \unaware{} state. Such collective state changes may repeat indefinitely as stimuli appear and disappear over time.
Converging to these two global states enables agents to
%The motivation behind this is for agents to be able to 
carry out {\em differing behaviors in the presence or absence of stimuli}, as observed by the respective witnesses.  As a notable and challenging example, in {\it the foraging problem,}  ``food'' may appear or disappear at arbitrary locations in the graph over time and we would like the collective to {\it gather around food} (also known as {\it dynamic free aggregation}) or {\it disperse in search of new sources}, depending on the whether or not an active food source has been identified.  Cannon {\it et al}$.$~\cite{Cannon2016} showed how computationally limited agents can be made to gather or disperse; 
however there the desired goal,  aggregation or dispersion, is fixed in advance and the algorithm cannot easily be adapted to move between these  according to changing  needs.

In addition to the stimuli dynamics, we assume that the agents 
may {\it reconfigure the connections (edges) of the graph $G$ over time}, 
but in a controlled way that still allows the agents to successfully manage the waves of state changes. 
 In a nutshell, $G$ is {\em reconfigurable} over time if it maintains recurring local connectivity  of the \aware{} agents (i.e., if it makes sure that the 1-hop neighborhood sets of \aware{} agents stay connected over time), as its edge set changes. 
The edge dynamics may be fully in the control of an adversary, or may be controlled by the agents themselves, depending on the context (e.g., in the foraging problem presented in Section~\ref{sec:foraging}, the agents control the edge dynamics).

In this framework, we assume that at all times there are at most a constant number~$w$ of stimuli present at the vertices of~$G$. Agents are anonymous and each  acts as a finite automaton with constant-size memory, constant degree, and no access to global information other than $w$ and a constant upper bound $\Delta$ on the maximum degree.
Individual agents are activated according to their own Poisson clocks and perform instantaneous actions upon activation, a standard way to allow sites to update independently and asynchronously (since the probability two Poisson clocks tick at the exact same instant in time is negligible).  For ease of discussion, we may assume the Poisson clocks have the same rate, so this model is equivalent to a {\em random sequential scheduler}  that chooses an agent uniformly at random to be activated at discrete iterations $t \in \{1,2,3,\ldots\}$.
We denote by $G_t$ the configuration of the reconfigurable graph at iteration~$t$.
When {\em activated} at iteration $t$, an agent perceives its own state and the states of its current neighbors in $G_t$, performs a bounded amount of computation, and can change its own and its neighbors states, including any ``tokens'' (i.e.,  constant size messages received or sent).\footnote{Since we assume a sequential scheduler, such an action is justified; in the presence of a stronger adversarial scheduler, e.g., the asynchronous scheduler, one would need a more detailed message passing mechanism to ensure the transfer of tokens between agents, and resulting changes in their states.}  At each iteration $t \in \{1,2,3,\ldots\}$, we denote  by $\witset_t \subseteq V$ the set of witnesses. The sets $\witset_t$ can change arbitrarily (adversarially) over time, but $|\witset_t|$ is always bounded by the constant $w$,
%,a constant, 
for all $t$, since $w$ is an upper on the number of concurrent stimuli.

%%%%%%%%%%%%%%%%%%%%%%%%%%%%%%%%%%%%%%%%%
\vskip.1in
\noindent{\bf Overview of Results:} \
%\subparagraph{Our results.}
Our contributions are two-fold. First, we present an efficient, robust algorithm  for the dynamic stimuli problem for 
%static and 
a class of reconfigurable graphs.
(The precise details of what constitutes a \emph{valid reconfigurable graph} will be given in Section~\ref{sec:reconfig}.)
%in Section~\ref{sec:alg}. 
 Whenever an agent encounters a new stimulus, the entire collective
 %connected component 
efficiently transforms  to the \aware{} state, so the  agents can implement an appropriate collective response. After a stimulus vanishes, they all return to the \unaware{} state, recovering their neutral collective behavior.  

We show that the system will always converge to the appropriate state (i.e., \aware{} or \unaware{}) once the stimuli stabilize for a sufficient period of time. Specifically,  
if there are no witnesses in the system for a sufficient period of time,
 then all agents in the Adaptive Stimuli Algorithm will reach and remain in the \unaware{} state in  $O(n^2)$
 %polynomial 
 expected time (Theorem~\ref{theorem:mainresultnostimulidynamic}).
% \item 
 Likewise, if the set of witnesses (and stimuli) remains unchanged for a sufficient period of time,
 %sufficiently long,
 %$|\witset_T| > 0$, 
 %\marginpar{\tiny DR: informal def of recurring rate?}
 all agents will reach and remain in the \aware{} state in
 expected time that is a polynomial in $n$ and the ``recurring rate'' of $G$, which captures how frequently disconnected vertices come back in contact with each other
 (Theorem~\ref{theorem:mainresultwithstimulidynamic}). In particular, if $G$ is static or if $G_t$ is connected, for all $t$, the expected convergence time until all agents transition to the \aware{} state is $O(n^5)$ (Theorem~\ref{theorem:mainresultwithstimuli}) and $O(n^6 \log n)$ (Corollary~\ref{cor:connected-graphs}), respectively.
Moreover, the system can recover if the witness set changes before the system converges to the \aware{} or \unaware{} states.

While the arguments are simpler for sequences of connected graphs generated by, say, an oblivious adversary, 
the extension
to the broader class of reconfigurable graphs includes graphs possibly given by non-oblivious adversaries that may occasionally disconnect. This generalization provides
more flexibility  for agents in the \aware{} and \unaware{} states to implement more complex behaviors,
%This distinction allowing for reconfigurable graphs 
as was used for applying the Adaptive Stimuli Algorithm to foraging, which we describe next.

Our second main contribution is the first efficient algorithm for the {\it foraging problem}, where food dynamically appears and disappears over time at arbitrary sites on a finite $\sqrt{N}\times \sqrt{N}$ region of the triangular lattice.
Agents want to gather around any discovered food source (also known as {\it dynamic free aggregation}) or disperse in search of food.  The algorithm of Cannon {\it et al}$.$~\cite{Cannon2016, Li2021-bobbots} uses insights from the high and low temperature phases of the ferromagnetic Ising model from statistical physics
%use solve the individual problems of aggregation and dispersion on the triangular lattice {\it in the non-adaptive setting} where there is a fixed preset collective goal; motivated by a phase change in the ferromagnetic Ising model;  the algorithm has been shown 
to provably achieve either desired collective response: %the algorithm uses a preset 
there is a preset global parameter $\lambda$ 
%given 
related to inverse 
temperature, and the algorithm provably achieves aggregation when $\lambda$ is sufficiently high and dispersion when sufficiently low. 
We show here that by applying the Adaptive Stimuli Algorithm, the  phase change (or bifurcation) for aggregation and dispersion %\cite{Cannon2016, Li2021-bobbots} 
can be self-modulated based on local environmental cues that are communicated through the collective to induce desirable system-wide behaviors in polynomial time in $n$ and $N$, as stated in Theorems~\ref{thm:compressionnofood} and \ref{thm:compressionhasfoodprecise}.

Collectively transitioning between \aware{} and \unaware{} states enables agents to {\it correctly self-regulate system-wide adjustments} in their bias parameters when one or more agents notice the presence or depletion of food to induce the appropriate global coordination to provably transition the collective between macro-modes when required.  We believe other collective behaviors exhibiting emergent bifurcations, including separation/integration \cite{Cannon2019} and alignment/nonalignment \cite{kedia2022}, can be similarly self-modulated.

The distinction between polynomial and super-polynomial 
running times is significant here because our algorithms necessarily make use of competing broadcast waves to propagate commands to change states. A naive implementation of such a broadcast system may put us in situations where neither type of wave  gets to complete its propagation cycle. This may continue for an unknown amount of time, so the agents may fail to reach an agreement on their state.  
%A key component of our algorithm is a 
The carefully engineered token passing mechanism  ensures that when a stimulus has been removed, the {\em rate at which the agents ``reset'' to the \unaware{} state
%become aware that the stimulus has been removed 
outpaces the rate at which the cluster of \aware{} agents may continue to grow}, ensuring that the newer broadcast wave  
always supersedes
%will always outpace 
previous ones and completes in expected polynomial time.  Moreover, while a stimulus is present, there is a continuous {\em probabilistic generation of tokens that move according to a $d_{max}$-random walk} among \aware{} agents until they find a new agent to become \aware{}, thus ensuring the successful convergence to the \aware{} state in expected polynomial time.

%%%%%%%%%%%%%%%%%%%%%%%%%%%%%%%%%%%%%%%%%%
\vskip.1in
\noindent {\bf Related work:} \ 
Dynamic networks have been of growing interest recently and have spawned several model variants (see, e.g., the surveys in~\cite{casteigts2018finding} and \cite{KuhnOshman11}). There is also a vast literature on broadcasting, or information dissemination, in both static and dynamic networks (e.g.,~\cite{KuhnLO10,HaeuplerK11,ClementiRT12,dutta2013complexity,DinitzFGN22}), where one would like to disseminate  $k$ messages to all nodes of a graph $G$, usually with unique token ids 
%from 1 to $k$ 
and $k\leq n$, polylog memory at the nodes (which may also have unique ids), and often nodes' knowledge of $k$ and possibly also of $n$. Note that any of these assumptions violates our agents'  memory or computational capabilities. Moreover, since %we would like 
our collective state-changing process runs indefinitely, 
any naive adaptations of these algorithms 
 would need that $k\to \infty$ to ensure that with any sequence of broadcast waves, the latest always wins. 

Broadcast algorithms that do not explicitly keep any information on $k$ (number of tokens or broadcast waves) or $n$ would be more amenable to our agents. Amnesiac flooding is one such broadcast algorithm that works on a network of anonymous nodes without keeping any state or other information as the broadcast progresses. In~\cite{HussakTrehan19}, 
Hussak and Trehan show that amnesiac flooding will always terminate in a static network under synchronous message passing, but may fail on a dynamic network or with non-synchronous executions.

Many studies in self-actuated systems take inspiration from emergent behavior in social insects, but
either lack rigorous mathematical foundations explaining the generality and limitations as sizes scale (see, e.g., \cite{Garnier2005,Garnier2009,Correll2011,Xie2019}), often approaching the thermodynamic limits of computing~\cite{Wolpert2019} and power~\cite{Dario1992}, or rely on long-range signaling, such as microphones or line-of-sight sensors \cite{Soysal2005,Fates2010,Fates2011,Ozdemir2018}.  Some recent work on stochastic approaches modeled after systems from particle physics has been made rigorous, but only when a single, static goal is desired \cite{Cannon2016, Cannon2019, Li2021-bobbots, Arroyo2018, Savoie2018, kedia2022}.

%Several works have looked at this problem form a deterministic (e.g.,~\cite{...}), randomized (e.g.,~\cite{...}), adversarial (e.g.,~\cite{...}), and more recently smoothed analysis (e.g., \cite{...}) perspectives.
%Note that solutions to the broadcastingWhile our problem certainly relates

\section{The Adaptive Stimuli Algorithm}\label{alg-sec}
\label{sec:alg}
The {\em Adaptive Stimuli Algorithm} is designed to efficiently respond to dynamic local stimuli that indefinitely appear and disappear at the vertices of $G$. Recall the goal of this algorithm is to allow the collective to 
%efficiently 
converge to the \aware{} state whenever a stimulus is witnessed for long enough and to the \unaware{} state if no stimulus has been detected recently.
%, then the collective should converge to having all agents in the \unaware{} state. 
The algorithm converges in expected polynomial time in both scenarios under a reconfigurable dynamic setting, as we show in Sections~\ref{sec:static}-\ref{sec:reconfig}, even as the process repeats indefinitely.

\begin{figure}[t]
\begin{minipage}[b]{.48\linewidth}
\begin{center}
\begin{subfigure}[b]{0.3\linewidth}
  \begin{center}
  \begin{tikzpicture}[x=0.8cm,y=0.8cm]
  \node[align=left] at (-1.20902,0.3) {\normalsize 1.};
\draw[black, line width=0.4mm] (-0.606763,-0.440839) -- (-0.202254,-0.146946);
\draw[black, line width=0.4mm] (0.202254,-0.146946) -- (0.606763,-0.440839);
\draw[black, line width=0.4mm] (-0.809017,-0.587785) circle (0.25);
\node[align=left] at (-0.809017,-1.08779) {\scriptsize $\stateU$};
\draw[black, line width=0.4mm] (0,0) circle (0.25);
\node[align=left] at (0,-0.5) {\scriptsize $\stateU$};
\draw[black, line width=0.4mm] (0.809017,-0.587785) circle (0.25);
\node[align=left] at (0.809017,-1.08779) {\scriptsize $\stateU$};
\draw[black!20!red, line width=0.6mm, ] (-0.809017,-0.587785) circle (0.36);
  \end{tikzpicture}
  \end{center}
\end{subfigure}
\begin{subfigure}[b]{0.3\linewidth}
  \begin{center}
  \begin{tikzpicture}[x=0.8cm,y=0.8cm]
  \node[align=left] at (-1.20902,0.3) {\normalsize 2.};
\draw[black, line width=0.4mm] (-0.606763,-0.440839) -- (-0.202254,-0.146946);
\draw[black, line width=0.4mm] (0.202254,-0.146946) -- (0.606763,-0.440839);
\draw[black, line width=0.4mm] (-0.809017,-0.587785) circle (0.25);
\node[align=left] at (-0.809017,-1.08779) {\scriptsize $\stateAW$};
\node[align=left] at (-0.809017,-0.587785) {\scriptsize $\times$};
\draw[black, line width=0.4mm] (0,0) circle (0.25);
\node[align=left] at (0,-0.5) {\scriptsize $\stateU$};
\draw[black, line width=0.4mm] (0.809017,-0.587785) circle (0.25);
\node[align=left] at (0.809017,-1.08779) {\scriptsize $\stateU$};
\draw[black!20!red, line width=0.6mm, ] (-0.809017,-0.587785) circle (0.36);
  \end{tikzpicture}
  \end{center}
\end{subfigure}
\begin{subfigure}[b]{0.3\linewidth}
  \begin{center}
  \begin{tikzpicture}[x=0.8cm,y=0.8cm]
  \node[align=left] at (-1.20902,0.3) {\normalsize 3.};
\draw[black, line width=0.4mm] (-0.606763,-0.440839) -- (-0.202254,-0.146946);
\draw[black, line width=0.4mm] (0.202254,-0.146946) -- (0.606763,-0.440839);
\draw[black, line width=0.4mm] (-0.809017,-0.587785) circle (0.25);
\node[align=left] at (-0.809017,-1.08779) {\scriptsize $\stateAAW$};
\node[align=left] at (-0.809017,-0.587785) {\scriptsize $\times$};
\draw[black, line width=0.4mm] (0,0) circle (0.25);
\node[align=left] at (0,-0.5) {\scriptsize $\stateU$};
\draw[black, line width=0.4mm] (0.809017,-0.587785) circle (0.25);
\node[align=left] at (0.809017,-1.08779) {\scriptsize $\stateU$};
\draw[black!20!red, line width=0.6mm, ] (-0.809017,-0.587785) circle (0.36);
  \end{tikzpicture}
  \end{center}
\end{subfigure}
\begin{subfigure}[b]{0.3\linewidth}
  \begin{center}
  \begin{tikzpicture}[x=0.8cm,y=0.8cm]
  \node[align=left] at (-1.20902,0.3) {\normalsize 4.};
\draw[black, line width=0.4mm] (-0.606763,-0.440839) -- (-0.202254,-0.146946);
\draw[black, line width=0.4mm] (0.202254,-0.146946) -- (0.606763,-0.440839);
\draw[black, line width=0.4mm] (-0.809017,-0.587785) circle (0.25);
\node[align=left] at (-0.809017,-1.08779) {\scriptsize $\stateAW$};
\draw[black, line width=0.4mm] (0,0) circle (0.25);
\node[align=left] at (0,-0.5) {\scriptsize $\stateA$};
\node[align=left] at (0,0) {\scriptsize $\times$};
\draw[black, line width=0.4mm] (0.809017,-0.587785) circle (0.25);
\node[align=left] at (0.809017,-1.08779) {\scriptsize $\stateU$};
\draw[black!20!red, line width=0.6mm, ] (-0.809017,-0.587785) circle (0.36);
  \end{tikzpicture}
  \end{center}
\end{subfigure}
\begin{subfigure}[b]{0.3\linewidth}
  \begin{center}
  \begin{tikzpicture}[x=0.8cm,y=0.8cm]
  \node[align=left] at (-1.20902,0.3) {\normalsize 5.};
\draw[black, line width=0.4mm] (-0.606763,-0.440839) -- (-0.202254,-0.146946);
\draw[black, line width=0.4mm] (0.202254,-0.146946) -- (0.606763,-0.440839);
\draw[black, line width=0.4mm] (-0.809017,-0.587785) circle (0.25);
\node[align=left] at (-0.809017,-1.08779) {\scriptsize $\stateAAW$};
\node[align=left] at (-0.809017,-0.587785) {\scriptsize $\times$};
\draw[black, line width=0.4mm] (0,0) circle (0.25);
\node[align=left] at (0,-0.5) {\scriptsize $\stateA$};
\draw[black, line width=0.4mm] (0.809017,-0.587785) circle (0.25);
\node[align=left] at (0.809017,-1.08779) {\scriptsize $\stateU$};
\draw[black!20!red, line width=0.6mm, ] (-0.809017,-0.587785) circle (0.36);
  \end{tikzpicture}
  \end{center}
\end{subfigure}
\begin{subfigure}[b]{0.3\linewidth}
  \begin{center}
  \begin{tikzpicture}[x=0.8cm,y=0.8cm]
  \node[align=left] at (-1.20902,0.3) {\normalsize 6.};
\draw[black, line width=0.4mm] (-0.606763,-0.440839) -- (-0.202254,-0.146946);
\draw[black, line width=0.4mm] (0.202254,-0.146946) -- (0.606763,-0.440839);
\draw[black, line width=0.4mm] (-0.809017,-0.587785) circle (0.25);
\node[align=left] at (-0.809017,-1.08779) {\scriptsize $\stateAW$};
\node[align=left] at (-0.809017,-0.587785) {\scriptsize $\times$};
\draw[black, line width=0.4mm] (0,0) circle (0.25);
\node[align=left] at (0,-0.5) {\scriptsize $\stateAA$};
\draw[black, line width=0.4mm] (0.809017,-0.587785) circle (0.25);
\node[align=left] at (0.809017,-1.08779) {\scriptsize $\stateU$};
\draw[black!20!red, line width=0.6mm, ] (-0.809017,-0.587785) circle (0.36);
  \end{tikzpicture}
  \end{center}
\end{subfigure}
\begin{subfigure}[b]{0.3\linewidth}
  \begin{center}
  \begin{tikzpicture}[x=0.8cm,y=0.8cm]
  \node[align=left] at (-1.20902,0.3) {\normalsize 7.};
\draw[black, line width=0.4mm] (-0.606763,-0.440839) -- (-0.202254,-0.146946);
\draw[black, line width=0.4mm] (0.202254,-0.146946) -- (0.606763,-0.440839);
\draw[black, line width=0.4mm] (-0.809017,-0.587785) circle (0.25);
\node[align=left] at (-0.809017,-1.08779) {\scriptsize $\stateAW$};
\draw[black, line width=0.4mm] (0,0) circle (0.25);
\node[align=left] at (0,-0.5) {\scriptsize $\stateA$};
\draw[black, line width=0.4mm] (0.809017,-0.587785) circle (0.25);
\node[align=left] at (0.809017,-1.08779) {\scriptsize $\stateA$};
\node[align=left] at (0.809017,-0.587785) {\scriptsize $\times$};
\draw[black!20!red, line width=0.6mm, ] (-0.809017,-0.587785) circle (0.36);
  \end{tikzpicture}
  \end{center}
\end{subfigure}%
\end{center}
\caption{Illustration of how the presence of a witness (circled in red) gradually converts all agents to the \aware{} state through the distribution of \alerttokens. The agent with the ``$\times$'' is the agent activated in that step.}
\label{fig:alerttokens}
\end{minipage}
~~
\begin{minipage}[b]{.48\linewidth}
\begin{center}
\begin{subfigure}[b]{0.3\linewidth}
  \begin{center}
  \begin{tikzpicture}[x=0.8cm,y=0.8cm]
  \node[align=left] at (-1.10902,0.2) {\normalsize 1.};
\draw[black, line width=0.4mm] (-0.606763,-0.440839) -- (-0.202254,-0.146946);
\draw[black, line width=0.4mm] (0.202254,-0.146946) -- (0.606763,-0.440839);
\draw[black, line width=0.4mm] (0.606763,-0.734732) -- (0.202254,-1.02862);
\draw[black, line width=0.4mm] (-0.606763,-0.734732) -- (-0.202254,-1.02862);
\draw[black, line width=0.4mm] (-0.809017,-0.587785) circle (0.25);
\node[align=left] at (-0.809017,-1.08779) {\scriptsize $\stateAW$};
\draw[black, line width=0.4mm] (0,0) circle (0.25);
\node[align=left] at (0,-0.5) {\scriptsize $\stateA$};
\draw[black, line width=0.4mm] (0.809017,-0.587785) circle (0.25);
\node[align=left] at (0.809017,-1.08779) {\scriptsize $\stateA$};
\draw[black, line width=0.4mm] (0,-1.17557) circle (0.25);
\node[align=left] at (0,-1.67557) {\scriptsize $\stateA$};
  \end{tikzpicture}
  \end{center}
\end{subfigure}
\begin{subfigure}[b]{0.3\linewidth}
  \begin{center}
  \begin{tikzpicture}[x=0.8cm,y=0.8cm]
  \node[align=left] at (-1.10902,0.2) {\normalsize 2.};
\draw[black, line width=0.4mm] (-0.606763,-0.440839) -- (-0.202254,-0.146946);
\draw[black, line width=0.4mm] (0.202254,-0.146946) -- (0.606763,-0.440839);
\draw[black, line width=0.4mm] (0.606763,-0.734732) -- (0.202254,-1.02862);
\draw[black, line width=0.4mm] (-0.606763,-0.734732) -- (-0.202254,-1.02862);
\draw[black, line width=0.4mm] (-0.809017,-0.587785) circle (0.25);
\node[align=left] at (-0.809017,-1.08779) {\scriptsize $\stateU$};
\node[align=left] at (-0.809017,-0.587785) {\scriptsize $\times$};
\draw[black, line width=0.4mm] (0,0) circle (0.25);
\node[align=left] at (0,-0.5) {\scriptsize $\stateAC$};
\draw[black, line width=0.4mm] (0.809017,-0.587785) circle (0.25);
\node[align=left] at (0.809017,-1.08779) {\scriptsize $\stateA$};
\draw[black, line width=0.4mm] (0,-1.17557) circle (0.25);
\node[align=left] at (0,-1.67557) {\scriptsize $\stateAC$};
  \end{tikzpicture}
  \end{center}
\end{subfigure}
\begin{subfigure}[b]{0.3\linewidth}
  \begin{center}
  \begin{tikzpicture}[x=0.8cm,y=0.8cm]
  \node[align=left] at (-1.10902,0.2) {\normalsize 3.};
\draw[black, line width=0.4mm] (-0.606763,-0.440839) -- (-0.202254,-0.146946);
\draw[black, line width=0.4mm] (0.202254,-0.146946) -- (0.606763,-0.440839);
\draw[black, line width=0.4mm] (0.606763,-0.734732) -- (0.202254,-1.02862);
\draw[black, line width=0.4mm] (-0.606763,-0.734732) -- (-0.202254,-1.02862);
\draw[black, line width=0.4mm] (-0.809017,-0.587785) circle (0.25);
\node[align=left] at (-0.809017,-1.08779) {\scriptsize $\stateU$};
\draw[black, line width=0.4mm] (0,0) circle (0.25);
\node[align=left] at (0,-0.5) {\scriptsize $\stateU$};
\node[align=left] at (0,0) {\scriptsize $\times$};
\draw[black, line width=0.4mm] (0.809017,-0.587785) circle (0.25);
\node[align=left] at (0.809017,-1.08779) {\scriptsize $\stateAC$};
\draw[black, line width=0.4mm] (0,-1.17557) circle (0.25);
\node[align=left] at (0,-1.67557) {\scriptsize $\stateAC$};
  \end{tikzpicture}
  \end{center}
\end{subfigure}
\vskip.2in
\begin{subfigure}[b]{0.3\linewidth}
  \begin{center}
  \begin{tikzpicture}[x=0.8cm,y=0.8cm]
  \node[align=left] at (-1.10902,0.2) {\normalsize 4.};
\draw[black, line width=0.4mm] (-0.606763,-0.440839) -- (-0.202254,-0.146946);
\draw[black, line width=0.4mm] (0.202254,-0.146946) -- (0.606763,-0.440839);
\draw[black, line width=0.4mm] (0.606763,-0.734732) -- (0.202254,-1.02862);
\draw[black, line width=0.4mm] (-0.606763,-0.734732) -- (-0.202254,-1.02862);
\draw[black, line width=0.4mm] (-0.809017,-0.587785) circle (0.25);
\node[align=left] at (-0.809017,-1.08779) {\scriptsize $\stateU$};
\draw[black, line width=0.4mm] (0,0) circle (0.25);
\node[align=left] at (0,-0.5) {\scriptsize $\stateU$};
\draw[black, line width=0.4mm] (0.809017,-0.587785) circle (0.25);
\node[align=left] at (0.809017,-1.08779) {\scriptsize $\stateU$};
\node[align=left] at (0.809017,-0.587785) {\scriptsize $\times$};
\draw[black, line width=0.4mm] (0,-1.17557) circle (0.25);
\node[align=left] at (0,-1.67557) {\scriptsize $\stateAC$};
  \end{tikzpicture}
  \end{center}
\end{subfigure}
\begin{subfigure}[b]{0.3\linewidth}
  \begin{center}
  \begin{tikzpicture}[x=0.8cm,y=0.8cm]
  \node[align=left] at (-1.10902,0.2) {\normalsize 5.};
\draw[black, line width=0.4mm] (-0.606763,-0.440839) -- (-0.202254,-0.146946);
\draw[black, line width=0.4mm] (0.202254,-0.146946) -- (0.606763,-0.440839);
\draw[black, line width=0.4mm] (0.606763,-0.734732) -- (0.202254,-1.02862);
\draw[black, line width=0.4mm] (-0.606763,-0.734732) -- (-0.202254,-1.02862);
\draw[black, line width=0.4mm] (-0.809017,-0.587785) circle (0.25);
\node[align=left] at (-0.809017,-1.08779) {\scriptsize $\stateU$};
\draw[black, line width=0.4mm] (0,0) circle (0.25);
\node[align=left] at (0,-0.5) {\scriptsize $\stateU$};
\draw[black, line width=0.4mm] (0.809017,-0.587785) circle (0.25);
\node[align=left] at (0.809017,-1.08779) {\scriptsize $\stateU$};
\draw[black, line width=0.4mm] (0,-1.17557) circle (0.25);
\node[align=left] at (0,-1.67557) {\scriptsize $\stateU$};
\node[align=left] at (0,-1.17557) {\scriptsize $\times$};
  \end{tikzpicture}
  \end{center}
\end{subfigure}
\vskip.1in
\end{center}
\caption{Illustration of how \cleartokens{} are broadcast from an agent with the witness flag set that is no longer a witness (the leftmost agent). The agent with the ``$\times$'' is the  agent activated in that step.}
\label{fig:cleartokens}
\end{minipage}
\end{figure}

All agents know two parameters of the system:  $\Delta \geq 1$, an upper bound on the maximum degree of the graph, and $w\geq 1,$ an upper bound on the size of the witness sets $\witset_t$ at all times~$t$ (which is needed to determine the probability $p < 1/w$ for some agents to change states or generate certain tokens). 
Our algorithm defines a carefully balanced token passing mechanism, where a \emph{token} is a constant-size piece of information: 
Upon activation, an \aware{} witness $u$ continuously generates {\em \alerttokens} one at a time, with probability $p$, which will each move through a {\em random walk over \aware{} 
agents until they come in contact with a neighboring \unaware{} agent $u$}: The token is then consumed and $u$ changes its state to \aware{} (Figure \ref{fig:alerttokens}).
On the other hand, if a witness notices that its co-located stimulus has disappeared, it will initiate an {\cleartoken{} broadcast wave} which will proceed through agents in the \aware{} state, switching those to \unaware{} (Figure \ref{fig:cleartokens}).

The differences between these two carefully crafted token passing mechanisms 
%(more details appear in Algorithm~\ref{alg:stimulusalgorithm} and below) 
allow us to 
%prove that if a stimulus is active for long enough, the e
ensure that whenever there has been no stimulus in the network for long enough, the rate at which the agents deterministically learn this (through broadcasts of \cleartokens)
%that the stimulus has been removed 
and become \unaware{} always outpaces the probabilistic rate at which the cluster of \aware{} agents may still continue to grow.  Thus, the \unaware{} broadcast wave  will always outpace any residual \aware{} waves in the system and will allow the collective to correctly converge to the desired \unaware{} state. 
%While an agent remains \unaware, it will not disseminate any state-changing tokens, so the \cleartoken{}  broadcast wave will eventually "die out.". 
On the other hand, if the witness set is non-empty and remains stable for a long enough period of time, all agents will eventually switch to the \aware{} state since, after some time, there will be no \cleartokens{} in $G$, as \unaware{} agents do not ever generate or broadcast tokens. 
%%%%%%%%%%%%%%%%%%

\begin{algorithm}[t]
\caption{Adaptive Stimuli Algorithm.}% $p \in (0,1)$ and $\Delta \geq 1$ are parameters.}
\label{alg:stimulusalgorithm}
\begin{algorithmic}[1]
\Procedure{The Adaptive-Stimuli-Algorithm}{$u$}
\State  Let $p<1/w \in (0,1)$
\State $u.isWitness \gets $\textsc{True} if $u$ is a witness, else $u.isWitness \gets$\textsc{False}
\If{$u.isWitness$ and $u.state \not\in \{\stateAW, \stateAAW\}$}
    \State With probability $p$, $u.state \gets \stateAW$ \Comment{$u$ becomes \aware{} witness with prob.~$p$}
\ElsIf{$\neg u.isWitness$ and $u.state \in \{\stateAW, \stateAAW\}$} \Comment{stimulus no longer at $u$}
    \For{each $v \in N_\stateAbase(u)$} 
    %such that \textcolor{red}{such that $v.state\not= \stateU$} \textcolor{black}} 
    \Comment{$N_\stateAbase(u) =$ current \aware{} neighbors of $u$}
        \State $v.state \gets \stateAC$ \Comment{\cleartoken{} broadcast to \aware{} neighbors of $u$}
    \EndFor
    \State $u.state \gets \stateU$
\Else \Switch{$u.state$}
        \Case{$\stateU$}
            \If{$\exists v \in N_\stateAbase(u), v.state=\stateAbase_S \in \{\stateAA, \stateAAW\}$} \Comment{$v$ has \alerttoken}
            \State $v.state \gets \stateAbase_{S \setminus \{A\}}$ \Comment{$v$ consumes \alerttoken{}}
                \State $u.state \gets \stateA$ \Comment{$u$ becomes aware}
                %{$u$ consumes \alerttoken}
                %\State Assuming $v.state = \stateAbase_S$ for some set $S$, $v.state \gets \stateAbase_{S \setminus A}$
            \EndIf
        \EndCase
        \Case{$\stateAA$ or $\stateAAW$}
            \State $x \gets$ random value in $[0,1]$
            \If{$x \leq d_G(u)/\Delta$} \Comment{$d_{max}$-random walk} \label{alg:line-dmax}
                \State $v \gets$ random neighbor of $u$
                \If{$v.state=\stateAbase_{S'} \in \{\stateA, \stateAW\}$} \Comment{\aware{} state, no \alerttoken}
                \State Let $u.state = \stateAbase_S$; $u.state \gets \stateAbase_{S \setminus \{A\}}$ \Comment{$u$ sends \alerttoken{} to $v$}
                    \State $v.state \gets \stateAbase_{S' \cup \{A\}}$ \Comment{$v$ receives \alerttoken}
                    %\State Assuming $v.state = \stateAbase_{S'}$ for some set $S'$, $v.state \gets \stateAbase_{S' \cup \{A\}}$
                \EndIf
            \EndIf
        \EndCase
        \Case{$\stateAW$}
            \State With probability $p$, $u.state \gets \stateAAW$ \Comment{generate \alerttoken{} with prob. $p$}
        \EndCase
        \Case{$\stateAC$}
            \For{each $v \in N_\stateAbase(u)$} %\textcolor{red}{such that $v.state\not= \stateU$} \textcolor{black}}
                \State $v.state \gets \stateAC$ \Comment{$u$ broadcasts \cleartoken{} to all aware neighbors}
            \EndFor
            \State $u.state \gets \stateU$
        \EndCase
    \EndSwitch\EndIf
\EndProcedure
\end{algorithmic}
\end{algorithm}

\noindent
To define the Adaptive Stimuli Algorithm, we utilize the following flags and states:
 %The two main general states for an agent are \aware{} and \unaware. For agents in the \aware{} state, we define a few additional bits of information:
\begin{itemize}
\item \textbf{\Alerttoken{} flag} ($A$): used to indicate that an agent has an \alerttoken. 

\item \textbf{\Cleartoken{} flag} ($C$): used to indicate that the agent has an \cleartoken.

\item \textbf{Witness flag} ($W$): used to indicate that the agent is a witness.

\item {\bf States}: The \unaware{} state is denoted by $\stateU$, while \{$\stateA$, $\stateAA$, $\stateAW$, $\stateAAW$, $\stateAC$\} denote the \aware{} states.
In the \aware{} states, the subscript denotes the subset of flags that are currently set. Note that the \cleartoken{} flag is only set when the other two are not, giving us six distinct states in total.

\end{itemize}

Algorithm~\ref{alg:stimulusalgorithm} formalizes the actions executed by each agent $u$ when activated. Agent $u$'s actions
%When an agent $u$ is activated at time $t$, as shown in Algorithm~\ref{alg:stimulusalgorithm}, its actions 
depend on the current state it is in and whether it senses a stimulus.
We describe the behavior for each of the possible cases below: 

\begin{itemize}
\item \textbf{Non-matching witness flag}:
This is a special case that occurs if $u$ is currently a witness to some stimuli
%(i.e.,  in set $\witset_t$) 
but its witness flag has not been set yet, or if $u$ has its witness flag set but is no longer a witness. 
%in the current time step $t$ does not match whether $u$ is in a state with the witness flag set ($\stateAW$~or $\stateAAW$). 
This case takes priority over all the other possible cases, since $u$ cannot take any action before its witness status and its state
%witness flag status 
match.
%the other behaviors below, in that none of the other behaviors will be carried out in this situation.
If $u$ is a witness but does not have the witness flag set, then with probability $p$, switch $u$ to state $\stateAW$. On the other hand, if $u$ is not a witness but has the witness flag set, switch $u$ to \unaware{} (by setting $u.state=\stateU$) and broadcast  an \cleartoken{} to all of $u$'s \aware{} neighbors (this token overrides any other tokens the neighbors may have). %

\item \textbf{\unaware{} state ($\stateU$)}: 
%If $u$ is a witness, with probability $p$, switch $u$ to state $\stateAW$. If $u$ is not a witness and there 
If $u$ has an \aware{} neighbor $v$ with an \alerttoken{} (i.e., $v.state\in\{\stateAA,\stateAAW\}$), then $u$ consumes the \alerttoken{} from $v$ (by setting $v$'s \alerttoken{} flag to \textsc{False}) and switches to the state $\stateA$.

\item \textbf{\aware{} state with \alerttoken{} ($\stateAA$, $\stateAAW$)}:
%Pick a neighbor $v$ of $u$ at random. If $v$ is in the \aware{} state without the \alerttoken{} flag set ($\stateA$, $\stateAW$), swap the \alerttoken{} to $v$ by toggling the \alerttoken{} flags on both $u$ and $v$.
Pick a random neighbor of $u$ such that each neighbor is picked with probability $\frac{1}{\Delta}$, with a probability $1 - \frac{d_G(u)}{\Delta}$ of staying at $u$ (this executes a $d_{max}$-random walk~\cite{RandomWalksRecurringTopologies}).
If an \aware{} neighbor $v$ is picked and $v$ does not have an alert nor an \cleartoken{} (that is, $v.state\in \{\stateA,\stateAW\}$), move the \alerttoken{} to $v$ by toggling the \alerttoken{} flags on both $u$ and $v$.

\item \textbf{\aware{} state with witness flag but without an \alerttoken{} ($\stateAW$)}: With probability $p$, $u$ generates a new \alerttoken{} by switching to state $\stateAAW$.

\item \textbf{\aware{} state with \cleartoken{} ($\stateAC$)}: Switch $u$ to the \unaware{} state $\stateU$~and broadcast the \cleartoken{} to all of its \aware{} state neighbors.
%to state $\stateAC$, spreading the ``all clear'' signal.
\end{itemize}

\noindent The use of a $d_{max}$-random walk instead of regular random walk (Line~\ref{alg:line-dmax} of Algorithm~\ref{alg:stimulusalgorithm})
%Basically a $d_{max}$-random walk 
normalizes the probabilities of transitioning along an edge by the maximum degree of the nodes (or a constant upper bound on that), so that these transition probabilities cannot change during the evolution of the graph. %This allows us to use known results for random walks over dynamic networks, in particular, that 
A $d_{max}$-random walk has polynomial hitting time on any connected dynamic network (while a regular random walk might not) \cite{RandomWalksDynamicGraphs}.

\section{Static graph topologies}
%the static graph setting}
\label{sec:static}
For simplicity, we will first state and prove our results for {\em static connected graph topologies} (Theorems~\ref{theorem:mainresultnostimuli} and~\ref{theorem:mainresultwithstimuli}), where the edge set never changes and the dynamics are only due to the placement of stimuli. In Section~\ref{sec:reconfig}, we show that the same proofs apply with little modification to the reconfigurable case as well.

In order to define our main theorems, we must first define the \emph{state invariant}, that we know holds from an initial configuration where every agent is initialized in the \unaware{} state, as we show Lemma~\ref{lem:stateinvariant}.
In the remainder of this paper, a \emph{component} will refer to a connected component of the subgraph induced in $G$ by the set of \aware{} agents.
\begin{definition}[State Invariant]
\label{dfn:stateinvariant}
We say a component 
%of a configuration 
satisfies the {\em state invariant} if it contains at least one agent in the states $\stateAW$, $\stateAAW$~or $\stateAC$. A configuration satisfies the state invariant if every component (if any) of the configuration satisfies the state invariant.
\end{definition}

\begin{lemma}
\label{lem:stateinvariant}
If the current configuration satisfies the state invariant, then all subsequent configurations reachable by Algorithm~\ref{alg:stimulusalgorithm} also satisfy the state invariant.
\end{lemma}

\begin{proof}%[Proof of Lemma~\ref{lem:stateinvariant}]
Starting from configuration where the state invariant currently holds, let $u$ be the next agent to be activated.
%We show that the state invariant continues to hold after both the state change step and the agent movement step of the algorithm.
%
If $u$ is a witness but $u.state\not\in\{\stateAW,\stateAAW\}$, switching $u$ to state $\stateAW$ does not affect the state invariant.
Conversely, if $u$ is not a witness, but $u.state\in\{\stateAW,\stateAAW\}$
%If whether $u$ is a witness does not match its witness flag (i.e. whether $u.state \in\{\stateAW,\stateAAW\}$), switching an agent to $\stateAW$ does not affect the state invariant. However, in the case where an agent is no longer a witness and switches from $\stateAW$~or $\stateAAW$~to the \unaware{} state, this
switching $u$ to state $\stateU$
can potentially split the component it is in into multiple components. However, as all neighbors of $u$ will also be set to state $\stateAC$, each of these new components will contain an agent in state $\stateAC$.

Otherwise, if $u.state=\stateU$, it only switches to the \aware{} state if it neighbors another \aware{} agent. As the component $u$ joins must contain an agent in states $\stateAW$, $\stateAAW$~or $\stateAC$, the new configuration will continue to satisfy the state invariant.
If $u.state=\stateAC$, similar to the earlier case where $u$ is not a witness but $u.state\in\{\stateAW,\stateAAW\}$, activating $u$ may split the component it is in. As before, all neighbors of $u$ will be set to state $\stateAC$, so each of these newly created components satisfy the state invariant.
The remaining possible cases only toggle the \alerttoken{} flag, which does not affect the state invariant.
\end{proof}
\noindent 
This allows us to state our main results in the static graph setting.
We let $T \in \mathbb{N}$ represent the time when the stimuli stabilize long enough to converge (where $T$ is unknown to the agents) and show that we will have efficient convergence. Without loss of generality, for the sake of our proofs, we will assume that $\witset_t = \witset_T$, for all $t \geq T$, 
although this is really just representing a phase where the stimuli are stable.
\begin{theorem}
\label{theorem:mainresultnostimuli}
Starting from any configuration satisfying the state invariant over a static connected graph topology $G$,
if $|\witset_T| = 0$, then all agents will reach and remain in the \unaware{} state in 
$O(n^2)$ expected iterations, after time $T$.
\end{theorem}

\begin{theorem}
\label{theorem:mainresultwithstimuli}
Starting from any configuration satisfying the state invariant over a static connected graph topology $G$,
if $|\witset_T| > 0$, then all agents will reach and stay in the \aware{} state in 
$O(n^5)$ expected iterations, after time $T$.
\end{theorem}

%T: Elimination of Residuals
The proof of these theorems relies on carefully eliminating {residual} aware agents from previous broadcasts.
%, which are agents that remain \aware{} from previous broadcast waves.
%alerts. 
These agents have yet to receive an \cleartoken{} and thus will take some time before they can return to the \unaware{} state.

\begin{figure}[!h]{}
\begin{center}
  \begin{tikzpicture}[x=0.8cm,y=0.8cm]
  \draw[black, line width=0.4mm] (0.191511,0.160697) -- (0.804347,0.674927);
\draw[black, line width=0.4mm] (0.216506,-0.125) -- (1.24708,-0.72);
\draw[black, line width=0.4mm] (1.06289,0.594777) -- (1.39655,-0.604153);
\draw[black, line width=0.4mm] (1.23078,0.750119) -- (2.22686,0.387578);
\draw[black, line width=0.4mm] (1.6277,-0.656409) -- (2.29766,0.113481);
\draw[black, line width=0.4mm] (2.65329,0.462769) -- (3.4653,1.14412);
\draw[black, line width=0.4mm] (3.84832,1.14412) -- (4.66033,0.462769);
\draw[black, line width=0.4mm] (4.78551,0.0610323) -- (4.60251,-0.60396);
\draw[black, line width=0.4mm] (3.24608,-1.07248) -- (4.28998,-0.888412);
\draw[black, line width=0.4mm] (1.70978,-0.888412) -- (2.75368,-1.07248);
\draw[black, line width=0.4mm] (2.55048,0.0683371) -- (2.91118,-0.882156);
\draw[black, line width=0.4mm] (2.71178,0.302072) -- (4.60184,0.302072);
\draw[black, line width=0.4mm] (0,0) circle (0.25);
\node[align=left] at (0,-0.5) {\scriptsize $\stateA$};
\draw[black, line width=0.4mm] (0.995858,0.835624) circle (0.25);
\node[align=left] at (0.845858,0.335624) {\scriptsize $\stateA$};
\draw[black, line width=0.4mm] (1.46358,-0.845) circle (0.25);
\node[align=left] at (1.46358,-1.345) {\scriptsize $\stateAW$};
\draw[black, line width=0.4mm] (2.46178,0.302072) circle (0.25);
\node[align=left] at (2.36178,-0.197928) {\scriptsize $\stateU$};
\draw[black, line width=0.4mm] (2.99988,-1.11589) circle (0.25);
\node[align=left] at (2.99988,-1.61589) {\scriptsize $\stateU$};
\draw[black, line width=0.4mm] (3.65681,1.30482) circle (0.25);
\node[align=left] at (3.65681,0.804821) {\scriptsize $\stateAW$};
\draw[black, line width=0.4mm] (4.85184,0.302072) circle (0.25);
\node[align=left] at (5.05184,-0.197928) {\scriptsize $\stateA$};
\draw[black, line width=0.4mm] (4.53618,-0.845) circle (0.25);
\node[align=left] at (4.53618,-1.345) {\scriptsize $\stateAC$};
\draw[black!20!red, line width=0.6mm, ] (3.65681,1.30482) circle (0.36);
  \end{tikzpicture}
  \end{center}
  \vspace{-.2in}
  \caption{A configuration with two residual components.
  %(maximal connected sets of \aware{} agents). 
  The component on the left is a residual component despite having a witness in it because it contains an agent with the \allclear~flag; the component on the right is a residual component since it contains an agent in state $\stateAC$.}
\end{figure}
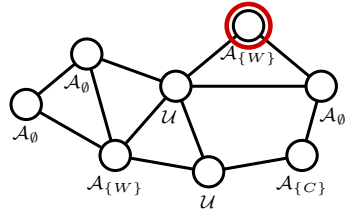

\begin{definition}[Residuals]
A \emph{residual component} is a component that satisfies at least one of the following two criteria:
\begin{enumerate}
\item It contains an agent in state $\stateAC$.%~(\cleartoken).
\item It contains an agent in state $\stateAW$~or $\stateAAW$~that is not a witness.
%\item It contains a witness that is not in state $\stateAW$~or $\stateAAW$.
%\item NOTE 1: If no residuals and no witnesses, all agents must be in the unaware state.
%\item NOTE 2: Every residual component has an agentwhich when activated decreases the potential by 1.
\end{enumerate}
We call agents belonging to residual components \emph{residuals}.
\end{definition}

When starting from arbitrary configurations, 
there is likely to be a large number of residual components that need to be cleared out.
Residuals are problematic as they do no stay in the aware state in the long term. This means they do not actually contribute to the main cluster of \aware{} state agents, and the presence of residuals causes even more residuals to form. Furthermore, later on when we allow the graph to reconfigure itself, residuals may obstruct \unaware{} state agents from coming into contact with non-residual components, which may prevent \alerttokens{} from reaching them.
Our main tool to show that all residuals eventually vanish is a {\em potential function} that decreases more quickly than it increases.

\begin{definition}[Potential]
\label{defn:potential}
For a configuration $\sigma$, we define its potential $\Phi(\sigma)$ as
$$\Phi(\sigma) := \Phi_{A}(\sigma) + \Phi_{AT}(\sigma)$$
where $\Phi_{A}(\sigma)$ and $\Phi_{AT}(\sigma)$ represent total the number of \aware{} agents and the number of \aware{} agents with an \alerttoken{} respectively.
\end{definition}
We use the following lemma 
%(Lemma~\ref{lemma:generallemmafinal}) 
to guarantee, in Lemma~\ref{lemma:reachpotentialzero}, that the expected number of steps before all residuals are removed is polynomial.
%The proof of Lemma~\ref{lemma:generallemmafinal} is given in Appendix~\ref{appendix:staticdetails}. 
%
%<INSERT> {Proof of Lemma~\ref{lemma:generallemmafinal}}
%
\begin{lemma}
\label{lemma:generallemmafinal}
Assume $n \geq 2$, and let $0 < \eta < 1$. Consider two random sequences of probabilities $(p_t)_{t \in \mathbbNzero}$ and $(q_t)_{t \in \mathbbNzero}$, with the properties that $\frac{1}{n} \leq p_t \leq 1$ and $0 \leq q_t \leq \frac{\eta}{n}$, and $p_t + q_t \leq 1$. Now consider a sequence $(X_t)_{t \in \mathbbNzero}$, where
%Note: $X_{t+1} \leq X_t-1$ rather than equal. Proofs have been adjusted, check.
\begin{align*}
X_{t+1} \begin{cases}
\leq X_t-1 &\text{with probability $p_t$}\\
= X_t+1 &\text{with probability $q_t$}\\
= X_t &\text{with probability $1-p_t-q_t$}.
\end{cases}.
\end{align*}
Then $\ex\left[\min\{t \geq 0 \mid X_t = 0\}\right] \leq \frac{nX_0}{1-\eta}$.
\end{lemma}

\begin{proof}
For each $k \in \mathbbNzero$, we define a random sequence $(Y^{(k)}_t)_{t \in \mathbbNzero}$ such that $Y^{(k)}_0 = k$ and
\begin{align*}
Y^{(k)}_{t+1} \begin{cases}
= Y^{(k)}_t-1 &\text{with probability $\frac{1}{n}$}\\
= Y^{(k)}_t+1 &\text{with probability $\frac{\eta}{n}$}\\
= Y^{(k)}_t &\text{otherwise}.
\end{cases}
\end{align*}
For each such $k$, let $S_k := \ex\left[\min\{t \geq 0 \mid Y^{(k)}_t = 0\}\right]$.
Let $T^{(k)}_{k+1} = 0$ and for $i \in \{k,k-1,\dots,1\}$, let $T^{(k)}_i = \min\{t \geq 0 \mid X^{(k)}_t \leq i\} - T^{(k)}_{i+1}$ denote the number of time steps after $T^{(k)}_{i+1}$ before the first time step $t$ where $Y^{(k)}_t \leq i$. We observe that each $T^{(k)}_i$ is identically distributed, with $\ex T^{(k)}_i = \ex T^{(k)}_1 = S_1$.
Also, we observe that $S_k = \ex [\sum_{i=1}^{k} T^{(k)}_i]$, so $S_k = k\cdot S_1$ for all $k$.
We can thus compute $S_1$ by conditioning on the first step:
\begin{align*}
&S_1 = \frac{1}{n}(1) + \frac{\eta}{n}(S_2+1) + \left(1-\frac{1+\eta}{n}\right)(S_1+1) = 1 + \frac{\eta}{n}\cdot 2S_1 + \left(1-\frac{1+\eta}{n}\right) S_1
%\implies &S_1 = \frac{n}{1 - \eta}.
\end{align*}
This implies $S_1 = \frac{n}{1-\eta}$ and thus
 $\ex\left[\min\{t \geq 0 \mid Y^{(k)}_t = 0\}\right] = S_k = \frac{kn}{1-\eta}$.

We can then couple $(X_t)_{t \in \mathbbNzero}$ and $(Y^{(X_0)}_t)_{t \in \mathbbNzero}$ in a way such that $Y^{(X_0)}_{t+1} = Y^{(X_0)}_t + 1$ whenever $X_{t+1} > X_t$, and $Y^{(X_0)}_{t+1} = Y^{(X_0)}_t-1$ whenever $X_{t+1} < X_t$. We thus have $Y^{(X_0)}_t \geq X_t$ always, and so $\ex\left[\min\{t \geq 0 \mid X_t = 0\}\right] \leq \ex\left[\min\{t \geq 0 \mid Y^{(X_0)}_t = 0\}\right] \leq \frac{kX_0}{(1-\eta)}$.
\end{proof}

We show in Lemma~\ref{lemma:reachpotentialzero} that after a polynomial number of steps in expectation, we will reach a configuration with no residual components. 
%Lemma~\ref{lemma:generallemmafinal} allows us next to bound the time to remove residuals.
%
%<INSERT> {Elimination of Residuals}
%

\begin{lemma}
\label{lemma:reachpotentialzero}
We start from a configuration satisfying the state invariant over a static connected graph $G$ with no more than $w$ witnesses at any point.
%We start from a configuration of the dynamic stimuli problem where the state invariant currently holds.
%Suppose that there are no more than $w$ witnesses. 
Then the expected number of steps before we reach a configuration with no residual components is at most ${2n^2}/{(1-wp)}$.
\end{lemma}

\begin{proof}
We apply Lemma~\ref{lemma:generallemmafinal} to the sequence of potentials $(\Phi(\sigma_t))_{t \in \mathbbNzero}$ where $\sigma_t$ is the configuration after iteration $t$.
%
%By the definition of a residual component, 
As long as there exists a residual component, there will be at least one agent will switch to the \unaware{} state on activation. This gives a probability of at least $1/n$ in any iteration of decreasing the current potential by at least $1$.

%We then look at what can cause the potential to increase.
%
There are only two ways for the potential to increase. The first is when a new \alerttoken{} is generated by a witness, and the second is when an \unaware{} witness switches to an \aware{} state.
The activation of a witness thus increases the current potential by exactly $1$, with probability $p$.
%Each of these increase the current potential by exactly $1$.
%Both of these can only happen through the activation of witnesses, and only happen with probability $p$.
As there are at most $w$ witnesses,
this happens
%the potential increases (by at least $1$)
with probability at most ${wp}/{n} < {1}/{n}$.
Note that the consumption of an \alerttoken{} to add a new \aware{} state agent to a residual component does not change the current potential. Neither does switching agents to the \cleartoken{} state affect the potential.

By Lemma~\ref{lemma:generallemmafinal}, as $\Phi(\sigma_0) \leq 2n$, within ${2n^2}/{(1-wp)}$ steps in expectation, we will either reach a configuration $\sigma$ with $\Phi(\sigma) = 0$, or a configuration with no residual components, whichever comes first. Note that if $\Phi(\sigma) = 0$, then $\sigma$ cannot have any residual components, completing the proof.
\end{proof}

We now show that as long as no agent is removed from the witness set, after all residual components are eliminated, no new ones will be generated:
%(Lemma~\ref{lemma:residualscantregenerate}).

\begin{lemma}
\label{lemma:residualscantregenerate}
We start from a configuration satisfying the state invariant over a static connected graph $G$, and
%We start from a configuration of the dynamic stimuli problem where the state invariant currently holds.
assume that no agent will be removed from the witness set from the current point on.
If there are currently no residuals, then a residual cannot be generated.
\end{lemma}

\begin{proof}
With no residual components in the current iteration, there will be no agents in state $\stateAC$, and all agents in state $\stateAW$~or $\stateAAW$~will be witnesses. This means that no agent on activation will switch another agent to the $\stateAC$~state. All agents in states $\stateAW$~or $\stateAAW$~will continue to be witnesses by assumption of the lemma, and agents will only switch to states $\stateAW$~or $\stateAAW$~if they are witnesses. Thus no component will be residual in the next iteration.
%With no residual components, all agents in the configuration will have the food bit set if and only if they are adjacent to food. There are also no $DT$ agents, so no agent on activation will switch to the $DT$ state. When an agent moves, it will set its food bit accordingly, so no residual components will form.
\end{proof}

When $|\witset_T| = 0$ and no witnesses exist in the long term, if the state invariant holds, then all agents will be in the \unaware{} state, giving us Theorem~\ref{theorem:mainresultnostimuli}.
%
% Note: The state invariant is still needed for this as residuals
On the other hand, in order to show Theorem~\ref{theorem:mainresultwithstimuli},if a witness persists in the long term ,
%(i.e., $|\witset_T| > 0$), 
 we need to show that all agents eventually switch to the \aware{} state. Lemma~\ref{lemma:allswitchtoawarestate} establishes this as without residuals, no \aware{} state agent can revert to the \unaware{} state.
%This proof can be found in Appendix~\ref{appendix:staticdetails} and,
%With Lemmas~\ref{lemma:reachpotentialzero} and~\ref{lemma:residualscantregenerate}, it allows us to prove Theorem~\ref{theorem:mainresultwithstimuli}.

\begin{lemma}
\label{lemma:allswitchtoawarestate}
We start from a configuration satisfying the state invariant over a static connected graph $G$, and
assume that there are no residual agents, 
the witness set is nonempty and no agent will be removed from the witness set from the current point on.
%In the dynamic stimuli problem, suppose that the witness set is nonempty and no agent will be removed from the witness set from the current point on. Furthermore, assume that the state invariant holds and that there are no residual agents in the configuration.
Then the expected number of iterations before the next agent switches from the \unaware{} to the \aware{} state is at most $O(n^4)$.
\end{lemma}

\begin{proof}
In the case where there is no \aware{} agent, 
there must be an \unaware{} witness, which on activation switches to the \aware{} state with probability $p$.
The expected time before this happens is no more than $O(\frac{n}{p})$. Thus for the rest of the proof, we may assume every witness is already in the \aware{} state with the witness flag set.

If we have a bound on the expected time before we reach a configuration where every \aware{} agent holds an \alerttoken{}, all it takes following that is for any \unaware{} agent to be activated while holding an \alerttoken{}. 
As the graph $G$ is connected with least one \unaware{} and one \aware{} agent,
there will be some \unaware{} agents neighboring an \aware{} agent, so the expected number of iterations before this occurs will is $O(n)$.

%We consider the amount of time it takes for the next agent to switch to the \aware{} state.
%For a non-witness to switch to the \aware{} state, it must be activated while adjacent to an agent holding an \alerttoken. We can upper bound the expected amount of time before this happens by the amount of time it takes for all \aware{} state agents to obtain an \alerttoken, followed by the amount of time it takes for an agent to be activated while adjacent to an agent with an \alerttoken{} and switch to the \aware{} state.

We now bound the amount of time it takes for all \aware{} state agents to hold \alerttokens{}.
As an agent can only hold one \alerttoken{} at a time, a new \alerttoken{} can only be generated when a witness does not hold an \alerttoken.
Assume that there is at least one \aware{} state agent that does not have an \alerttoken. Mark one such agent. 
%%If $u$ is the marked agent and a neighboring agent $v$ holding an \alerttoken{} is activated and transfers its \alerttoken{} to $u$, we swap the mark to $v$. 
%If an agent $v$ without an \alerttoken{} is activated, for the sake of our analysis, we still have it randomly pick an outgoing neighbor and receive the mark if the chosen neighbor is $u$.
Suppose that $u$ is the marked agent and that the next agent $v$ to be activated is a neighbor of $u$. If $v$ has an \alerttoken{} and $v$ randomly chooses $u$ as its outgoing neighbor (per the algorithm), then $v$ transfers its \alerttoken{} to $u$ and receives the mark from $u$. Otherwise, 
%$v$ does not have an \alerttoken{}, 
for the sake of our analysis, we still have $v$ pick an outgoing neighbor at random and receive the mark if the chosen neighbor is $u$.
%If a neighboring agent that does not have an \alerttoken{} is activated, for the sake of our analysis we can still have it randomly pick an outgoing the neighbor the same way it would if it had an \alerttoken{}, and receive the mark if the marked agent happens to be the picked neighbor (even though no \alerttoken{} is transferred).

%As long as no new agents are switching to the \aware{} state,
The mark moving in this manner is equivalent to following a $d_{max}$-random walk over the subgraph induced by the \aware{} agents, which is static as long as no new agents are switching to the \aware{} state.
As the configuration satisfies the state invariant and there are no residuals, the component of \aware{} agents the mark is in must contain at least one witness.
The worst case hitting time of the $d_{max}$-random walk over this subgraph is $O(n^2)$, %movements of the mark, 
and thus the expected number of iterations before the mark lands on a witness is $O(n^3)$, allowing a new \alerttoken{} to be generated
%, is $O(n^3)$ 
(this new \alerttoken{} is generated with a constant probability $p \in (0,1)$).
As there are at most $n$ \aware{} agents, all \aware{} agents will be holding an \alerttoken{} after $O(n^4)$ iterations in expectation, which translates to an expected time bound of $O(n^4)$ before a new \aware{} agent is added.
Note that this is a loose bound - the bound has not been optimized for clarity of explanation.
\end{proof}

%<INSERT> {Proof of Lemma~\ref{lemma:allswitchtoawarestate}

\section{Reconfigurable topologies}
%\section{Responding to stimuli dynamic graphs}
\label{sec:reconfig}

We show that the same results hold if we allow some degree of reconfigurability of the edge set and relax the requirement that the graph must be connected. This gives us enough flexibility to implement a wider range of behaviors, like free aggregation or compression and dispersion in Section~\ref{sec:foraging}.
%, for agents in different states. 
When needed, we may refer to this as the \emph{reconfigurable dynamic stimuli problem}, in order to clearly differentiate from the dynamic stimuli problem on static graphs that we considered in Section~\ref{sec:static}.

%\subsection{Allowing for reconfigurations and disconnections}
%We start from the dynamic stimuli problem defined in Section~\ref{section:model}. 
%This time however, i
Instead of a static graph $G$, as we considered in Section~\ref{sec:static}, we now allow the edge set of the graph to be locally modified over time.
These reconfigurations can be initiated by the agents themselves or controlled by an adversary, and they can be randomized or deterministic, but we require some restrictions on what reconfigurations are allowed, and what information an algorithm carrying out these reconfigurations may have access to.
This will result in a restricted class of dynamic graphs, but will be general enough to be applied to the problem of foraging that we describe in Section~\ref{sec:foraging}.
%but cannot be decided based on state knowledge beyond what decides their reconfiguration behaviors.

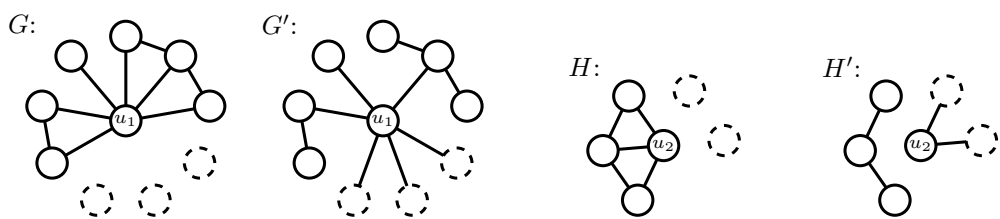
\begin{figure}
\begin{minipage}[b]{.48\linewidth}
\begin{center}
\begin{subfigure}[b]{0.5\linewidth}
  \begin{center}
  \begin{tikzpicture}[x=0.8cm,y=0.8cm]
  \node[align=left] at (-1.7,1.6) {\normalsize $G$:};
\draw[black, line width=0.4mm] (1.53081e-17,0.25) -- (7.04172e-17,1.15);
\draw[black, line width=0.4mm] (0.160697,0.191511) -- (0.739206,0.880951);
\draw[black, line width=0.4mm] (0.246202,0.043412) -- (1.13253,0.199695);
\draw[black, line width=0.4mm] (-0.216506,-0.125) -- (-0.995929,-0.575);
\draw[black, line width=0.4mm] (-0.246202,0.043412) -- (-1.13253,0.199695);
\draw[black, line width=0.4mm] (-0.160697,0.191511) -- (-0.739206,0.880951);
\draw[black, line width=0.4mm] (-1.25585,-0.453798) -- (-1.33532,-0.00309449);
\draw[black, line width=0.4mm] (0.234923,1.31449) -- (0.664979,1.15797);
\draw[black, line width=0.4mm] (1.0249,0.855956) -- (1.25373,0.459614);
\draw[black, line width=0.4mm] (0,0) circle (0.25);
\node[align=left] at (0,0) {\scriptsize $u_1$};
\draw[black, line width=0.4mm] (8.57253e-17,1.4) circle (0.25);
\draw[black, line width=0.4mm] (0.899903,1.07246) circle (0.25);
\draw[black, line width=0.4mm] (1.37873,0.243107) circle (0.25);
\draw[black, dashed, line width=0.4mm] (1.21244,-0.7) circle (0.25);
\draw[black, dashed, line width=0.4mm] (0.478828,-1.31557) circle (0.25);
\draw[black, dashed, line width=0.4mm] (-0.478828,-1.31557) circle (0.25);
\draw[black, line width=0.4mm] (-1.21244,-0.7) circle (0.25);
\draw[black, line width=0.4mm] (-1.37873,0.243107) circle (0.25);
\draw[black, line width=0.4mm] (-0.899903,1.07246) circle (0.25);
  \end{tikzpicture}
  \end{center}
\end{subfigure}%
\begin{subfigure}[b]{0.5\linewidth}
  \begin{center}
  \begin{tikzpicture}[x=0.8cm,y=0.8cm]
  \node[align=left] at (-1.7,1.6) {\normalsize $G^\prime$:};
\draw[black, line width=0.4mm] (0.160697,0.191511) -- (0.739206,0.880951);
\draw[black, line width=0.4mm] (-0.246202,0.043412) -- (-1.13253,0.199695);
\draw[black, line width=0.4mm] (-0.160697,0.191511) -- (-0.739206,0.880951);
\draw[black, line width=0.4mm] (-1.25585,-0.453798) -- (-1.33532,-0.00309449);
\draw[black, line width=0.4mm] (0.234923,1.31449) -- (0.664979,1.15797);
\draw[black, line width=0.4mm] (1.0249,0.855956) -- (1.25373,0.459614);
\draw[black, line width=0.4mm] (0.216506,-0.125) -- (0.995929,-0.575);
\draw[black, line width=0.4mm] (0.085505,-0.234923) -- (0.393323,-1.08065);
\draw[black, line width=0.4mm] (-0.085505,-0.234923) -- (-0.393323,-1.08065);
\draw[black, line width=0.4mm] (0,0) circle (0.25);
\node[align=left] at (0,0) {\scriptsize $u_1$};
\draw[black, line width=0.4mm] (8.57253e-17,1.4) circle (0.25);
\draw[black, line width=0.4mm] (0.899903,1.07246) circle (0.25);
\draw[black, line width=0.4mm] (1.37873,0.243107) circle (0.25);
\draw[black, dashed, line width=0.4mm] (1.21244,-0.7) circle (0.25);
\draw[black, dashed, line width=0.4mm] (0.478828,-1.31557) circle (0.25);
\draw[black, dashed, line width=0.4mm] (-0.478828,-1.31557) circle (0.25);
\draw[black, line width=0.4mm] (-1.21244,-0.7) circle (0.25);
\draw[black, line width=0.4mm] (-1.37873,0.243107) circle (0.25);
\draw[black, line width=0.4mm] (-0.899903,1.07246) circle (0.25);
  \end{tikzpicture}
  \end{center}
\end{subfigure}
\end{center}
%\caption{A locally reconnected reconfiguration of a vertex $u_1$ in a graph $G$}
\end{minipage}%
~~
\begin{minipage}[b]{.48\linewidth}
\begin{center}
\begin{subfigure}[b]{0.5\linewidth}
  \begin{center}
  \begin{tikzpicture}[x=0.8cm,y=0.8cm]
  \node[align=left] at (-1.3,1.3) {\normalsize $H$:};
\draw[black, line width=0.4mm] (-0.106445,-0.226207) -- (-0.319334,-0.67862);
\draw[black, line width=0.4mm] (-0.249123,-0.0209195) -- (-0.74737,-0.0627584);
\draw[black, line width=0.4mm] (-0.142678,0.205287) -- (-0.428035,0.615862);
\draw[black, line width=0.4mm] (-0.568458,-0.69954) -- (-0.853814,-0.288965);
\draw[black, line width=0.4mm] (-0.890048,0.142529) -- (-0.677158,0.594942);
\draw[black, line width=0.4mm] (0,0) circle (0.25);
\node[align=left] at (0,0) {\scriptsize $u_2$};
\draw[black, line width=0.4mm] (-0.425779,-0.904827) circle (0.25);
\draw[black, line width=0.4mm] (-0.996493,-0.0836778) circle (0.25);
\draw[black, line width=0.4mm] (-0.570714,0.821149) circle (0.25);
\draw[black, dashed, line width=0.4mm] (0.425779,0.904827) circle (0.25);
\draw[black, dashed, line width=0.4mm] (0.996493,0.0836778) circle (0.25);
  \end{tikzpicture}
  \end{center}
\end{subfigure}%
\begin{subfigure}[b]{0.5\linewidth}
  \begin{center}
  \begin{tikzpicture}[x=0.8cm,y=0.8cm]
  \node[align=left] at (-1.3,1.3) {\normalsize $H^\prime$:};
\draw[black, line width=0.4mm] (-0.568458,-0.69954) -- (-0.853814,-0.288965);
\draw[black, line width=0.4mm] (-0.890048,0.142529) -- (-0.677158,0.594942);
\draw[black, line width=0.4mm] (0.106445,0.226207) -- (0.319334,0.67862);
\draw[black, line width=0.4mm] (0.249123,0.0209195) -- (0.74737,0.0627584);
\draw[black, line width=0.4mm] (0,0) circle (0.25);
\node[align=left] at (0,0) {\scriptsize $u_2$};
\draw[black, line width=0.4mm] (-0.425779,-0.904827) circle (0.25);
\draw[black, line width=0.4mm] (-0.996493,-0.0836778) circle (0.25);
\draw[black, line width=0.4mm] (-0.570714,0.821149) circle (0.25);
\draw[black, dashed, line width=0.4mm] (0.425779,0.904827) circle (0.25);
\draw[black, dashed, line width=0.4mm] (0.996493,0.0836778) circle (0.25);
  \end{tikzpicture}
  \end{center}
\end{subfigure}
\end{center}
%\caption{test}
\end{minipage}
\caption{Two locally connected reconfigurations of the vertices $u_1$ (from $G$ to $G'$) and $u_2$ (from $H$ to $H'$) respectively, where only the 
%restriction of the graph to the 
\aware{} neighbors of the reconfigured vertex are shown. Vertices with dashed outlines are newly introduced neighbors.}% (not in the original neighborhood).}
%not in the original neighborhood of the reconfigured vertex (new neighbors).}
\label{fig:locallyconnected}
\end{figure}

The basic primitive for (local) reconfiguration of our graph by an agent $u$ is  replacing the edges incident to vertex $u$ with new edges. We call this {a reconfiguration of vertex $u$} and  define {local connectivity} to formalize  which reconfigurations are allowed.

\begin{definition}[Locally Connected Reconfigurations]
\label{defn:locallyconnected}
For any graph $G=(V,E)$, let $G'$ be a the graph resulting from a reconfiguration of vertex $u \in V$
%, and let $G'$ be the resulting graph. 
and let $N_\stateAbase(u)$ be the \aware{} neighbors of $u$ in $G$.
We say that this reconfiguration is {\it locally connected} if $u$ has at least one \aware{} neighbor in $G'$ 
and if for every pair of vertices $v_1, v_2$ in $N_\stateAbase(u)$ with a path from $v_1$ to $v_2$ in the induced subgraph $G[N_\stateAbase(u)\cup\{u\}]$, there is also a path from $v_1$ to $v_2$ in $G'[N_\stateAbase(u)\cup\{u\}]$. 
\end{definition}
Examples of locally connected reconfigurations are given in Figure~\ref{fig:locallyconnected}.

In the reconfigurable dynamic stimuli problem, we want to be able to define reconfiguration behaviors for agents in the \aware{} (without \cleartoken) and \unaware{} states.
To define what reconfigurations of an agent $u$ are valid, we group our set of agent states into three subsets which we refer to as \emph{\behaviorgroups}. The three \behaviorgroups{} are $\stateBU:=\{\stateU\}$, $\stateBA:=\{\stateA, \stateAA\}$ and $\stateBI:=\{\stateAW, \stateAAW, \stateAC\}$, which we call \unawareb, \mobileb{} and \immobileb{} respectively (the latter two referring to \aware{} state agents which are and are not allowed to change their neighboring edges respectively).
We say that a locally connected reconfiguration of an agent $u$ is \emph{valid} if it is not allowed to reconfigure in the \immobileb{} \behaviorgroup, and if $u$ is in the \mobileb{} \behaviorgroup, the reconfiguration of $u$ must be locally connected (Definition~\ref{defn:locallyconnected}). We show the following lemma:% in Appendix~\ref{appendix:reconfigurabledetails}:
%We use a function $\bstates:V \to \{\stateBA, \stateBU, \stateBI\}$ to denote the \behaviorgroup{} vector of a configuration.

\begin{lemma}
\label{lem:stateinvariantholdsdynamic}
Let $G=(V,E)$ be a graph and let $G' = (V,E')$ be the graph resulting from a valid locally connnected reconfiguration of a vertex $u \in V$.
If a configuration satisfies the state invariant on $G$, then the same state assignments satisfy the state invariant on $G'$.
\end{lemma}
\begin{proof}
%Denote by $\bstates:V \to \{\stateBA, \stateBU, \stateBI\}$ the current \behaviorgroup~vector of the configuration.
As locally connected reconfigurations of \unawareb{} agents do not affect the invariant and \immobileb{} agents cannot be reconfigured, it suffices to show that locally connected reconfigurations of \mobileb{} agents maintain the state invariant.
%, we make use of the requirement that these reconfigurations require local connectivity.

% u is reconfigured.
%Suppose that an \mobileb{} agent $u$ is to be reconfigured.
To show that the state invariant holds on $G'$, we show that any \mobileb{} agent has a path to an \immobileb{} agent in $G'$. Consider any such \mobileb{} agent $v \neq u$.
%As $(G, \bstates)$ satisfies the state invariant,
As the state invariant is satisfied on $G$, 
there exists a path in $G$ over \aware{} vertices from $v$ to an \immobileb{} agent $w$.
If this path does not contain the agent $u$, $v$ has a path to $w$ in $G'$.
On the other hand, if this path contains the agent $u$, consider the vertices $u_1$ and $u_2$ before and after $u$ respectively in this path. By local connectivity, there must still be a path from $u_1$ to $u_2$ in the induced subgraph $G'[N_\stateAbase(u) \cup \{u\}]$, and so a path exists from $v$ to $w$ over $G'$.
It remains to check that $u$ also has a path to an \immobileb{} agent in $G'$. Once again by local connectivity, $u$ must have an \aware{} neighbor in $G'$, which must have a path to an \immobileb{} agent over $G'$.
\end{proof}

In the reconfigurable version of the dynamic stimuli problem, we have a random sequence of graphs $(G_0,G_1,G_2,\ldots)$ where $G_t$ for $t \geq 1$ denotes the graph used in iteration $t$. These graphs share a common vertex set $V$, the set of agents, but the edges may change from iteration to iteration. We do not consider fully arbitrary sequence of graphs, but instead one that is generated by what we call a \emph{(valid) reconfiguration adversary} $\mathcal X$.
%
%In general, as our ultimate goal is to have the flexibility to execute different behaviors for agents from different \behaviorgroups{} (in particular \unawareb{} and \mobileb), we may see the reconfiguration behavior of the graph as one controlled by an adversary
Let the random sequence $(X_0,X_1,X_2, \ldots)$ denote the information available to the reconfiguration adversary on each iteration, and for each $t \geq 1$ we let the vector $\bstates_t : V \to \{\stateBU, \stateBA, \stateBI\}$ denote the \behaviorgroups{} of the agents at the end of iteration $t$ (i.e., after the activated agent performed any computation/change of states of its neighborhood at iteration $t$).
At iteration $t$, before an agent is activated in the stimuli algorithm, the new graph $G_t$ and the next value $X_t$ of the sequence are drawn as a pair from the distribution $\mathcal{X}(X_{t-1}, \bstates_{t-1})$,
%We call this scheduler a \emph{valid (locally connected) reconfiguration adversary} if for each $t \in \{1,2,3,\ldots\}$, 
which assigns non-zero probabilities only  to
%with the condition that the only graphs that can be drawn from $\mathcal{X}(X_{t-1}, \bstates_{t-1})$ (with non-zero probability) are those 
graphs that can be obtained through
some sequence of valid locally connected reconfigurations of the vertices of $G_{t-1}$.

We note that the reconfiguration adversary can be deterministic or randomized (it can even be in control of the agents themselves), and is specifically defined to act based on the behavior group vectors $\bstates_t : V \to \{\stateBU, \stateBA, \stateBI\}$ and  {\em not} on the state vector of the agents. We explicitly do not give the reconfiguration adversary access to full state information, as convergence time bounds require that the reconfiguration adversary of the graph be oblivious to the movements of \alerttokens.  
%Our way of defining the reconfiguration adversary %allows graph reconfigurations to be randomized or %adversarial, while not giving the adversary enough %power to adaptively reconfigure the graph to ``trap'' %\alerttokens{} moving around the system. 
As a special case, our results hold for any sequence of graphs $(G_0,G_1,G_2,\ldots)$ pre-determined by an oblivious adversary.
An example of a valid reconfiguration adversary that takes full advantage of the generality of our definition can be seen in the Adaptive $\alpha$-Compression Algorithm, an algorithm that we will later introduce to solve the problem of foraging.

In the static version of the problem, the graph is required to be connected to ensure that agents will always be able to communicate with each other. Without this requirement, we can imagine simple examples of graphs or graph sequences where no algorithm will work. In particular, if a set of agents that contains a witness never forms an edge to an agent outside of the set, there would be no way to transmit information about the existence of the witnesses to the nodes outside the set.
However, as the foraging problem will require disconnections to some extent, we relax this requirement that each graph $G_t$ is connected, and instead quantify how frequently \unaware{} state agents come into contact with \aware{} state agents.

%Remark: This is only for alerting, not useful for reverting
We say an \unaware{} agent is \emph{active} if it is adjacent to an \aware{} agent.  One way to quantify how frequently agents become active is to divide the iterations into ``batches'' of bounded expected duration, with at least some amount of active agents in each batch.
%where the total number of active agents in each batch in expectation is lower bounded by some value.
The random variables $(D_1, D_2, D_3 \ldots)$ denote the durations (in iterations) of these batches, and the random variables $(C_1, C_2, C_3, \ldots)$ denote the number of active agents in the respective batches. Definition~\ref{dfn:recurring} formalizes this notion.

\begin{definition}[Recurring Sequences]% and Reconfigurable Graphs]
\label{dfn:recurring}
Let $\mathcal{X}$ be a fixed valid reconfiguration adversary.
We say that this $\mathcal{X}$ is {\em $(U_D,U_C)$-recurring} (for $U_D \geq 1$ and $U_C \in (0,1)$) if for each possible starting iteration $t$ and fixed \behaviorgroup{} $\bstates$ with at least one \unawareb{} and one \immobileb{} agent, we have the following property:
There exists sequences of random variables $(D_1, D_2, D_3 \ldots)$ and $(C_1, C_2, C_3 \ldots)$ where for each $k \in \{1,2,3,\ldots\}$,
\begin{enumerate}
\item $C_k$ denotes the number of active agents under the \behaviorgroup{} $\bstates$ between iterations $t + \sum_{i=1}^{k-1}D_i$ and $t + \left(\sum_{i=1}^{k}D_i\right)-1$.
\item $\mathbb{E}\left[D_k \mid D_1, D_2, \ldots D_{k-1}, C_1,C_2 \ldots C_{k-1}\right] \leq U_D$.
\item $\mathbb{E}\left[\left(1-\frac{1}{n}\right)^{C_k} \mid D_1, D_2, \ldots D_{k-1}, C_1,C_2 \ldots C_{k-1}\right] \leq U_C$.
\end{enumerate}
\end{definition}

 We can then define a {\em (valid) reconfigurable graph (or sequence)} %$(G_0,G_1,\ldots)$
%, a we call a reconfigurable graph \emph{valid} 
as one that is generated by a valid reconfiguration adversary $\mathcal{X}$ and is $(U_D,U_C)$-recurring for some $U_D \geq 1, U_C \in (0,1)$. 
This allows us to state our main results for reconfigurable graphs as Theorems~\ref{theorem:mainresultnostimulidynamic} and~\ref{theorem:mainresultwithstimulidynamic}.
The theorems we have shown for the static version of the problem (Theorems~\ref{theorem:mainresultnostimuli} and~\ref{theorem:mainresultwithstimuli}) are special cases of these two theorems.

\begin{theorem}
\label{theorem:mainresultnostimulidynamic}
Starting from any configuration satisfying the state invariant over a reconfigurable graph, 
if $|\witset_T| = 0$, then all agents will reach and remain in the \unaware{} state in 
$O(n^2)$ expected iterations, after time $T$.
\end{theorem}
\begin{theorem}
\label{theorem:mainresultwithstimulidynamic}
Starting from any configuration satisfying the state invariant over a reconfigurable graph, if $|\witset_T| > 0$ and the reconfiguration adversary is $(U_D,U_C)$-recurring, 
%(Definition~\ref{dfn:recurring}), 
then all agents will reach and stay in the \aware{} state in 
$O\left(n^6 \log n + \frac{nU_D}{1-U_C}\right)$ expected iterations, after time $T$.
\end{theorem}

In particular, if every graph $G_t$ is connected, then the reconfiguration adversary is $\left(1,\left(1-\frac{1}{n}\right)\right)$-recurring by setting $D_k = C_k = 1$ (as constant random variables) for all $k$, and we have the following corollary:

\begin{corollary}
Starting from any configuration satisfying the state invariant over a reconfigurable graph, if $|\witset_T| > 0$ and every $G_t$ is connected, 
%the reconfiguration sequence is $(U_D,U_C)$-recurring (Definition~\ref{dfn:recurring}), 
then all agents will reach and stay in the \aware{} state in 
$O(n^6 \log n)$ expected iterations, after time $T$.
\label{cor:connected-graphs}
\end{corollary}
Obviously, if $G$ is a static connected graph, it falls as a special case of the corollary; a tighter analysis allowed us to prove the $O(n^5)$ expected convergence bound in Theorem~\ref{theorem:mainresultwithstimuli}. 
%This justifies the earlier statement that Theorems~\ref{theorem:mainresultnostimuli} and~\ref{theorem:mainresultwithstimuli} are a special case of Theorems~\ref{theorem:mainresultnostimulidynamic} and~\ref{theorem:mainresultwithstimulidynamic}.

The following two lemmas, which are analogous to Lemmas~\ref{lemma:reachpotentialzero} and~\ref{lemma:residualscantregenerate} but for reconfigurable graphs, are sufficient to show Theorem~\ref{theorem:mainresultnostimulidynamic} (the case for $|\witset_T| = 0$).
%stating that all agents will return to the \unaware{} state within polynomial expected time in the case where $|\witset_T| = 0$.
The proof of Lemma~\ref{lemma:reachpotentialzerodynamic} is identical to the proof of Lemma~\ref{lemma:reachpotentialzero}, so we only show Lemma \ref{lemma:residualscantregeneratedynamic}.

\begin{lemma}
\label{lemma:reachpotentialzerodynamic}
We start from a configuration satisfying the state invariant over a reconfigurable graph with no more than $w$ witnesses at any point.
Then the expected number of steps before we reach a configuration with no residual components is at most ${2n^2}/{(1-wp)}$.
\end{lemma}
%\begin{proof}
%The proof of this Lemma is identical to the proof of Lemma %\ref{lemma:reachpotentialzero}.
%\end{proof}

\begin{lemma}
\label{lemma:residualscantregeneratedynamic}
We start from a configuration satisfying the state invariant over a reconfigurable graph, and
assume that no agent will be removed from the witness set from the current point on.
If there are currently no residuals, then a residual cannot be generated.
\end{lemma}

\begin{proof}
From the proof of Lemma~\ref{lemma:residualscantregenerate}, we know that state changes of agents do not generate a new residual.
Valid reconfigurations of agents also cannot generate a new residual, as from Lemma~\ref{lem:stateinvariantholdsdynamic}, the state invariant always holds, so all components will always have an agent in state $\stateAC$, $\stateAW$ or $\stateAAW$. Reconfigurations cannot change the fact that there will be no agent of state $\stateAC$, and that all agents in states $\stateAW$ or $\stateAAW$ will be witnesses.
\end{proof}

The key result we will show
%in Appendix~\ref{appendix:reconfigurabledetails}
is Lemma~\ref{lemma:allswitchtoawarestatedynamic}, a loose polynomial-time upper bound for the amount of time before the next agent switches to the \aware{} state.
This gives a polynomial time bound for all agents switching to the \aware{} state when $|\witset_T| > 0$, implying Theorem~\ref{theorem:mainresultwithstimulidynamic}.
\begin{lemma}
\label{lemma:allswitchtoawarestatedynamic}
We start from a configuration satisfying the state invariant over a $(U_D,U_C)$-recurring reconfigurable graph, and assume that there are no residuals, the witness set is nonempty, and no agent will be removed from the witness set from the current point on.
Then the expected number of iterations before the next agent switches from the \unaware{} to the \aware{} state is at most $O\left(n^5 \log n + \frac{U_D}{1-U_C}\right)$.
\end{lemma}
\newcommand{\Tstart}{{T_{\text{start}}}}
\newcommand{\Tfull}{{T_{\text{full}}}}
\begin{proof}[Proof Sketch] 
This proof largely follows the proof of Lemma~\ref{lemma:allswitchtoawarestate}, with some modifications to allow for reconfigurability. In the interest of space however, we will provide only a summary of the key ideas and calculations that go into the proof (which is available in the full version of the paper).
%
%Similar to the proof of Lemma~\ref{lemma:allswitchtoawarestate},
Assuming that every witness is already in the \aware{} state with the witness flag set, we upper bound the expected time it takes for all \aware{} agents to obtain an \alerttoken{}, followed by the time it takes for an agent to be activated while adjacent to an \alerttoken{} and switch to the \aware{} state.

To bound the expected time for all \aware{} agents to obtain an \alerttoken{}, we apply the same strategy, marking an agent without an alert token and passing around the mark until it lands on a witness.
We make use of the result of~\cite{RandomWalksDynamicGraphs, RandomWalksRecurringTopologies}, which states that the expected hitting time of the $d_{max}$-random walk on a connected evolving graph controlled by an \emph{oblivious adversary} is $O(n^3 \log n)$~\cite{RandomWalksRecurringTopologies}.
%This polynomial time bound is notable as there are connected evolving graphs where the simple random walk admits exponential hitting times in the worst case~\cite{RandomWalksDynamicGraphs}.
This corresponds to $O(n^4 \log n)$ iterations in expectation to generate a new \alerttoken{}, which gives an upper bound of $O(n^5 \log n)$ iterations in expectation before all agents carry \alerttokens{}.
%This is clearly a loose bound, which has not been optimized for clarity of explanation.

Two complications arise however when applying this result - the requirement for the adversary controlling the dynamic graph to be oblivious and the requirement that the dynamic graph remains connected. The first issue is dealt with with an observation that with no change in the \behaviorgroup{} vector (as long as no new agent switches to the \aware{} state), the sequence of graphs generated by agent movement is independent of the movement of \alerttokens{}. The second issue is resolved with the observation that even though each graph $G_t[A]$ induced by the set of \aware{} agents is not connected, the state invariant and the lack of residuals ensure that each of its connect components contains a witness. The witnesses are linked with imaginary edges to connect the graph, which we can do as we only care about the amount of time before the mark lands on any witness.

A new \aware{} agent is added when an active \unaware{} agent is activated. The probability of adding a new \aware{} agent on a given iteration with $k$ active agents is thus $\frac{k}{n}$, as each of the $n$ agents are activated with equal probability.
Thus, if we denote by the random sequence $K_1, K_2, K_3, \ldots$ the number of active agents on each iteration following $\Tfull$ (including $\Tfull$), we get the following expression for the expected value of $X$, which we use to denote the number of iterations following $\Tfull$ before a new \aware{} agent is added:
\begin{align*}
&\mathbb{E}[X | K_1, K_2, K_3, \ldots] 
= \sum_{x = 0}^\infty Pr\left(X > x | K_1, K_2, K_3, \ldots \right) \\
&= 1 + \sum_{x = 1}^\infty \prod_{i=1}^x \left(1 - \frac{K_i}{n}\right) 
%&= 1 + \sum_{x = 1}^\infty \left(1 - \frac{K_1}{n}\right)\left(1 - \frac{K_2}{n}\right)\cdots\left(1 - \frac{K_{x}}{n}\right)
\leq 1 + \sum_{x = 1}^\infty y^{\sum_{i=1}^{x}K_i} \text{ where $y := \left(1 - \frac{1}{n}\right) \in (0,1)$}\\
&= 1 + \underbrace{y^{K_1} + y^{K_1+K_2} + \ldots + y^{\sum_{i=1}^{D_1} K_i}}_{D_1\text{ terms}} + \underbrace{y^{C_1+K_{D_1+1}} + \ldots + y^{C_1 + \sum_{i=D_1+1}^{D_1+D_2} K_i}}_{D_2\text{ terms}} + \ldots \\
%&\leq 1 + (D_1-1) + y^{C_1} + y^{C_1}(D_2-1) + y^{C_1+C_2} + y^{C_1+C_2}(D_3-1) + y^{C_1+C_2+C_3} + \ldots \\
&\leq D_1 + D_2 \cdot y^{C_1} + D_3 \cdot y^{C_1+C_2} + \ldots \text{ (as $\sum_{i=1}^{D_1+D_2+\ldots+D_x} K_i = C_{x+1}$ for all $x$).}
\end{align*}
Thus, via the law of total expectation, we have
\begin{align*}
\mathbb{E}[X]
\leq \sum_{i=1}^\infty \mathbb{E}[D_i y^{\sum_{j=1}^{i-1}C_j}] 
&\leq \sum_{i=1}^\infty \mathbb{E}\left[ y^{\sum_{j=1}^{i-1}C_j} \mathbb{E}[D_i | C_1,C_2,\ldots,C_{i-1}] \right] \\
&\leq \sum_{i=1}^\infty U_D \mathbb{E}\left[ y^{\sum_{j=1}^{i-2}C_j} \mathbb{E}[y^{C_{i-1}} | C_1,C_2,\ldots,C_{i-2}] \right] \\
&\leq \ldots \leq \sum_{i=1}^\infty U_D (U_C)^{i-1} = \frac{U_D}{1-U_C}.
\end{align*}
This gives us an expected time bound of $O\left(n^5 \log n + \frac{U_D}{1-U_C}\right)$ to add a new \aware{} agent.
\end{proof}

\section{Foraging via self-induced phase changes}\label{sec:foraging}
Recall that in {\em the foraging problem}, we have ``ants'' (agents) that may initially be searching for ``food'' (stimuli, which can be any resource in the environment, like an energy source); once a food source is found, the ants that have learned about the food source start informing other ants, allowing them to switch their behaviors from the \emph{search mode} to the \emph{gather mode},
%and these transition to a {\em compression state}
that leads them to start to gather around the food source to consume the food.
%i.e., to incrementally enter a {\em gather phase}; however,
Once the source is depleted, the ants closer to the depleted source start a broadcast wave, gradually informing other ants that they should restart the search phase again by individually switching their states. 
%All of these "phase change waves" have to be {\em self-induced}, i.e., they are initiated by particles in the system and broadcasted locally  by the particles without any external control.
The foraging problem is very general and has several fundamental application domains, including search-and-rescue operations in swarms of nano- or micro-robots; health applications (e.g., a collective of nano-sensors that could search for, identify, and gather around a foreign body to isolate or consume it, then resume searching, etc.); and  finding and consuming/deactivating hazards in a nuclear reactor or a minefield.

%Our dynamic foraging algorithm is based on the stochastic compression algorithm of~\cite{Cannon2016}, which we summarize here. 
Our model for foraging is based on the {\it geometric Amoebot model} for programmable matter \cite{Daymude2021-canonicalamoebot, Daymude2023-canonicalamoebot}. We have $n$ anonymous agents occupying distinct sites on a $\sqrt{N}\times\sqrt{N}$ piece of the triangular lattice with periodic boundary conditions.
These agents have constant-size memory, and have no global orientation or any other global information beyond a common chirality.
Agents are activated with individual Poisson clocks, upon which they may move to adjacent unoccupied sites or change states, operating under similar constraints to the dynamic stimuli problem. An agent may only communicate with agents occupying adjacent sites of the lattice.
Food sources may be placed on any site of the lattice, removed, or shifted around at arbitrary times, possibly adversarially, and an agent can only observe the presence of the food source while occupying the lattice site containing it.
%
%It will be convenient to refer a vertex in $\Lambda$ according to hypothetical global coordinates $(x,y)$, but note that this is just for ease of exposition, since the agents are not aware of any such global coordinate system. In this convention, a vertex $(x,y) \in \Lambda$ has edges to the vertices corresponding to $(x-1,y-1)$, $(x-1,y)$, $(x-1,y)$, $(x+1,y)$, $(x,y+1)$ ,$(x+1,y+1)$, with the arithmetic taken modulo $\sqrt{N}$.
%Agents are aware of their own and their neighbors' current states and when an agent is activated, it may do a bounded amount of computation, send at most one token (not necessarily identical) to each of its neighbors, and choose one of its six neighbors in the lattice to see if it is unoccupied and move there.
This model can be viewed as a high level abstraction of the (canonical) {\em Amoebot model}~\cite{Daymude2021-canonicalamoebot, Daymude2023-canonicalamoebot} under a {\em random sequential scheduler}, where at most one agent would be active at any point in time. One should be able to port the model and algorithms presented in this paper to the Amoebot model; however a formal description on how this should be done is beyond the scope of this paper.

At any point of time, there are two main states an agent can be in,
%We will define two main states an agent can be in at any point in time,
which, at the macro-level, are to induce the collective to enter the {\em search} or {\em gather} modes respectively.
When in {\it search mode}, agents move around in a process akin to a simple exclusion process, where they perform a random walk while avoiding two agents occupying the same site.
Agents enter the {\it gather mode} when food is found and this information is propagated in the system, consequently resulting in the system compressing around the food (Figure~\ref{fig:foragingillustration}).

\begin{figure}[!h]
\begin{center}
\begin{subfigure}[b]{0.5\linewidth}
  \begin{center}
  \begin{tikzpicture}[x=0.55cm,y=0.55cm]
  \draw[lightgray] (10.3923,-1) -- (10.3923,-4);
\draw[lightgray] (2.59808,-4.5) -- (9.52628,-0.5);
\draw[lightgray] (4.33013,-4.5) -- (10.3923,-1);
\draw[lightgray] (0,-2) -- (2.59808,-0.5);
\draw[lightgray] (0,-2) -- (4.33013,-4.5);
\draw[lightgray] (0.866025,-0.5) -- (7.79423,-4.5);
\draw[lightgray] (0.866025,-0.5) -- (0.866025,-4.5);
\draw[lightgray] (6.06218,-4.5) -- (10.3923,-2);
\draw[lightgray] (2.59808,-0.5) -- (9.52628,-4.5);
\draw[lightgray] (2.59808,-0.5) -- (2.59808,-4.5);
\draw[lightgray] (4.33013,-0.5) -- (10.3923,-4);
\draw[lightgray] (4.33013,-0.5) -- (4.33013,-4.5);
\draw[lightgray] (9.52628,-4.5) -- (10.3923,-4);
\draw[lightgray] (7.79423,-4.5) -- (10.3923,-3);
\draw[lightgray] (0,-1) -- (0.866025,-0.5);
\draw[lightgray] (0,-1) -- (6.06218,-4.5);
\draw[lightgray] (0,-1) -- (0,-4);
\draw[lightgray] (0,-4) -- (6.06218,-0.5);
\draw[lightgray] (0,-4) -- (0.866025,-4.5);
\draw[lightgray] (1.73205,-1) -- (1.73205,-4);
\draw[lightgray] (5.19615,-1) -- (5.19615,-4);
\draw[lightgray] (6.06218,-0.5) -- (10.3923,-3);
\draw[lightgray] (6.06218,-0.5) -- (6.06218,-4.5);
\draw[lightgray] (3.4641,-1) -- (3.4641,-4);
\draw[lightgray] (7.79423,-0.5) -- (10.3923,-2);
\draw[lightgray] (7.79423,-0.5) -- (7.79423,-4.5);
\draw[lightgray] (6.9282,-1) -- (6.9282,-4);
\draw[lightgray] (9.52628,-0.5) -- (10.3923,-1);
\draw[lightgray] (9.52628,-0.5) -- (9.52628,-4.5);
\draw[lightgray] (0,-3) -- (4.33013,-0.5);
\draw[lightgray] (0,-3) -- (2.59808,-4.5);
\draw[lightgray] (8.66025,-1) -- (8.66025,-4);
\draw[lightgray] (0.866025,-4.5) -- (7.79423,-0.5);
\draw[black, line width=0.4mm, fill=white] (0.866025,-1.5) circle (0.288);
\draw[black, line width=0.4mm, fill=white] (0.866025,-3.5) circle (0.288);
\draw[line width=0.4mm] (2.85263,-2.24544) -- (2.34352,-2.75456);
\draw[line width=0.4mm] (2.34352,-2.24544) -- (2.85263,-2.75456);
\node[align=left] at (2.59808,-1.9) {\footnotesize food};
\draw[black, line width=0.4mm, fill=white] (3.4641,-1) circle (0.288);
\draw[black, line width=0.4mm, fill=white] (5.19615,-1) circle (0.288);
\draw[black, line width=0.4mm, fill=white] (5.19615,-3) circle (0.288);
\draw[black, line width=0.4mm, fill=white] (7.79423,-2.5) circle (0.288);
\draw[black, line width=0.4mm, fill=white] (8.66025,-4) circle (0.288);
\draw[black, line width=0.4mm, fill=white] (9.52628,-1.5) circle (0.288);
  \end{tikzpicture}
  \end{center}
\end{subfigure}%
\begin{subfigure}[b]{0.5\linewidth}
  \begin{center}
  \begin{tikzpicture}[x=0.55cm,y=0.55cm]
  \draw[lightgray] (10.3923,-1) -- (10.3923,-4);
\draw[lightgray] (2.59808,-4.5) -- (9.52628,-0.5);
\draw[lightgray] (4.33013,-4.5) -- (10.3923,-1);
\draw[lightgray] (0,-2) -- (2.59808,-0.5);
\draw[lightgray] (0,-2) -- (4.33013,-4.5);
\draw[lightgray] (0.866025,-0.5) -- (7.79423,-4.5);
\draw[lightgray] (0.866025,-0.5) -- (0.866025,-4.5);
\draw[lightgray] (6.06218,-4.5) -- (10.3923,-2);
\draw[lightgray] (2.59808,-0.5) -- (9.52628,-4.5);
\draw[lightgray] (2.59808,-0.5) -- (2.59808,-4.5);
\draw[lightgray] (4.33013,-0.5) -- (10.3923,-4);
\draw[lightgray] (4.33013,-0.5) -- (4.33013,-4.5);
\draw[lightgray] (9.52628,-4.5) -- (10.3923,-4);
\draw[lightgray] (7.79423,-4.5) -- (10.3923,-3);
\draw[lightgray] (0,-1) -- (0.866025,-0.5);
\draw[lightgray] (0,-1) -- (6.06218,-4.5);
\draw[lightgray] (0,-1) -- (0,-4);
\draw[lightgray] (0,-4) -- (6.06218,-0.5);
\draw[lightgray] (0,-4) -- (0.866025,-4.5);
\draw[lightgray] (1.73205,-1) -- (1.73205,-4);
\draw[lightgray] (5.19615,-1) -- (5.19615,-4);
\draw[lightgray] (6.06218,-0.5) -- (10.3923,-3);
\draw[lightgray] (6.06218,-0.5) -- (6.06218,-4.5);
\draw[lightgray] (3.4641,-1) -- (3.4641,-4);
\draw[lightgray] (7.79423,-0.5) -- (10.3923,-2);
\draw[lightgray] (7.79423,-0.5) -- (7.79423,-4.5);
\draw[lightgray] (6.9282,-1) -- (6.9282,-4);
\draw[lightgray] (9.52628,-0.5) -- (10.3923,-1);
\draw[lightgray] (9.52628,-0.5) -- (9.52628,-4.5);
\draw[lightgray] (0,-3) -- (4.33013,-0.5);
\draw[lightgray] (0,-3) -- (2.59808,-4.5);
\draw[lightgray] (8.66025,-1) -- (8.66025,-4);
\draw[lightgray] (0.866025,-4.5) -- (7.79423,-0.5);
\draw[black, line width=0.4mm, fill=white] (1.73205,-2) circle (0.288);
\draw[black, line width=0.4mm, fill=white] (1.73205,-3) circle (0.288);
\draw[black, line width=0.4mm, fill=white] (2.59808,-2.5) circle (0.288);
\draw[line width=0.4mm] (2.85263,-2.24544) -- (2.34352,-2.75456);
\draw[line width=0.4mm] (2.34352,-2.24544) -- (2.85263,-2.75456);
\node[align=left] at (2.59808,-1.9) {\footnotesize food};
\draw[black, line width=0.4mm, fill=white] (2.59808,-3.5) circle (0.288);
\draw[black, line width=0.4mm, fill=white] (3.4641,-2) circle (0.288);
\draw[black, line width=0.4mm, fill=white] (3.4641,-3) circle (0.288);
\draw[black, line width=0.4mm, fill=white] (4.33013,-2.5) circle (0.288);
\draw[black, line width=0.4mm, fill=white] (4.33013,-3.5) circle (0.288);
  \end{tikzpicture}
  \end{center}
\end{subfigure}
\end{center}
\vspace{-.1in}
\caption{In the diagram on the left, a food source is placed on a lattice site. The diagram on the right illustrates a desired configuration, where all agents have gathered in a low perimeter configuration around the food source. If the food source is later removed, the agents should once again disperse, returning to a configuration like the figure on the left.}
\label{fig:foragingillustration}
\end{figure}
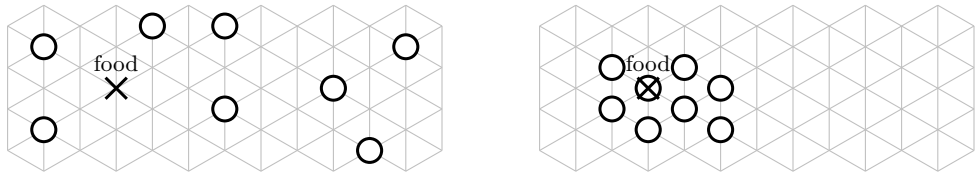

In the nonadaptive setting, Cannon {\it et al}$.$~\cite{Cannon2016} designed a rigorous   { compression/expansion algorithm} for agents that remain simply connected throughout execution, where a single parameter $\lambda$ determines a system-wide phase:  A small $\lambda$, namely $\lambda< 2.17$, provably corresponds to the search mode, which is desirable to search for food, while large $\lambda$, namely $\lambda>2+\sqrt{2}$, corresponds to the gather mode, desirable when food has been discovered. 
Likewise, Li {\it et al.} \cite{Li2021-bobbots} show a very similar bifurcation based on a bias parameter $\lambda$  in the setting when the agents are allowed to disconnect and disperse throughout the lattice.
Our goal here is to perform a system-wide adjustment in the bias parameters when one or more agents notice the presence or depletion of food to induce the appropriate global coordination to
provably transition the collective between macro-modes when required.  Informally, one can imagine individual agents adjusting their $\lambda$ parameter to be high when they are fed, encouraging compression around the food, and making $\lambda$ small when they are hungry, promoting the search for more food. 
A configuration is called {\it $\alpha$-compressed} if  the perimeter (measured by the length of the closed walk around its boundary edges) is at most $\alpha\, p_{\text{min}}(n)$, for some constant $\alpha > 1,$  where $p_{\text{min}}(n)$ denotes the minimum possible perimeter of a connected system with $n$ agents, which is the desired outcome of the gather mode.

\subparagraph*{Adaptive $\alpha$-compression.}
We present the first rigorous local distributed algorithm for the foraging problem:
The {\it Adaptive $\alpha$-Compression} algorithm  
is based on the stochastic compression algorithm of~\cite{Cannon2016}, addressing a 
%by stating it as a 
geometric application of the dynamic stimuli problem, where
%We introduce the algorithm for foraging in the context of the dynamic stimuli problem. 
the \aware{} state represents the ``gather'' mode, and the \unaware{} state represents the ``search'' mode.
A witness 
%in this case 
is an agent that occupies the same
%currently observes the food source (which is when the agent is on a 
lattice site as the food source.
The underlying dynamic graph used by the Adaptive Stimuli Algorithm is given by the adjacency graph of the agents - two agents share an edge on the graph if they occupy adjacent sites of the lattice.
%(this is exactly when these agents can communicate in the foraging problem).
As agents move around to implement behaviors like gathering and searching, their neighbor sets will change. The movement of agents thus reconfigures and oftentimes even disconnects our graph.
%These are factors we account for in the reconfigurable version of the problem.

%as two agents on our lattice can only communicate if they occupy adjacent sites of the lattice, thus the edges of the graph represent pairs of adjacent agents.

%Consider the simple case where our agents do not move and are in a connected configuration on the lattice. The static version of the dynamic stimuli problem directly applies here, giving us a polynomial time solution to switch all agents to the gather or search modes in the presence or absence of food (assuming that the food does not appear on lattice sites not occupied by agents). This is not a very practical example however, as agents will need to move to implement behaviors like gathering and searching. As the neighbors of agents change as they move around, the movement of agents reconfigures and even occasionally disconnects our graph. These are factors we account for in the reconfigurable version of the problem.
%so we would like to extend our dynamic stimuli problem to allow for flexible reconfiguration adversary by the agents on the graph.

\begin{algorithm}[t]
\caption{Adaptive $\alpha$-Compression}% $\lambda > \sqrt{2 + \sqrt{2}}$ is a parameter.
\label{alg:adaptivealphacompression}
\begin{algorithmic}[1]
%\State {\bf Input:} A $\sqrt{N}\times\sqrt{N}$ triangular lattice $\Lambda_N$ and parameters $p \in (0,1)$, $\lambda > 1$.
\Procedure{Adaptive-Alpha-Compression}{$u$}
\State $q \gets$ Random number in $[0,1]$
%\State $u.isWitness \gets \Call{Observes-Food-Source}{u}$
\State $u.isWitness \gets $\textsc{True} if $u$ observes the food source, else $u.isWitness \gets$\textsc{False}
\If{$q \leq \frac{1}{2}$} \Comment{With probability $\frac{1}{2}$, make a state update}
    \State \Call{Adaptive-Stimuli-Algorithm}{u}
\Else \Comment{With probability $\frac{1}{2}$, make a move}
    \If{$u.isWitness$ or $u.state \in \{\stateAW, \stateAAW, \stateAC\}$} \Comment{\immobileb{} agent}
        \State Do nothing
    \ElsIf{$u.state \in \{\stateA, \stateAA\}$}\Comment{\mobileb{} agent}
        \State \Call{Execute-Gather}{u}
    \ElsIf{$u.state = \stateU$} \Comment{\unawareb{} agent}
        \State \Call{Execute-Search}{u}
    \EndIf
\EndIf
\EndProcedure
\end{algorithmic}

\begin{algorithmic}[1]
\Procedure{Execute-Gather}{$u$}
\State $d \gets$ Random direction in $\{0,1,2,3,4,5\}$
\State $\ell \gets$ Current position of $u$
\State $\ell' \gets$ Neighboring lattice site of $u$ in direction $d$
\If{Moving $u$ from $\ell$ to $\ell'$ is a valid compression move (Definition~\ref{defn:validcompressionmoves})}
    \State $p \gets$ Random number in $[0,1]$
    \State $d(u) \gets$ number of neighboring agents of $u$ if $u$ were at position $\ell$
    \State $d'(u) \gets$ number of neighboring agents of $u$ if $u$ were at position $\ell'$
    \If{$p \leq \lambda^{d'(u) - d(u)}$} %\Comment{$d(u)$ and $d'(u)$ are the number of neighbors of $u$ before and after the move}
        \State Move $u$ to position $\ell'$ \Comment{Movements reconfigure the adjacency graph}
    \EndIf
\EndIf
\EndProcedure
\end{algorithmic}

\begin{algorithmic}[1]
\Procedure{Execute-Search}{$u$}
\State $d \gets$ Random direction in $\{0,1,2,3,4,5\}$
\State $\ell' \gets$ Neighboring lattice site of $u$ in direction $d$
\If{Moving $\ell'$ is an unoccupied lattice site}
    \State Move $u$ to position $\ell'$ \Comment{Movements reconfigure the adjacency graph}
\EndIf
\EndProcedure
\end{algorithmic}
\end{algorithm}

In this algorithm, agents in the \unawareb{} (search) \behaviorgroup{} execute movements akin to a simple exclusion process (\textsc{Execute-Search}) while agents in the \mobileb{} (gather) \behaviorgroup{} execute moves of the compression algorithm (\textsc{Execute-Gather}) in~\cite{Cannon2016}. We focus on the compression algorithm run by the \aware{} agents.
In a simple exclusion process, a selected agent 
picks a direction at random, and moves in that direction if and only if the immediate neighboring site in that direction is unoccupied.
In the compression algorithm~\cite{Cannon2016} on the other hand, a selected \aware{} agent first picks a direction at random to move in, and if this move is a valid compression move (according to Definition~\ref{defn:validcompressionmoves}), the agent  moves to the chosen position with the probability given in Definition~\ref{defn:compressionmoveprobability}.
With a
(far-from-trivial)
modification of the analysis in~\cite{Cannon2016} to account for the stationary witness agent, we show that in the case of a single food source, the ``gather'' movements
allow the agents to form a low perimeter cluster around the food.
We present the following definitions, adapted from~\cite{Cannon2016} to the set of \aware{} agents running the compression algorithm:
\begin{definition}[Valid Compression Moves~\cite{Cannon2016}]
\label{defn:validcompressionmoves}
Denote by $\NA(\ell)$ and $\NA(\ell')$ the sets of \aware{} neighbors of $\ell$ and $\ell'$ respectively and $\NA(\ell \cup \ell') := \NA(\ell) \cup \NA(\ell') \setminus \{\ell,\ell'\}$.
Consider the following two properties:

\smallskip
\noindent
\underbar{Property 1:} $|\NA(\ell)\cap \NA(\ell')| \geq 1$ and every agent in $\NA(\ell \cup \ell')$ is connected to an agent in $\NA(\ell)\cap \NA(\ell')$ through $\NA(\ell \cup \ell')$.

\smallskip
\noindent
\underbar{Property 2:} $|\NA(\ell)\cap \NA(\ell')| = 0$, $\ell$ and $\ell'$ each have at least one neighbor, all agents in $\NA(\ell) \setminus \{\ell'\}$ are connected by paths within the set, and all agents in $\NA(\ell') \setminus \{\ell\}$ are connected by paths within the set.

\smallskip
%\noindent 
We say the move from $\ell$ to $\ell'$ is a \emph{valid compression move} if it satisfies both properties, and $\NA(\ell)$ contains fewer than five aware state agents.
\end{definition}

\begin{definition}[Transition probabilities~\cite{Cannon2016}]
\label{defn:compressionmoveprobability}
Fix $\lambda > 2 + \sqrt{2}$, as sufficient for $\alpha$-compression.
An agent $u$ transitions through a valid movement
%Even when a movement is valid, we only make the move 
with Metropolis-Hastings~\cite{metropolis} acceptance probability $\min\{1,\lambda^{e(\sigma')-e(\sigma)}\}$,
where $\sigma$ and $\sigma'$ are the configurations before and after the movement, and $e(\cdot)$ represents the number of edges between \aware{} state agents in the configuration.
\end{definition}
Note that even though $e(\cdot)$ is a global property, 
%when an agent $u$ is being moved, 
the difference $e(\sigma')-e(\sigma)$ can be computed locally (within two hops in the lattice, or through expansions in the Amoebot model~\cite{Daymude2021-canonicalamoebot,Daymude2023-canonicalamoebot}), as it is just the change in the number of \aware{} neighbors of $u$ before and after its movement.

The condition for valid compression moves is notable as it keeps 
%a  configuration simply connected (connected and hole-free), 
a component of aware agents containing the witness agent % food
%configuration 
simply connected (connected and hole-free), which is crucial to the proof in~\cite{Cannon2016} that a low perimeter configuration, in our case around the food source, will be obtained in the long term. We also show that these valid compression moves are locally connected (as per Definition~\ref{defn:locallyconnected}), a sufficient condition for the reconfigurable dynamic stimuli problem to apply.

%%%%%%%%%%%%%%%%j  ***** moved ergodicity here *****

%However, even though the Metropolis-Hastings algorithm gives us the same stationary distribution and hence the same low-perimeter clusters, 
We first prove that the Markov chain representing the compression moves (\textsc{Execute-Gather}) is connected.
%, which is challenging when there is a fixed food source that is part of the cluster.
The proof builds upon the ergodicity argument in \cite{Cannon2016};
however, the addition of a single stationary agent, the witness, in our context makes this proof significantly more complex, and we defer it to the full version of the paper.
%We show the following lemma for irreducibility:

\begin{lemma}
\label{lemma:irreducible}
%We consider configurations of agents executing the compression algorithm with a single immobile agent.
%Suppose that there is enough space on the lattice for all agents to be laid out on a single line on any position and in any direction, without the line intersecting itself.
%If the agents  can only make the moves given in Definition~\ref{defn:validcompressionmoves} while the food source remains fixed in place,
Consider connected configurations of agents on a triangular lattice with a single agent $v$ that cannot move.
%As long as there is sufficient space on the lattice for all agents to be laid out in a single line in any direction without the line intersecting itself, t
There exists a sequence of valid compression moves 
%(Definition~\ref{defn:validcompressionmoves}) 
that transforms any 
%simply 
connected configuration of agents into any simply connected
%] hole-free
%\footnote{We call a configuration hole-free if there exists no closed path over the agents that encircles an empty site of the lattice.} 
configuration of the agents while keeping $v$ stationary. 
\end{lemma}
%
%Unfortunately, the proof that the state space is connected with a single immobile agent is significantly more complex than without one.

We may now state our main results, which verify the correctness of Adaptive $\alpha$-Compression. 
%\begin{theorem}
%\label{thm:compressionhasfood}
%If a food source exists and remains in place long enough, then all agents will reach and remain in the \aware{} state and will tightly aggregate around the food, forming a component with perimeter at most $\alpha$ times the minimum possible perimeter, for any $\alpha > 1$, with high probability.
%\end{theorem}
\begin{theorem}
\label{thm:compressionnofood}
If no food source has been identified for sufficiently long, 
then within an expected $O(n^2)$
%polynomial number of
steps, all agents will reach and remain in the \unaware{} state and will converge to the uniform distribution of nonoverlapping 
%agents 
 lattice positions.
%region.
%Furthermore, as agents in the \unaware{} state follow a simple exclusion process, the distribution of agents over the lattice subsequently converges quickly to the uniform distribution.
\end{theorem}
\begin{theorem}
\label{thm:compressionhasfoodprecise}
If at least one food source exists and remains in place for long enough, then within $O(n^6 \log n + N^2 n)$
%a polynomial number of 
steps in expectation, all agents will reach and remain in the \aware{} state, and each component of \aware{} agents will contain a food source.
In addition, if there is only one food source, the agents will converge to a configuration 
with a single $\alpha$-compressed component around the food, for any constant $\alpha > 1$, with all but an exponentially small probability, for a large enough lattice region.
%high probability.
%with perimeter at most $\alpha$ times the minimum possible perimeter, for any $\alpha > 1$, with high probability.
\end{theorem}

%%%%% Proof of correctness
The Adaptive $\alpha$-Compression Algorithm fits the requirements of the reconfigurable dynamic stimuli model. In particular, the information $X_t$ available to the reconfiguration adversary corresponds to the configuration of the lattice, and the graph $G_t$ represents the adjacency of agents on the lattice at that time $t$.
To show that a sequence of valid \aware{} agent movements in the Adaptive $\alpha$-Compression Algorithm, which determine the configurations of $G_1,G_2,\ldots$, can be modeled via a valid reconfiguration adversary $\mathcal X$, we need to show that the reconfigurations resulting from the \textsc{Execute-Gather} procedure must be locally connected.
%(recall Definition~\ref{defn:locallyconnected}).
%(Lemma~\ref{lemma:compressionlocallyconnected})
%, proof in Appendix~\ref{appendix:compressiondetails}).

\begin{lemma}
\label{lemma:compressionlocallyconnected}
The movement behavior of Adaptive $\alpha$-Compression is locally connected. %(Definition~\ref{defn:locallyconnected}).
\end{lemma}

\begin{proof}
We only need to show that the \textsc{Execute-Gather} procedure maintains local connectivity (Definition~\ref{defn:locallyconnected}).
This is true as when reconfiguring an \aware{} agent $u$ with \aware{} neighbor set $N_\stateAbase(u)$, we only allow valid compression moves (Definition~\ref{defn:validcompressionmoves}) to be made.
In the case of Property 1, all agents in $N_\stateAbase(u)$ will still have paths to $u$ in $G'$ through $S$.
In the case of Property 2, all agents in $N_\stateAbase(u)$ will still have paths to each other within $G'[N_\stateAbase(u)]$, despite no longer having local paths to $u$. The agent $u$ will have at least one \aware{} state neighbor after the move as this is a requirement of Property 2.
\end{proof}

As this is an instance of the reconfigurable dynamic stimuli problem, Theorem~\ref{thm:compressionnofood} follows immediately from Theorem~\ref{theorem:mainresultnostimulidynamic}. To show Theorem~\ref{thm:compressionhasfoodprecise} however, we need to show polynomial recurring rates %(Lemma~\ref{lemma:adaptivecompressionrecurring}) 
by arguing that the \unaware{} state agents following a simple exclusion process will regularly come into contact with the clusters of \aware{} state agents around the food sources.
%The proof of Lemma~\ref{lemma:adaptivecompressionrecurring}
%and the first part of Theorem~\ref{thm:compressionhasfoodprecise} (the case with multiple food sources)
%will be given in Appendix~\ref{appendix:compressiondetails}.

\begin{lemma}
\label{lemma:adaptivecompressionrecurring}
The movement behavior defined in the Adaptive $\alpha$-Compression Algorithm is $(U_D,U_C)$-recurring with $U_D = 2N^2 + \frac{2}{n} + 1$ and $U_C = \frac{2}{3}$.
\end{lemma}
\begin{proof} [Proof Sketch]
In the interest of space, we give only a brief summary of the proof (which is available in the full version of the paper). We define the random sequences $(D_1, D_2, D_3, \ldots)$ and $(C_1, C_2, C_3, \ldots)$ by dividing the time steps after the starting iteration $t$ 
into batches, where the $k$\textsuperscript{th} batch would take $D_k$ iterations and would see $C_k$ active agents over its duration.

A batch ends (and the next batch starts) when the agent movement places an \unaware{} agent $u$ next to an \aware{} agent $v$, then attempts a movement of $u$ or $v$ after that. The duration $D_k$ of the $k$\textsuperscript{th} batch can be computed with the hitting time of a simple exclusion process over the triangular lattice, plus a geometric random variable representing the number of iterations taken to select $u$ or $v$ after that. This gives us a uniform upper bound of $2N^2n + \frac{2}{n} + 1$ for $\mathbb{E}[D_k | D_1, D_2, \ldots D_{k-1}, C_1,C_2 \ldots C_{k-1}]$ for each batch $k \in \{1,2,\ldots\}$.

The number of active agents $C_k$ within batch $k$ is at least the number of iterations between the first time within the batch that 
%the agent movement places 
an \unaware{} agent moves next to an \aware{} agent and the end of the batch.
This is shown to stochastically dominate a geometric random variable $Y$ with success probability $p_Y$, which we show in the full paper to be:
\begin{align*}
p_Y &= \sum_{i=0}^\infty \frac{1}{2} \cdot \left(\frac{1}{2}\right)^i \left(1 - \left(1-\frac{2}{n}\right)^i \right)
= 1 - \frac{1}{2}\sum_{i=0}^\infty \left( \frac{1 - 2/n}{2} \right)^i
= \frac{2}{n}\left( \frac{1}{1 + 2/n} \right)
\end{align*}

As $C_k$ stochastically dominates $Y$ and $(1-\frac{1}{n})^x$ is a decreasing function of $x$, we have:
\begin{align*}
\mathbb{E}\left[\left(1-\frac{1}{n}\right)^{C_k} | D_1, D_2, \ldots D_{k-1}, C_1,C_2 \ldots C_{k-1}\right]
&\leq \mathbb{E}\left[\left(1-\frac{p}{n}\right)^Y\right] \\
&= \sum_{y=0}^\infty \left(1 - \frac{1}{n}\right)^y p_Y \left(1 - p_Y\right)^y \\
&= p_Y \frac{1}{1 - (1-1/n)(1-p_Y)} 
= \frac{2}{3}. \qedhere
\end{align*}
%\vspace{-.4in}
\end{proof}
To show the first half of Theorem~\ref{thm:compressionhasfoodprecise}, we start from the first iteration beyond which no additional changes in the positions (or existence) of the food sources occur. 
We first show that if there is at least one food source, it will be found.
%Assuming there is at least one food source, the agents search for a food source.
%
%If there are no agents on food sources at this point, then there are currently no witnesses, and 
As long as no food source has been found, there will be no witnesses, so every agent will return to the \unaware{} state by Theorem~\ref{theorem:mainresultnostimulidynamic}. Agents in the \unaware{} state move randomly following a simple exclusion process. Using the hitting time of a simple random walk on a regular graph (the triangular lattice) of $N$ sites, we have a simple upper bound of $O(N^2n)$ iterations in expectation before some agent finds a food source and becomes a witness.

From then on, there will be at least one witness, and the witness set can only be augmented, not reduced, as the other agents potentially find additional food sources, and since the agents already sitting on food sources are no longer allowed to move.
%Agents sitting on food sources are no longer allowed to move in the Adaptive $\alpha$-Compression Algorithm (since we assumed that the locations of food sources no longer changes).
As agent movement behaviors are recurring with polynomial bounds (Lemma~\ref{lemma:adaptivecompressionrecurring}), the reconfigurable Adaptive Stimuli Algorithm %(Theorem~\ref{theorem:mainresultwithstimulidynamic}) thus applies,
applies, yielding a polynomial bound on the expected number of iterations before all agents have switched to the \aware{} state with no residuals.
Additionally, due to the maintenance of the state invariant and as there are no residuals, every component of \aware{} agents will contain at least one witness, meaning that every cluster of agents will be around some food source. This gives us the first part of Theorem~\ref{thm:compressionhasfoodprecise}.

\begin{figure}[!ht]
\begin{subfigure}[b]{\linewidth}
  \begin{center}
  \includegraphics[width=0.95\linewidth]{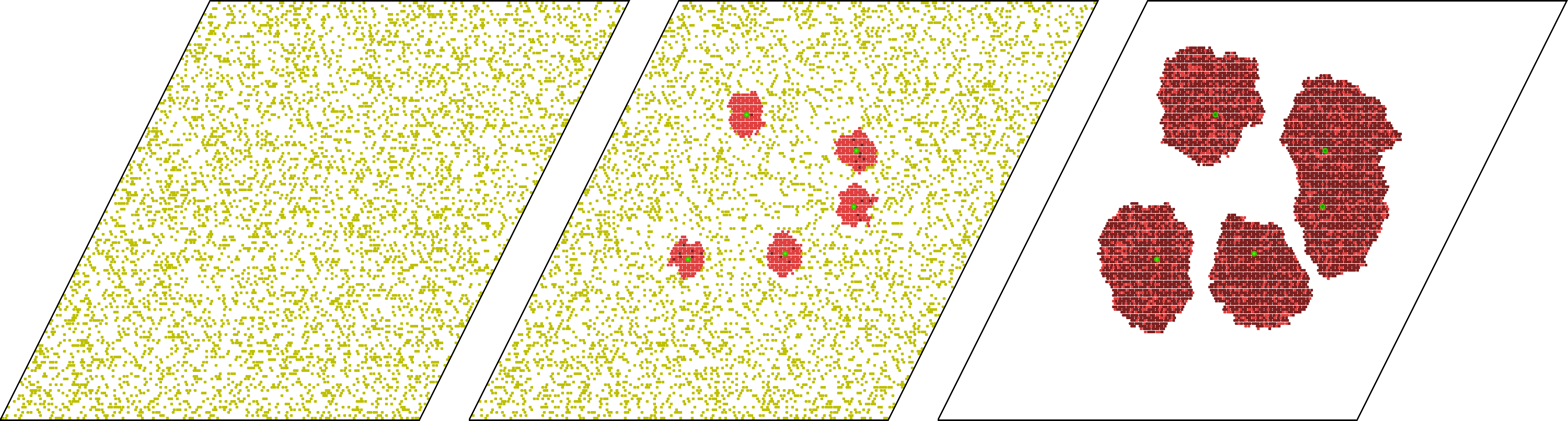}
  \end{center}
  \caption{All agents initially unaware. Five food sources added, all agents eventually switch to the \aware{} state.}
\end{subfigure}\\
~\\
\begin{subfigure}[b]{\linewidth}
  \begin{center}
  \includegraphics[width=0.95\linewidth]{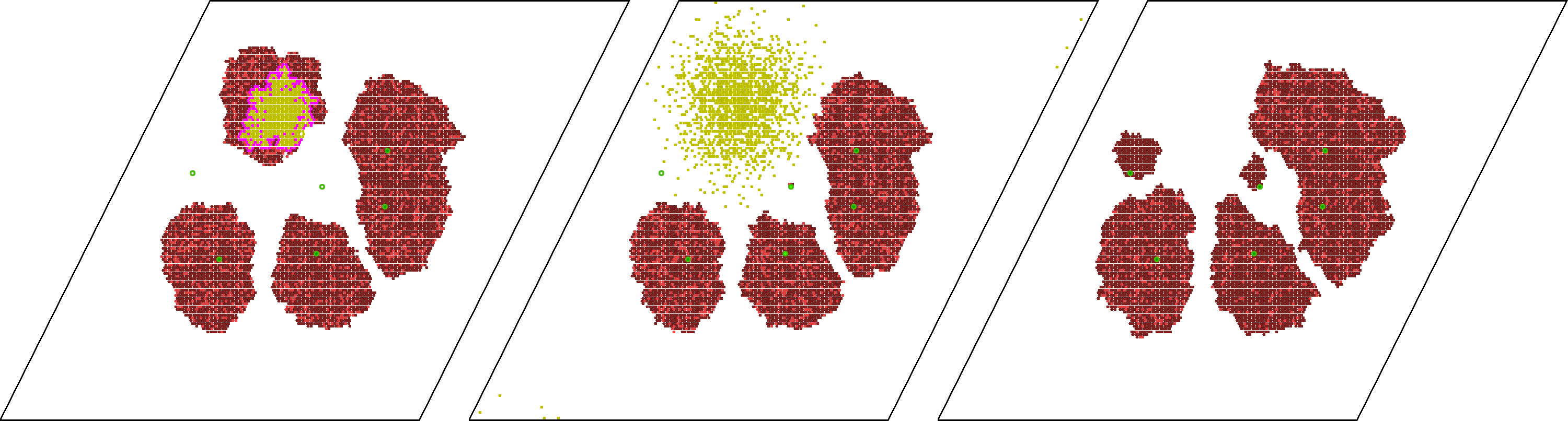}
  \end{center}
  \caption{One food source removed, two new added. A cluster disperses and agents rejoin other clusters.}
\end{subfigure}\\
~\\
\begin{subfigure}[b]{\linewidth}
  \begin{center}
  \includegraphics[width=0.95\linewidth]{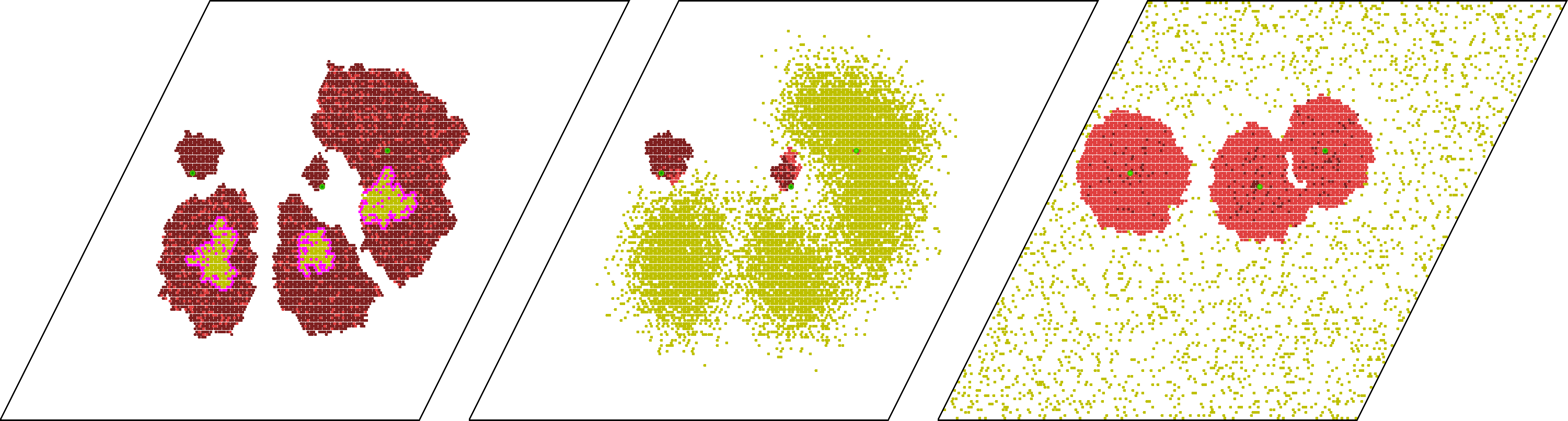}
  \end{center}
  \caption{Three food sources removed, clusters disperse and gather around the remaining food sources.}
\end{subfigure}\\
~\\
\begin{subfigure}[b]{\linewidth}
  \begin{center}
  \includegraphics[width=0.95\linewidth]{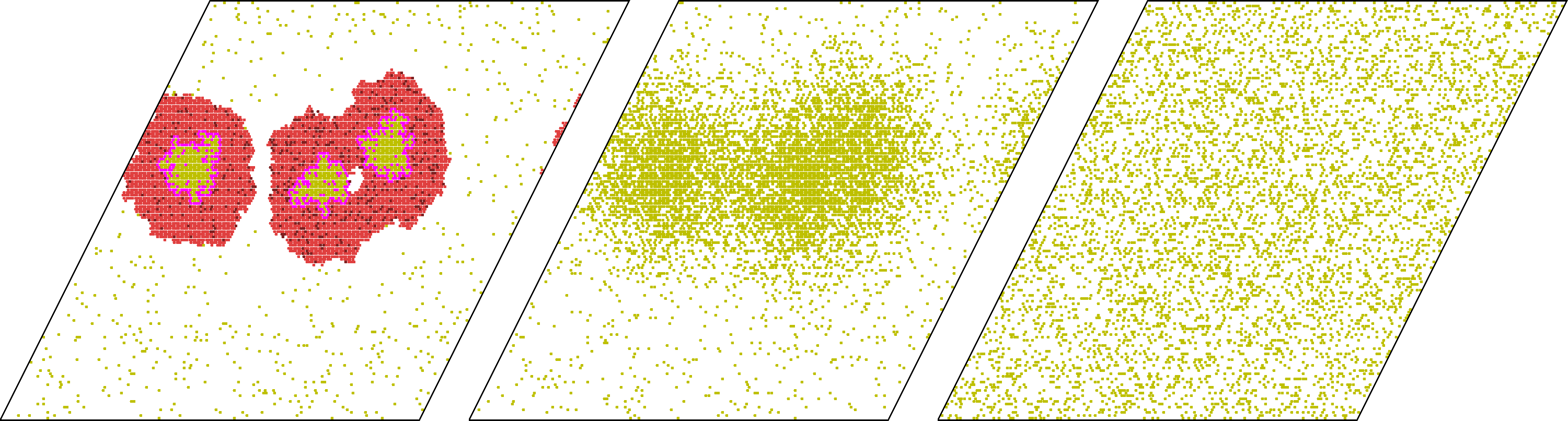}
  \end{center}
  \caption{All food sources removed, agents disperse and converge to a uniform distribution over the lattice.}
\end{subfigure}
\caption{Simulation of Adaptive $\alpha$-Compression with multiple food sources. The images are in chronological order. \unaware{} agents are yellow, \aware{} agents are red (darker red if they have an \alerttoken{}), agents with the \cleartoken{} are purple, and food sources are green.}
\label{fig:simulation}
\end{figure}

%\subparagraph{Adaptive $\alpha$-Compression in the case of a single food source.}
%Agents in the compression state execute movements from the compression algorithm of~\cite{Cannon2016}.

The second half of Theorem~\ref{thm:compressionhasfoodprecise} states that a low perimeter ($\alpha$-compressed) configuration is achievable in the case of a single food source.
As the Markov chain representing the compression moves is irreducible (Lemma~\ref{lemma:irreducible}),
%Assuming irreducibility of the Markov chain $\mathcal{M}_{\rm COM}$ representing the compression moves (\textsc{Execute-Gather} in Algorithm~\ref{alg:adaptivealphacompression}),
the results of \cite{Cannon2016} guarantee that for any $\alpha > 1$, there exists a sufficiently large constant $\lambda$ such that at stationarity, the perimeter of the cluster is at most $\alpha$ times its minimum possible perimeter with high probability.
%

%\comm{SH: ``The page limit for SAND proceedings is 20 pages, excluding title page and bibliography.'' - Does this mean we can go up to page 21?}

%\input{appendix_simulation}
\section{Simulations of the Adaptive $\alpha$-Compression Algorithm}
\label{appendix:simulation}
We demonstrate a simulation of the Adaptive $\alpha$-Compression algorithm with $5625$ agents in a $150\times 150$ triangular lattice with periodic boundary conditions. Multiple food sources (stimuli) are placed and moved around to illustrate the gather and search phases. This simulation is shown as a sequence of $12$ images in chronological order in Figure~\ref{fig:simulation}.

%\begin{acknowledgments}
\section*{Acknowledgements}
The authors thank the anonymous reviewers for useful feedback on the presentation.
%\end{acknowledgments}

%\section{Conclusions}

%Remarks:

%1. Max degree requirement can be relaxed but we use it for simplicity

%2. Open question: More that a constant number of witnesses

%3. Open question: Algorithms that function on weaker connectivity requirements

%4. Open question: witnesses that exist in the long term but with positions that constantly switch back and forth

\newpage
%%%%%%%%%%%%%%%%%%%%%%%%%%%%%%%%%%%%%
%%%%%%%%%%%%%%%%%%%%%%%%%%%%%%%%%%%%%
%%%%%%%%%%%%%%%%%%%%%%%%%%%%%%%%%%%%%
%%%%%%%%%%%%%%%%%%%%%%%%%%%%%%%%%%%%%
%%%%%%%%%%%%%%%%%%%%%%%%%%%%%%%%%%%%%
%\bibliographystyle{ACM-Reference-Format}
\bibliographystyle{plainurl}
\bibliography{bibfile}

\appendix
\section{Details of proofs}
\label{appendix:proofs}
In this appendix we give in full the longer proofs which have been omitted from the main paper.
We first restate and prove Lemma~\ref{lemma:allswitchtoawarestatedynamic} by making use of Lemmas~\ref{lemma:reachpotentialzerodynamic} and~\ref{lemma:residualscantregeneratedynamic}. Showing this lemma will allow us to prove Theorem~\ref{theorem:mainresultwithstimulidynamic}.
\newtheorem*{lemmarestate1}{Lemma~\ref{lemma:allswitchtoawarestatedynamic}}
\begin{lemmarestate1}
We start from a configuration satisfying the state invariant over a $(U_D,U_C)$-recurring valid reconfigurable graph, and assume that there are no residuals, the witness set is nonempty, and no agent will be removed from the witness set from the current point on.
Then the expected number of iterations before the next agent switches from the \unaware{} to the \aware{} state is at most $O\left(n^5 \log n + \frac{U_D}{1-U_C}\right)$.
\end{lemmarestate1}

\begin{proof}
This proof largely follows the proof of Lemma~\ref{lemma:allswitchtoawarestate}, with a few modifications to allow for reconfigurability.
%In the spirit of carefulness, we will state this proof in full, without reference to the proof of Lemma~\ref{lemma:allswitchtoawarestate}.
%
%ADD: CASE WHERE THE FIRST AGENT FINDS THE WITNESS. THIS IS PROBABLY ONLY FOR COMPRESSION BUT WE NEED TO TALK ABOUT THE FIRST AGENT STILL WITHOUT COMPRESSION
Once again, if there is no \aware{} agent, a witness becomes aware in $O\left(\frac{n}{p}\right)$ time in expectation, so for the rest of the proof, we may assume every witness is already in the \aware{} state with the witness flag set.

%We consider the amount of time it takes for the next agent to switch to the \aware{} state. For a non-witness to switch to the \aware{} state, it must be activated while adjacent to an agent holding an \alerttoken. 
Similar to the proof of Lemma~\ref{lemma:allswitchtoawarestate}, we upper bound the expected amount of time it takes for all \aware{} state agents to obtain an \alerttoken, followed by the amount of time it takes for an agent to be activated while adjacent to an \alerttoken{} and switch to the \aware{} state.

To bound the amount of time it takes for all \aware{} state agents to be holding an \alerttoken, we apply the same strategy as the proof of Lemma~\ref{lemma:allswitchtoawarestate}, marking an agent without an \alerttoken{} and passing around the mark until it lands on a witness.
%
%PARA NOT NEEDED IN STATIC PROOF
However, unlike the proof of Lemma~\ref{lemma:allswitchtoawarestate}, the mark now moves over a sequence of induced subgraphs $G_{\Tstart+1}, G_{\Tstart+2}, \ldots$, where $\Tstart$ denotes the last iteration an \unaware{} state agent switched to the \aware{} state (named as such as this is the point from which we start our analysis).
%
%%%% IMPORTANT: INDEPENDENCE!!! This is a tricky part of the proof due to the connectedness and independence requirements.
%The mark moving in this manner is thus equivalent to following the $d_{max}$-random walk over
%the sequence of induced subgraphs $G_\Tstart, G_{\Tstart+1}, G_{\Tstart+2}, \ldots$,
On each iteration $t \geq \Tstart$, $G_{t+1}$ is drawn from a distribution $\mathcal{X}(X_t,\bstates_t) = \mathcal{X}(X_t,\bstates_\Tstart)$, as $\bstates_{t} = \bstates_\Tstart$ for all $t \geq \Tstart$
%
%which depends (directly and indirectly) only on past information $X_0, X_1, X_2, \ldots, X_{t-1}$ and past \behaviorgroups~$\bstates_1, \bstates_1, \ldots, \bstates_t$.
%
(note that as long as no new \unaware{} agent switches to the \aware{} state, as there are no residuals and the set of witnesses $\witset_T$ is no longer changing, 
the \behaviorgroup~vectors $\bstates_{\Tstart}, \bstates_{\Tstart+1}, \ldots, \bstates_t$ will not have changed since $\Tstart$).
%The movements of the mark are thus invisible to the behavioral algorithm, so
Thus conditioned on everything that has happened on iterations up to and including $\Tstart$, the random sequence describing the movements of the mark is independent of the random sequence describing the reconfiguration behavior of the graph.

This independence is crucial for bounding the expected time before the next \alerttoken{} is generated, as it allows us to apply the result of~\cite{RandomWalksDynamicGraphs, RandomWalksRecurringTopologies}, which states that the expected hitting time of the $d_{max}$-random walk on a connected evolving graph controlled by an \emph{oblivious adversary} is $O(n^3 \log n)$~\cite{RandomWalksRecurringTopologies}.
This polynomial time bound is notable as there are connected evolving graphs where the simple random walk admits exponential hitting times in the worst case~\cite{RandomWalksDynamicGraphs}.

There is one remaining obstacle to applying this result however. 
Denote by $A$ the set of \aware{} state agents given by $\bstates_\Tstart$. We would like to apply this result to the sequence of induced subgraphs $G_t[A], t \geq \Tstart+1$, which in general, may not be connected.
To resolve this, we observe that as the state invariant (Definition~\ref{dfn:stateinvariant}) holds, each connected component of $G_t[A]$ will have an \immobileb{} agent (which are witnesses as there are no residuals). Adding an edge between every pair of \immobileb{} agents in each $G_t[A]$ gives a sequene of connected graphs $H_t$ - we can then see the movement of the mark as a $d_{max}$-random walk over the graph sequence $H_t$ to conclude that the expected hitting time from any vertex in $A$ to any vertex corresponding to an \immobileb{} agent is $O(n^3 \log n)$.
This corresponds to $O(n^4 \log n)$ iterations in expectation to generate a new \alerttoken{} which gives an upper bound of $O(n^5 \log n)$ iterations in expectation before all agents carry \alerttokens{}.
%Note that it is also important that the random graph sequence $H_t, t \geq \Tstart$ is independent of the sequence of movements of the mark for the graph evolution to be treated as controlled by an oblivious adversary.

%A hitting time bound of $O(n^3 \log n)$ movements of the mark implies an expected time upper bound of $O(n^4 \log n)$ iterations before an witness not currently holding an \alerttoken{} is activated and generates a new \alerttoken{} (this asymptotic bound is the same even as we take into account that the probability of generating a new \alerttoken{} on activation is a non-zero constant $p < 1$).
%As there are at most $n$ \aware{} agents, all \aware{} agents will be holding an \alerttoken{} after $O(n^5 \log n)$ iterations in expectation. Note that this is a loose bound - the bound has not been optimized for clarity of explanation.

%%%%%FOR STATIC PROOF ONLY
%The remaining step is for an \unaware{} agent to be activated while adjacent to an \aware{} agent holding an \alerttoken. As the graph $G$ is connected, as long as there is at least one \unaware{} and one \aware{} agent, there will always be at least one \unaware{} agent adjacent to an \aware{} agent, so the expected number of steps before this occurs will is $O(n)$.

%PARA NOT NEEDED IN STATIC PROOF
%recurrence result here:
Now that all \aware{} agents have \alerttokens{}, all that remains is to activate an \unaware{} agent neighboring an \aware{} agent (otherwise known as an active agent) and convert it to an \aware{} agent, consuming the \alerttoken{} held by said neighbor.
Denote by $\Tfull$ the first iteration where every \aware{} agent carries an \alerttoken. As the reconfiguration adversary is $(U_D,U_C)$-recurring, the iterations following $\Tfull$ may be divided into intervals with the random variables $D_1, D_2, D_3, \ldots$ as their respective lengths, where the total numbers of active agents within the respective intervals are represented by the random variables $C_1, C_2, C_3, \ldots$.

A new \aware{} agent is added when an active \unaware{} agent is activated. The probability of adding a new \aware{} agent on a given iteration with $k$ active agents is thus $\frac{k}{n}$, as each of the $n$ agents are activated with equal probability.
Thus, if we denote by the random sequence $K_1, K_2, K_3, \ldots$ the number of active agents on each iteration following $\Tfull$ (including $\Tfull$), we get the following expression for the expected value of $X$, which we use to denote the number of iterations following $\Tfull$ before a new \aware{} agent is added:
\begin{align*}
&\mathbb{E}[X | K_1, K_2, K_3, \ldots] 
= \sum_{x = 0}^\infty Pr\left(X > x | K_1, K_2, K_3, \ldots \right) \\
&= 1 + \sum_{x = 1}^\infty \left(1 - \frac{K_1}{n}\right)\left(1 - \frac{K_2}{n}\right)\cdots\left(1 - \frac{K_{x}}{n}\right)\\
&\leq 1 + \sum_{x = 1}^\infty \left(1 - \frac{1}{n}\right)^{K_1}\left(1 - \frac{1}{n}\right)^{K_2}\cdots\left(1 - \frac{1}{n}\right)^{K_x}\\
&= 1 + \sum_{x = 1}^\infty y^{\sum_{i=1}^{x}K_i} \text{ (let $y := \left(1 - \frac{1}{n}\right) \in (0,1)$)}\\
&= 1 + \underbrace{y^{K_1} + y^{K_1+K_2} + \ldots + y^{\sum_{i=1}^{D_1} K_i}}_{D_1\text{ terms}} + \underbrace{y^{C_1+K_{D_1+1}} + \ldots + y^{C_1 + \sum_{i=D_1+1}^{D_1+D_2} K_i}}_{D_2\text{ terms}} + \ldots \\
&\leq 1 + (D_1-1) + y^{C_1} + y^{C_1}(D_2-1) + y^{C_1+C_2} + y^{C_1+C_2}(D_3-1) + y^{C_1+C_2+C_3} + \ldots \\
&\leq D_1 + D_2 \cdot y^{C_1} + D_3 \cdot y^{C_1+C_2} + \ldots
\end{align*}
Thus, via the law of total expectation, we have
\begin{align*}
\mathbb{E}[X]
\leq \sum_{i=1}^\infty \mathbb{E}[D_i y^{\sum_{j=1}^{i-1}C_j}] 
&\leq \sum_{i=1}^\infty \mathbb{E}\left[ y^{\sum_{j=1}^{i-1}C_j} \mathbb{E}[D_i | C_1,C_2,\ldots,C_{i-1}] \right] \\
&\leq \sum_{i=1}^\infty U_D \mathbb{E}\left[ y^{\sum_{j=1}^{i-2}C_j} \mathbb{E}[y^{C_{i-1}} | C_1,C_2,\ldots,C_{i-2}] \right] \\
&\leq \ldots \leq \sum_{i=1}^\infty U_D (U_C)^{i-1} = \frac{U_D}{1-U_C}.
\end{align*}
Combining the two phases, we have an expected time bound of $O\left(n^5 \log n + \frac{U_D}{1-U_C}\right)$ before a new \aware{} agent is added.
\end{proof}

We now restarte and show Lemma \ref{lemma:adaptivecompressionrecurring}, which is used in Section \ref{sec:foraging} to show Theorem~\ref{thm:compressionhasfoodprecise}.

\newtheorem*{lemmarestate1b}{Lemma~\ref{lemma:adaptivecompressionrecurring}}
\begin{lemmarestate1b}
The movement behavior defined in the Adaptive $\alpha$-Compression Algorithm is $(U_D,U_C)$-recurring with $U_D = 2N^2 + \frac{2}{n} + 1$ and $U_C = \frac{2}{3}$.
\end{lemmarestate1b}
\begin{proof}
% Arbitrary lattice, arbitrary starting state.
% Note: The reconfiguration process as a function may execute multiple steps per iteration.
% We don't run the dynamic stimuli problem in this case. We are looking at a property of the behavior sequence alone. We want to count how many times agents become active if you run the behaviors only.
We start from an arbitrary starting lattice configuration, with an arbitrary \behaviorgroup{} vector that we assume will remain fixed. We further suppose that this \behaviorgroup{} vector contains at least one \unaware{} and at least one \immobileb{} agent.
%We start from an arbitrary \behaviorgroup{} vector $\bstates$ and an arbitrary starting lattice configuration. Suppose that the \behaviorgroup{} vector remains fixed at $\bstates$ and there is at least one \unaware{} and at least one \immobileb{} agent in $\bstates$.
In adaptive $\alpha$-compression, as agent movement depends only on the current configuration of the lattice and $\bstates$, for convenience of notation and without loss of generality, we start on iteration 0.

As on each iteration the Adaptive $\alpha$-Compression Algorithm randomly chooses between executing an iteration of the dynamic stimuli problem and executing a movement on the lattice, there may be any number of movements between iterations of the dynamic stimuli problem.
Throughout this proof, to distinguish the two, an ``iteration'' refers to a iteration of the dynamic stimuli problem, while a ``movement step'' refers to to an execution of the movement algorithm.
%Throughout this proof, ``iteration'' refers to a step of the dynamic stimuli problem.

To show that the reconfiguration adversary is recurring, the random sequences $(D_1, D_2, D_3, \ldots)$ and $(C_1, C_2, C_3, \ldots)$ need to be defined - these random sequences partition the iterations into \emph{``batches''}, where the $i$\textsuperscript{th} batch would take $D_i$ iterations and would see $C_i$ active agents over its duration.
To define a single batch starting from some iteration $T$, we consider the following two events (that occur between iterations):
\begin{itemize}
\item \textbf{Step 1}: The first time after iteration $T$ where agent movement puts an \unaware{} agent $u$ next to an \aware{} agent $v$ (even if the agent moves away within the same iteration).
\item \textbf{Step 2}: The first time after step 1 happens where agent movement selects either $u$ (again) or $v$ to be moved (even if the movement behavior does not end up moving the agent).
\end{itemize}
This batch would refer to the duration starting from $T$ and ending on the first iteration $T'$ following step 2 (the next batch then starts on iteration $T'+1$). Note that it is possible that Steps 1 and 2 happen between the same two iterations, in the event that the agents $u$ and $v$ identified in the first step are again selected by the movement behavior before the next iteration occurs.

To upper bound the expected number of movement steps required for Step 1, we note that the movements of the agents in the \unaware{} state follow a simple exclusion process. By considering obstructed movements as swap moves rather than rejected moves, we can analyze the hitting time of a simple exclusion process as independent simple random walks over the lattice.
If \mobileb{} or \immobileb{} agents did not exist, as our triangular lattice is a regular graph, a simple worst case upper bound for the amount of time before some \unaware{} agent reaches the location of an \immobileb{} agent is at most $2N^2$~\cite{lovasz1993random} actions of any one agent, which translates to $2N^2n$ movement steps in expectation. This \unaware{} state agent must come into contact with an \aware{} agent (\mobileb{} or \immobileb{} agent) before this happens, so $2N^2n$ is an upper bound for Step 1.

To upper bound the expected number of movement steps required for Step 2, we note the probability of selecting $u$ or $v$ is exactly $\frac{2}{n}$. The expected value of a geometric distribution gives us an upper bound of $\frac{n}{2}+1$ movement steps in expectation.

As the number of movement steps is equal to the number of iterations of the dynamic stimuli problem in expectation, and this applies from any starting configuration, this gives us a uniform upper bound of $2N^2n + \frac{2}{n} + 1$ for $\mathbb{E}[D_k | D_1, D_2, \ldots D_{k-1}, C_1,C_2 \ldots C_{k-1}]$ for each batch $k \in \{1,2,\ldots\}$.

Similarly, as selecting either $u$ or $v$ in Step 2 is a prerequisite for $u$ and $v$ to no longer be neighbors after Step 1, $C_k$ (conditioned on past $D_i, C_i$) stochastically dominates some geometric distribution $Y$ (with some success probability we will later define). To be precise however, as $C_k$ counts the number of iterations of the dynamic stimuli problem rather than the number of movement steps, we have to take into account the fact that $Geom(\frac{1}{2})$ movement steps may occur in between any two iterations.
The success probability $p_Y$ of the geometric distribution $Y$ is:
\begin{align*}
p_y &= \sum_{i=0}^\infty \frac{1}{2} \cdot \left(\frac{1}{2}\right)^i \left(1 - \left(1-\frac{2}{n}\right)^i \right)
= 1 - \frac{1}{2}\sum_{i=0}^\infty \left( \frac{1 - 2/n}{2} \right)^i
= \frac{2}{n}\left( \frac{1}{1 + 2/n} \right)
\end{align*}

As $C_k$ stochastically dominates $Y$ and $(1-\frac{1}{n})^x$ is a decreasing function of $x$, we have:
\begin{align*}
\mathbb{E}\left[\left(1-\frac{1}{n}\right)^{C_k} | D_1, D_2, \ldots D_{k-1}, C_1,C_2 \ldots C_{k-1}\right]
&\leq \mathbb{E}\left[\left(1-\frac{p}{n}\right)^Y\right] \\
&= \sum_{y=0}^\infty \left(1 - \frac{1}{n}\right)^y p_Y \left(1 - p_Y\right)^y \\
&= p_Y \frac{1}{1 - (1-1/n)(1-p_Y)} 
= \frac{2}{3}. \qedhere
\end{align*}
%\vspace{-.4in}
\end{proof}
% appendix_static is now no longer needed.
%\input{appendix_static}

%\input{appendix_reconfigurable}

%\input{appendix_compression}

% This appendix is temporarily commented out as it slows down compile time.
%\documentclass{article}
%\usepackage[letterpaper]{geometry}
%\usepackage{amsmath}
%\usepackage{amsfonts}
%\usepackage{amsthm}
%\usepackage{subcaption}
%\usepackage{graphicx}

%\newcommand{\floor}[1]{\lfloor #1 \rfloor}
%\newcommand{\ceil}[1]{\lceil #1 \rceil}
%\newcommand{\ex}{\mathbb{E}}

%\newcommand{\com}[1]{\textbf{\textcolor{red}{#1}}}

%\newtheorem{theorem}{Theorem}[section]
%\newtheorem{definition}[theorem]{Definition}
%\newtheorem{remark}[theorem]{Remark}
%\newtheorem{lemma}[theorem]{Lemma}
%\newtheorem{corollary}[theorem]{Corollary}
%\newtheorem{proposition}[theorem]{Proposition}
%\newtheorem{claim}[theorem]{Claim}
%\newtheorem{observation}[theorem]{Observation}
%\newtheorem{fact}[theorem]{Fact}
%\newtheorem{assumption}[theorem]{Assumption}
%\newtheorem{property}[theorem]{Property}
%\newtheorem{procedure}[theorem]{Procedure}

% TERMINOLOGY - START
\newcommand{\spine}{{spine}}
\newcommand{\Spine}{{Spine}}
\newcommand{\spines}{{spines}}
\newcommand{\SpineComb}{{Spine Comb}}
\newcommand{\spinecomb}{{spine comb}}
\newcommand{\spinecombs}{{spine combs}}
\newcommand{\mainspine}{{main spine}}
\newcommand{\anchoragent}{{anchor agent}}
\newcommand{\anchoragents}{{anchor agents}}
\newcommand{\tailagent}{{tail agent}}
\newcommand{\tailagents}{{tail agents}}
\newcommand{\sourcespine}{{source \spine{}}}
\newcommand{\targetspine}{{target \spine{}}}
\newcommand{\ResidualRegion}{{Residual Region}}
\newcommand{\residualregion}{{residual region}}

% TERMINOLOGY - END

% DIAGRAMS NEEDED
% Illustration of the six spines
% Illustration of tail agents and spine lengths
% Illustration of coordinate system
%   - do this by giving coordinates of anchor agents
%     (maybe illustrate a comb too)
%     (maybe illustrate the residual region too)
%
% Combed - illustration of a combed config from (l,d)
% Comb procedure (shift operation, two cases)
% Comb procedure (line formation, no shift)
% Comb procedure (before line merging)
% Comb procedure (after line merging)
%
% L: Lemma 1.7 (Unaffected Region Above)
% L: Lemma 1.13 (Unenterable Region Below)
%
% Combable sequence (illustrate the empty line above)
% Spine Comb ^ probably use same diag
% Result of a spine comb (with and without gap in line)
%
% Hexagon with a tail
% Cases for making gap from hexagon
%    - corner (upper left, with tail)
%    - side (two cases)

%\begin{document}

\section{Ergodicity of the Markov chain for compression}
\label{apx:irreducibilityproof}
We conclude by proving Lemma~\ref{lemma:irreducible}, showing that the Markov chain for compression is irreducible (and thus ergodic) by showing that all states are reachable using compression moves in the presence of an immobile agent on the food source (as in described at the end of Section~\ref{sec:foraging}).  The proof of ergodicity given in Cannon et al. \cite{Cannon2016} without an immobile agent was already fairly involved, but with the addition of an agent that cannot itself move, the proof becomes substantially more challenging. 

The main strategy in the proof is to treat the immobile agent on the food source as the ``center'' of the configuration, and consider the lines extending from the center in each of the six possible directions. These lines, which we call ``spines'', divide the lattice into six regions.
The sequence of moves described in \cite{Cannon2016} is then modified to operate within one of these regions, with limited side effects on the two regions counterclockwise from this region. We call this sequence of moves a ``comb'', and show that there is a sequence of comb operations that can be applied to the configuration, repeatedly going round the six regions in counterclockwise order, until the resulting configuration is a single long line.
We then observe that any valid compression move transforming a hole-free configuration to a hole-free configuration is also valid in the reverse direction, giving us the statement of the lemma.

%We overcome this by introducing a ``comb'' operation that organizes agents radially around the immobile agent.

% intro of the proof
We will treat the immobile agent on the food source as the ``center'' of the configuration. From the center, there are six directions one can move in a straight line on the triangular lattice - up, down, up-left, down-left, up-right, down-right. We call these six straight lines of agents extending from the immobile agent \emph{\spines{}}. We refer to agents on the \spines{} as \spine{} agents, and agents not on \spines{} as non-\spine{} agents. We similarly use the names \spine{} and non-\spine{} locations to refer to sites on the triangular lattices.

%Define tail agents
On each \spine{}, we call the furthest out (in terms of distance from the immobile agent) agent on the \spine{} with no adjacent non-\spine{} agents the \emph{\anchoragent{}} of the \spine{}. The \spine{} agents further out than the \anchoragent{} are called \emph{\tailagents{}}. The \emph{distance} of a \spine{} location from the center refers to its shortest-path distance (which would be along the \spine{}) to the immobile agent on the triangular lattice. For each integer $r \geq 1$, the hexagon of radius $r$ refers to the regular hexagon with corners defined by the six \spine{} locations of distance $r$ from the center. The distance of a non-\spine{} location from the center would then be the radius of the smallest such hexagon it is contained within.
%Define spine lengths
An important concept that we will use in the proof is the length of a \spine{}. The length of a \spine{} is defined to be the distance of its \anchoragent{} to the center. If it has no \anchoragent{}, the length of the \spine{} is $0$.

\begin{figure*}[t]
%\centering
%\includegraphics[width=.4\linewidth]{diagrams_irreducibility/spines.png}
\begin{center}
\begin{tikzpicture}[x=0.6cm,y=0.6cm]
\input{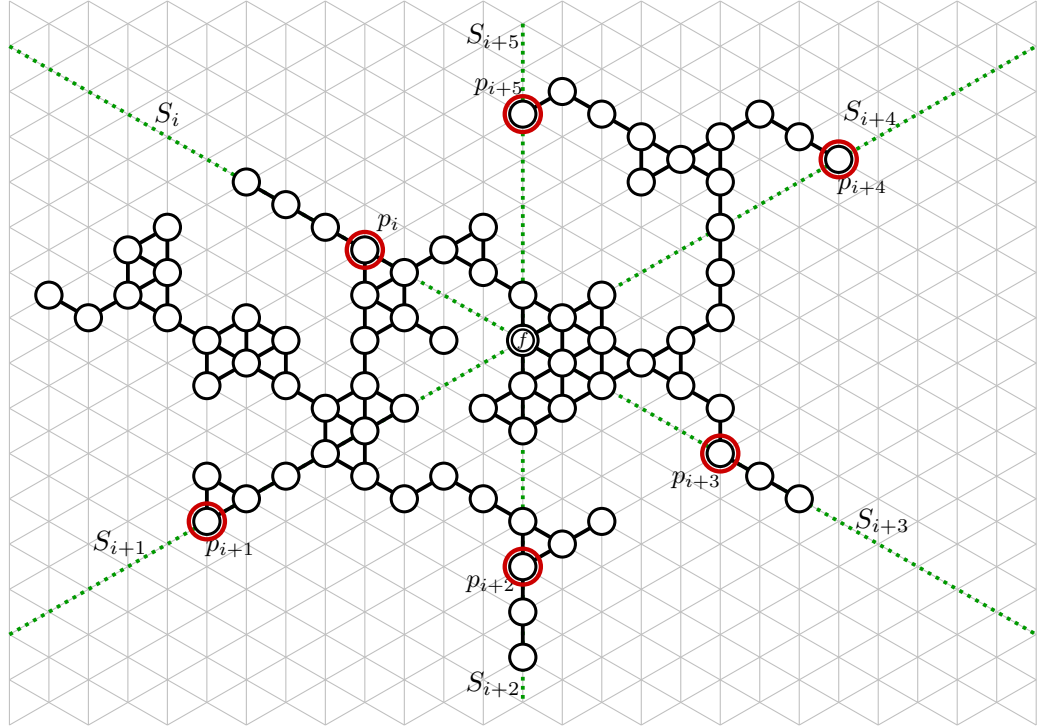}
\end{tikzpicture}
\end{center}
\caption{Illustration of the \spines{} extending from the immobile agent $f$.
The six \spines{} $S_i,\ldots,S_{i+5}$ have lengths $4,8,5,5,8,5$ respectively. These \spines{} have respective \anchoragents{} $p_i,\ldots,p_{i+5}$. As an illustration of the coordinate system, these six anchor agents are at coordinates $(4,0)$, $(8,8)$, $(0,5)$, $(-5,0)$, $(-8,-8)$ and $(0,-5)$ respectively.}
\label{fig:spines}
\end{figure*}

%Spine notation
We notate the six \spines{} using one of the \spines{} as a reference \spine{}. If the reference \spine{} is denoted $S_i$, where $i$ is an integer modulo $6$, then $S_{i+1}, S_{i+2}, \dots, S_{i+5}$ denote the subsequent \spines{} in a counterclockwise order from $S_i$.

The proof centers around a specific transformation we call a ``comb'' operation. This comb operation is applied from one \spine{} (which we refer to as the \sourcespine{}) to an adjacent \spine{} (which we refer to as the \targetspine{}), and has the effect of ``pushing'' the agents between the two \spines{} towards the \targetspine{}.

Our system exhibits reflection symmetry and 6-fold rotational symmetry, so this comb operation can be defined in $6\times 2 = 12$ different ways. However, for simplicity of discussion, we will only define the comb operation in one orientation, specifically on the left side, downwards. This is a comb from the \spine{} going in the up-left direction to the \spine{} going into the down-left direction. We rotate or reflect the configuration freely, depending on which pair of adjacent \spines{} we want to comb between. 

% Define Combing
\subsection{The comb operation}
We define our two-dimensional coordinate system $(lane,depth)$ with reference to the \sourcespine{}, assumed to be going in the up-left direction. A position $(\ell,0)$ for $\ell \geq 0$ refers to the position on the \sourcespine{} $\ell$ steps away from the immobile agent. If $\ell < 0$, this refers to the position $-\ell$ steps in the direction of the \spine{} directly opposite the \sourcespine{}. A position $(\ell,d)$ refers to the location $d$ steps downwards from position $(\ell,0)$. Thus, $d$ denotes the (signed) distance of the position from the \sourcespine{}.

\begin{figure*}
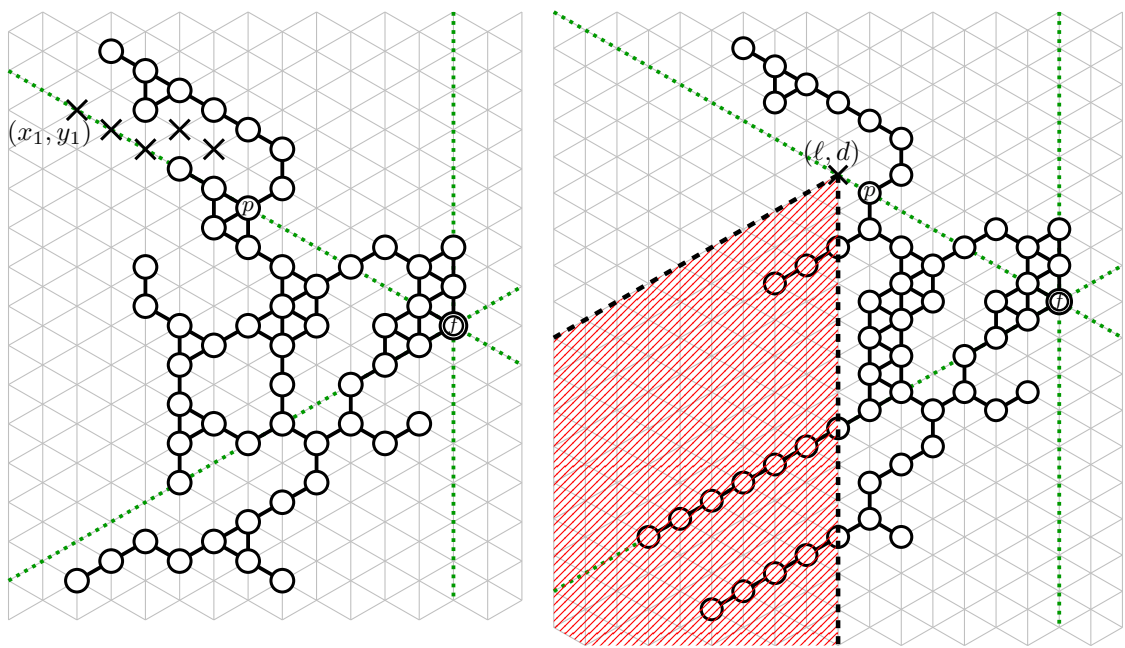

\begin{subfigure}[t]{.48\textwidth}
  %\centering
  %\includegraphics[width=.6\linewidth]{diagrams_irreducibility/spine_comb.png}
  \begin{center}
  \begin{tikzpicture}[x=0.52cm,y=0.52cm]
  \input{diagrams_tex/spine_comb.tex}
  \end{tikzpicture}
  \end{center}
  \caption{Illustration of a \spinecomb{} (Definition~\ref{proc:spinecomb}), which in this case is a comb over the sequence (Definition~\ref{proc:combingasequence}) denoted by crosses in the Figure.}
  \label{fig:spine_comb}
\end{subfigure}%
\hfill
\begin{subfigure}[t]{.48\textwidth}
  %\centering
  %\includegraphics[width=.65\linewidth]{diagrams_irreducibility/combed_residual_region.png}
  \begin{center}
  \begin{tikzpicture}[x=0.48cm,y=0.48cm]
  \input{diagrams_tex/combed_residual_region.tex}
  \end{tikzpicture}
  \end{center}
  \caption{After combing $(\ell,d)$, position $(\ell,d)$ is combed (Definition~\ref{defn:combed}). The shaded region is the \residualregion{} of $(\ell,d)$ (Definition~\ref{defn:residualregion})}
  \label{fig:combed_residual_region}
\end{subfigure}%
\caption{The comb operation is applied in to the five points marked with crosses in Figure~\ref{fig:spine_comb} from left to right in sequence, starting from $(x_1,y_1)$. Figure~\ref{fig:combed_residual_region} illustrates the end result.}
\label{fig:comb_before_after}
\end{figure*}

Before we define the comb operation, the following definitions tells us what can and cannot be combed.

\begin{definition}[\ResidualRegion{}]
\label{defn:residualregion}
Consider a position $(\ell,d)$ and the diagonal half-line extending down-left from $(\ell,d)$, including $(\ell,d)$ itself. The \emph{\residualregion{}} of this position refers to the set of all positions on or below this half-line (Figure~\ref{fig:combed_residual_region}).
\end{definition}

\begin{definition}[Combed]
\label{defn:combed}
For $\ell > 0$ and $d \geq 0$, we say a position $(\ell,d)$ is \emph{combed} (Figure~\ref{fig:combed_residual_region}) if:
\begin{enumerate}
\item All sites directly above a topmost agent of the \residualregion{} of $(\ell,d)$ are empty.
\item All agents in the \residualregion{} of $(\ell,d)$ form straight lines stretching down and left.
\item Consider the column of sites directly to the right of the \residualregion{} of $(\ell,d)$. Each of the abovementioned lines of agents stretches down and left from an agent from this column with no agent directly below.
\end{enumerate}
\end{definition}

\begin{definition}[Combable]
\label{defn:combable}
For $\ell > 0$ and $d \geq 0$, we say a position $(\ell,d)$ is \emph{combable} if:
\begin{enumerate}
\item The position $(\ell+1,d+1)$, which is one step diagonally down-left from $(\ell,d)$, is combed.
\item The site directly above $(\ell,d)$ is empty.
\end{enumerate}
\end{definition}

With this, we can define the comb procedure. 
Aside from operating only below a given depth, this comb procedure is identical to the process used the proof of irreducibility in \cite{Cannon2016}. As this operation is covered in detail in said paper, we will be brief with its explanation here.

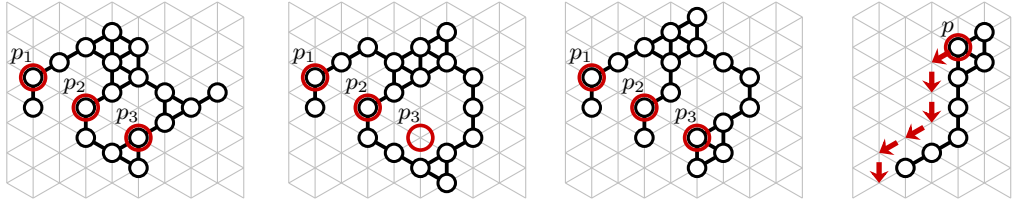
\begin{figure*}
  \begin{subfigure}{.26\textwidth}
    \centering
    \begin{center}
    \begin{tikzpicture}[x=0.4cm,y=0.4cm]
    \draw[lightgray] (2.59808,-0.5) -- (2.59808,-6.5);
\draw[lightgray] (0,-2) -- (3.4641,0);
\draw[lightgray] (0,-2) -- (7.79423,-6.5);
\draw[lightgray] (4.33013,-6.5) -- (7.79423,-4.5);
\draw[lightgray] (6.9282,0) -- (7.79423,-0.5);
\draw[lightgray] (6.9282,0) -- (6.9282,-6);
\draw[lightgray] (0.866025,-6.5) -- (7.79423,-2.5);
\draw[lightgray] (0,-3) -- (5.19615,0);
\draw[lightgray] (0,-3) -- (6.06218,-6.5);
\draw[lightgray] (3.4641,0) -- (7.79423,-2.5);
\draw[lightgray] (3.4641,0) -- (3.4641,-6);
\draw[lightgray] (7.79423,-0.5) -- (7.79423,-6.5);
\draw[lightgray] (6.06218,-6.5) -- (7.79423,-5.5);
\draw[lightgray] (2.59808,-6.5) -- (7.79423,-3.5);
\draw[lightgray] (0,-4) -- (6.9282,0);
\draw[lightgray] (0,-4) -- (4.33013,-6.5);
\draw[lightgray] (4.33013,-0.5) -- (4.33013,-6.5);
\draw[lightgray] (0,0) -- (7.79423,-4.5);
\draw[lightgray] (0,0) -- (0,-6);
\draw[lightgray] (5.19615,0) -- (7.79423,-1.5);
\draw[lightgray] (5.19615,0) -- (5.19615,-6);
\draw[lightgray] (0,-5) -- (7.79423,-0.5);
\draw[lightgray] (0,-5) -- (2.59808,-6.5);
\draw[lightgray] (0.866025,-0.5) -- (0.866025,-6.5);
\draw[lightgray] (0,-1) -- (1.73205,0);
\draw[lightgray] (0,-1) -- (7.79423,-5.5);
\draw[lightgray] (6.06218,-0.5) -- (6.06218,-6.5);
\draw[lightgray] (0,-6) -- (7.79423,-1.5);
\draw[lightgray] (0,-6) -- (0.866025,-6.5);
\draw[lightgray] (1.73205,0) -- (7.79423,-3.5);
\draw[lightgray] (1.73205,0) -- (1.73205,-6);
\draw[black, line width=0.4mm, fill=white] (0.866025,-2.5) circle (0.288);
\node[align=left] at (0.519615,-1.8) {\small $p_1$};
\draw[black, line width=0.4mm, fill=white] (0.866025,-3.5) circle (0.288);
\draw[black, line width=0.4mm, fill=white] (1.73205,-2) circle (0.288);
\draw[black, line width=0.4mm, fill=white] (2.59808,-1.5) circle (0.288);
\draw[black, line width=0.4mm, fill=white] (2.59808,-3.5) circle (0.288);
\node[align=left] at (2.25167,-2.8) {\small $p_2$};
\draw[black, line width=0.4mm, fill=white] (2.59808,-4.5) circle (0.288);
\draw[black, line width=0.4mm, fill=white] (3.4641,-1) circle (0.288);
\draw[black, line width=0.4mm, fill=white] (3.4641,-2) circle (0.288);
\draw[black, line width=0.4mm, fill=white] (3.4641,-3) circle (0.288);
\draw[black, line width=0.4mm, fill=white] (3.4641,-5) circle (0.288);
\draw[black, line width=0.4mm, fill=white] (4.33013,-1.5) circle (0.288);
\draw[black, line width=0.4mm, fill=white] (4.33013,-2.5) circle (0.288);
\draw[black, line width=0.4mm, fill=white] (4.33013,-4.5) circle (0.288);
\node[align=left] at (3.98372,-3.8) {\small $p_3$};
\draw[black, line width=0.4mm, fill=white] (4.33013,-5.5) circle (0.288);
\draw[black, line width=0.4mm, fill=white] (5.19615,-3) circle (0.288);
\draw[black, line width=0.4mm, fill=white] (5.19615,-4) circle (0.288);
\draw[black, line width=0.4mm, fill=white] (6.06218,-3.5) circle (0.288);
\draw[black, line width=0.4mm, fill=white] (6.9282,-3) circle (0.288);
\draw[black, line width=0.5mm] (1.12583,-2.35) -- (1.47224,-2.15);
\draw[black, line width=0.5mm] (5.19615,-3.7) -- (5.19615,-3.3);
\draw[black, line width=0.5mm] (5.45596,-3.85) -- (5.80237,-3.65);
\draw[black, line width=0.5mm] (3.72391,-1.15) -- (4.07032,-1.35);
\draw[black, line width=0.5mm] (0.866025,-3.2) -- (0.866025,-2.8);
\draw[black, line width=0.5mm] (2.85788,-3.35) -- (3.20429,-3.15);
\draw[black, line width=0.5mm] (4.58993,-4.35) -- (4.93634,-4.15);
\draw[black, line width=0.5mm] (4.33013,-5.2) -- (4.33013,-4.8);
\draw[black, line width=0.5mm] (3.72391,-4.85) -- (4.07032,-4.65);
\draw[black, line width=0.5mm] (3.72391,-5.15) -- (4.07032,-5.35);
\draw[black, line width=0.5mm] (6.32199,-3.35) -- (6.6684,-3.15);
\draw[black, line width=0.5mm] (2.85788,-1.35) -- (3.20429,-1.15);
\draw[black, line width=0.5mm] (2.85788,-1.65) -- (3.20429,-1.85);
\draw[black, line width=0.5mm] (4.33013,-2.2) -- (4.33013,-1.8);
\draw[black, line width=0.5mm] (4.58993,-2.65) -- (4.93634,-2.85);
\draw[black, line width=0.5mm] (5.45596,-3.15) -- (5.80237,-3.35);
\draw[black, line width=0.5mm] (3.4641,-2.7) -- (3.4641,-2.3);
\draw[black, line width=0.5mm] (3.72391,-2.85) -- (4.07032,-2.65);
\draw[black, line width=0.5mm] (1.99186,-1.85) -- (2.33827,-1.65);
\draw[black, line width=0.5mm] (3.4641,-1.7) -- (3.4641,-1.3);
\draw[black, line width=0.5mm] (3.72391,-1.85) -- (4.07032,-1.65);
\draw[black, line width=0.5mm] (3.72391,-2.15) -- (4.07032,-2.35);
\draw[black, line width=0.5mm] (2.59808,-4.2) -- (2.59808,-3.8);
\draw[black, line width=0.5mm] (2.85788,-4.65) -- (3.20429,-4.85);
\draw[black!20!red, line width=0.48mm, ] (0.866025,-2.5) circle (0.408);
\draw[black!20!red, line width=0.48mm, ] (2.59808,-3.5) circle (0.408);
\draw[black!20!red, line width=0.48mm, ] (4.33013,-4.5) circle (0.408);
    \end{tikzpicture}
    \end{center}
    \caption{Shift $p_1,p_2$ down-right.}
    \label{fig:shiftable_1}
  \end{subfigure}%
  \hfill
  \begin{subfigure}{.27\textwidth}
    \centering
    \begin{center}
    \begin{tikzpicture}[x=0.4cm,y=0.4cm]
    \draw[lightgray] (2.59808,-0.5) -- (2.59808,-6.5);
\draw[lightgray] (0,-2) -- (3.4641,0);
\draw[lightgray] (0,-2) -- (7.79423,-6.5);
\draw[lightgray] (4.33013,-6.5) -- (7.79423,-4.5);
\draw[lightgray] (6.9282,0) -- (7.79423,-0.5);
\draw[lightgray] (6.9282,0) -- (6.9282,-6);
\draw[lightgray] (0.866025,-6.5) -- (7.79423,-2.5);
\draw[lightgray] (0,-3) -- (5.19615,0);
\draw[lightgray] (0,-3) -- (6.06218,-6.5);
\draw[lightgray] (3.4641,0) -- (7.79423,-2.5);
\draw[lightgray] (3.4641,0) -- (3.4641,-6);
\draw[lightgray] (7.79423,-0.5) -- (7.79423,-6.5);
\draw[lightgray] (6.06218,-6.5) -- (7.79423,-5.5);
\draw[lightgray] (2.59808,-6.5) -- (7.79423,-3.5);
\draw[lightgray] (0,-4) -- (6.9282,0);
\draw[lightgray] (0,-4) -- (4.33013,-6.5);
\draw[lightgray] (4.33013,-0.5) -- (4.33013,-6.5);
\draw[lightgray] (0,0) -- (7.79423,-4.5);
\draw[lightgray] (0,0) -- (0,-6);
\draw[lightgray] (5.19615,0) -- (7.79423,-1.5);
\draw[lightgray] (5.19615,0) -- (5.19615,-6);
\draw[lightgray] (0,-5) -- (7.79423,-0.5);
\draw[lightgray] (0,-5) -- (2.59808,-6.5);
\draw[lightgray] (0.866025,-0.5) -- (0.866025,-6.5);
\draw[lightgray] (0,-1) -- (1.73205,0);
\draw[lightgray] (0,-1) -- (7.79423,-5.5);
\draw[lightgray] (6.06218,-0.5) -- (6.06218,-6.5);
\draw[lightgray] (0,-6) -- (7.79423,-1.5);
\draw[lightgray] (0,-6) -- (0.866025,-6.5);
\draw[lightgray] (1.73205,0) -- (7.79423,-3.5);
\draw[lightgray] (1.73205,0) -- (1.73205,-6);
\draw[black, line width=0.4mm, fill=white] (0.866025,-2.5) circle (0.288);
\node[align=left] at (0.519615,-1.8) {\small $p_1$};
\draw[black, line width=0.4mm, fill=white] (0.866025,-3.5) circle (0.288);
\draw[black, line width=0.4mm, fill=white] (1.73205,-2) circle (0.288);
\draw[black, line width=0.4mm, fill=white] (2.59808,-1.5) circle (0.288);
\draw[black, line width=0.4mm, fill=white] (2.59808,-3.5) circle (0.288);
\node[align=left] at (2.25167,-2.8) {\small $p_2$};
\draw[black, line width=0.4mm, fill=white] (2.59808,-4.5) circle (0.288);
\draw[black, line width=0.4mm, fill=white] (3.4641,-2) circle (0.288);
\draw[black, line width=0.4mm, fill=white] (3.4641,-3) circle (0.288);
\draw[black, line width=0.4mm, fill=white] (3.4641,-5) circle (0.288);
\draw[black, line width=0.4mm, fill=white] (4.33013,-1.5) circle (0.288);
\draw[black, line width=0.4mm, fill=white] (4.33013,-2.5) circle (0.288);
\node[align=left] at (3.98372,-3.8) {\small $p_3$};
\draw[black, line width=0.4mm, fill=white] (4.33013,-5.5) circle (0.288);
\draw[black, line width=0.4mm, fill=white] (5.19615,-1) circle (0.288);
\draw[black, line width=0.4mm, fill=white] (5.19615,-2) circle (0.288);
\draw[black, line width=0.4mm, fill=white] (5.19615,-5) circle (0.288);
\draw[black, line width=0.4mm, fill=white] (5.19615,-6) circle (0.288);
\draw[black, line width=0.4mm, fill=white] (6.06218,-2.5) circle (0.288);
\draw[black, line width=0.4mm, fill=white] (6.06218,-3.5) circle (0.288);
\draw[black, line width=0.4mm, fill=white] (6.06218,-4.5) circle (0.288);
\draw[black, line width=0.5mm] (6.06218,-3.2) -- (6.06218,-2.8);
\draw[black, line width=0.5mm] (0.866025,-3.2) -- (0.866025,-2.8);
\draw[black, line width=0.5mm] (5.19615,-5.7) -- (5.19615,-5.3);
\draw[black, line width=0.5mm] (5.19615,-1.7) -- (5.19615,-1.3);
\draw[black, line width=0.5mm] (5.45596,-2.15) -- (5.80237,-2.35);
\draw[black, line width=0.5mm] (4.58993,-1.35) -- (4.93634,-1.15);
\draw[black, line width=0.5mm] (4.58993,-1.65) -- (4.93634,-1.85);
\draw[black, line width=0.5mm] (1.12583,-2.35) -- (1.47224,-2.15);
\draw[black, line width=0.5mm] (2.85788,-3.35) -- (3.20429,-3.15);
\draw[black, line width=0.5mm] (4.58993,-5.35) -- (4.93634,-5.15);
\draw[black, line width=0.5mm] (4.58993,-5.65) -- (4.93634,-5.85);
\draw[black, line width=0.5mm] (1.99186,-1.85) -- (2.33827,-1.65);
\draw[black, line width=0.5mm] (2.59808,-4.2) -- (2.59808,-3.8);
\draw[black, line width=0.5mm] (2.85788,-4.65) -- (3.20429,-4.85);
\draw[black, line width=0.5mm] (3.72391,-5.15) -- (4.07032,-5.35);
\draw[black, line width=0.5mm] (3.72391,-1.85) -- (4.07032,-1.65);
\draw[black, line width=0.5mm] (3.72391,-2.15) -- (4.07032,-2.35);
\draw[black, line width=0.5mm] (5.45596,-4.85) -- (5.80237,-4.65);
\draw[black, line width=0.5mm] (2.85788,-1.65) -- (3.20429,-1.85);
\draw[black, line width=0.5mm] (6.06218,-4.2) -- (6.06218,-3.8);
\draw[black, line width=0.5mm] (3.4641,-2.7) -- (3.4641,-2.3);
\draw[black, line width=0.5mm] (3.72391,-2.85) -- (4.07032,-2.65);
\draw[black, line width=0.5mm] (4.33013,-2.2) -- (4.33013,-1.8);
\draw[black, line width=0.5mm] (4.58993,-2.35) -- (4.93634,-2.15);
\draw[black!20!red, line width=0.48mm, ] (0.866025,-2.5) circle (0.408);
\draw[black!20!red, line width=0.48mm, ] (2.59808,-3.5) circle (0.408);
\draw[black!20!red, line width=0.48mm, ] (4.33013,-4.5) circle (0.408);
    \end{tikzpicture}
    \end{center}
    \caption{Shift $p_1,p_2$ down-right.}
    \label{fig:shiftable_2}
  \end{subfigure}%
  \begin{subfigure}{.25\textwidth}
    \centering
    \begin{center}
    \begin{tikzpicture}[x=0.4cm,y=0.4cm]
    \draw[lightgray] (2.59808,-0.5) -- (2.59808,-6.5);
\draw[lightgray] (0,-2) -- (3.4641,0);
\draw[lightgray] (0,-2) -- (7.79423,-6.5);
\draw[lightgray] (4.33013,-6.5) -- (7.79423,-4.5);
\draw[lightgray] (6.9282,0) -- (7.79423,-0.5);
\draw[lightgray] (6.9282,0) -- (6.9282,-6);
\draw[lightgray] (0.866025,-6.5) -- (7.79423,-2.5);
\draw[lightgray] (0,-3) -- (5.19615,0);
\draw[lightgray] (0,-3) -- (6.06218,-6.5);
\draw[lightgray] (3.4641,0) -- (7.79423,-2.5);
\draw[lightgray] (3.4641,0) -- (3.4641,-6);
\draw[lightgray] (7.79423,-0.5) -- (7.79423,-6.5);
\draw[lightgray] (6.06218,-6.5) -- (7.79423,-5.5);
\draw[lightgray] (2.59808,-6.5) -- (7.79423,-3.5);
\draw[lightgray] (0,-4) -- (6.9282,0);
\draw[lightgray] (0,-4) -- (4.33013,-6.5);
\draw[lightgray] (4.33013,-0.5) -- (4.33013,-6.5);
\draw[lightgray] (0,0) -- (7.79423,-4.5);
\draw[lightgray] (0,0) -- (0,-6);
\draw[lightgray] (5.19615,0) -- (7.79423,-1.5);
\draw[lightgray] (5.19615,0) -- (5.19615,-6);
\draw[lightgray] (0,-5) -- (7.79423,-0.5);
\draw[lightgray] (0,-5) -- (2.59808,-6.5);
\draw[lightgray] (0.866025,-0.5) -- (0.866025,-6.5);
\draw[lightgray] (0,-1) -- (1.73205,0);
\draw[lightgray] (0,-1) -- (7.79423,-5.5);
\draw[lightgray] (6.06218,-0.5) -- (6.06218,-6.5);
\draw[lightgray] (0,-6) -- (7.79423,-1.5);
\draw[lightgray] (0,-6) -- (0.866025,-6.5);
\draw[lightgray] (1.73205,0) -- (7.79423,-3.5);
\draw[lightgray] (1.73205,0) -- (1.73205,-6);
\draw[black, line width=0.4mm, fill=white] (0.866025,-2.5) circle (0.288);
\node[align=left] at (0.519615,-1.8) {\small $p_1$};
\draw[black, line width=0.4mm, fill=white] (0.866025,-3.5) circle (0.288);
\draw[black, line width=0.4mm, fill=white] (1.73205,-2) circle (0.288);
\draw[black, line width=0.4mm, fill=white] (2.59808,-1.5) circle (0.288);
\draw[black, line width=0.4mm, fill=white] (2.59808,-3.5) circle (0.288);
\node[align=left] at (2.25167,-2.8) {\small $p_2$};
\draw[black, line width=0.4mm, fill=white] (2.59808,-4.5) circle (0.288);
\draw[black, line width=0.4mm, fill=white] (3.4641,-1) circle (0.288);
\draw[black, line width=0.4mm, fill=white] (3.4641,-2) circle (0.288);
\draw[black, line width=0.4mm, fill=white] (3.4641,-3) circle (0.288);
\draw[black, line width=0.4mm, fill=white] (4.33013,-0.5) circle (0.288);
\draw[black, line width=0.4mm, fill=white] (4.33013,-1.5) circle (0.288);
\draw[black, line width=0.4mm, fill=white] (4.33013,-4.5) circle (0.288);
\node[align=left] at (3.98372,-3.8) {\small $p_3$};
\draw[black, line width=0.4mm, fill=white] (4.33013,-5.5) circle (0.288);
\draw[black, line width=0.4mm, fill=white] (5.19615,-2) circle (0.288);
\draw[black, line width=0.4mm, fill=white] (5.19615,-4) circle (0.288);
\draw[black, line width=0.4mm, fill=white] (5.19615,-5) circle (0.288);
\draw[black, line width=0.4mm, fill=white] (6.06218,-2.5) circle (0.288);
\draw[black, line width=0.4mm, fill=white] (6.06218,-3.5) circle (0.288);
\draw[black, line width=0.5mm] (1.12583,-2.35) -- (1.47224,-2.15);
\draw[black, line width=0.5mm] (5.45596,-3.85) -- (5.80237,-3.65);
\draw[black, line width=0.5mm] (3.72391,-0.85) -- (4.07032,-0.65);
\draw[black, line width=0.5mm] (3.72391,-1.15) -- (4.07032,-1.35);
\draw[black, line width=0.5mm] (0.866025,-3.2) -- (0.866025,-2.8);
\draw[black, line width=0.5mm] (4.33013,-1.2) -- (4.33013,-0.8);
\draw[black, line width=0.5mm] (4.58993,-1.65) -- (4.93634,-1.85);
\draw[black, line width=0.5mm] (2.85788,-3.35) -- (3.20429,-3.15);
\draw[black, line width=0.5mm] (4.58993,-4.35) -- (4.93634,-4.15);
\draw[black, line width=0.5mm] (4.58993,-4.65) -- (4.93634,-4.85);
\draw[black, line width=0.5mm] (4.33013,-5.2) -- (4.33013,-4.8);
\draw[black, line width=0.5mm] (4.58993,-5.35) -- (4.93634,-5.15);
\draw[black, line width=0.5mm] (6.06218,-3.2) -- (6.06218,-2.8);
\draw[black, line width=0.5mm] (2.85788,-1.35) -- (3.20429,-1.15);
\draw[black, line width=0.5mm] (2.85788,-1.65) -- (3.20429,-1.85);
\draw[black, line width=0.5mm] (5.45596,-2.15) -- (5.80237,-2.35);
\draw[black, line width=0.5mm] (3.4641,-2.7) -- (3.4641,-2.3);
\draw[black, line width=0.5mm] (1.99186,-1.85) -- (2.33827,-1.65);
\draw[black, line width=0.5mm] (3.4641,-1.7) -- (3.4641,-1.3);
\draw[black, line width=0.5mm] (3.72391,-1.85) -- (4.07032,-1.65);
\draw[black, line width=0.5mm] (2.59808,-4.2) -- (2.59808,-3.8);
\draw[black, line width=0.5mm] (5.19615,-4.7) -- (5.19615,-4.3);
\draw[black!20!red, line width=0.48mm, ] (0.866025,-2.5) circle (0.408);
\draw[black!20!red, line width=0.48mm, ] (2.59808,-3.5) circle (0.408);
\draw[black!20!red, line width=0.48mm, ] (4.33013,-4.5) circle (0.408);
    \end{tikzpicture}
    \end{center}
    \caption{Shift $p_2,p_3$ up-left.}
    \label{fig:shiftable_3}
  \end{subfigure}%
  \begin{subfigure}{.22\textwidth}
    \centering
    \begin{center}
    \begin{tikzpicture}[x=0.4cm,y=0.4cm]
    \draw[lightgray] (3.4641,-7) -- (5.19615,-6);
\draw[lightgray] (2.59808,-0.5) -- (5.19615,-2);
\draw[lightgray] (2.59808,-0.5) -- (2.59808,-6.5);
\draw[lightgray] (0,-2) -- (2.59808,-0.5);
\draw[lightgray] (0,-2) -- (5.19615,-5);
\draw[lightgray] (0,-7) -- (5.19615,-4);
\draw[lightgray] (1.73205,-1) -- (1.73205,-7);
\draw[lightgray] (0,-3) -- (4.33013,-0.5);
\draw[lightgray] (0,-3) -- (5.19615,-6);
\draw[lightgray] (0,-4) -- (5.19615,-1);
\draw[lightgray] (0,-4) -- (5.19615,-7);
\draw[lightgray] (4.33013,-0.5) -- (5.19615,-1);
\draw[lightgray] (4.33013,-0.5) -- (4.33013,-6.5);
\draw[lightgray] (3.4641,-1) -- (3.4641,-7);
\draw[lightgray] (0,-5) -- (5.19615,-2);
\draw[lightgray] (0,-5) -- (3.4641,-7);
\draw[lightgray] (0.866025,-0.5) -- (5.19615,-3);
\draw[lightgray] (0.866025,-0.5) -- (0.866025,-6.5);
\draw[lightgray] (0,-1) -- (0.866025,-0.5);
\draw[lightgray] (0,-1) -- (5.19615,-4);
\draw[lightgray] (0,-1) -- (0,-7);
\draw[lightgray] (1.73205,-7) -- (5.19615,-5);
\draw[lightgray] (5.19615,-1) -- (5.19615,-7);
\draw[lightgray] (0,-6) -- (5.19615,-3);
\draw[lightgray] (0,-6) -- (1.73205,-7);
\draw[black, line width=0.4mm, fill=white] (1.73205,-6) circle (0.288);
\draw[black, line width=0.4mm, fill=white] (2.59808,-5.5) circle (0.288);
\draw[black, line width=0.4mm, fill=white] (3.4641,-2) circle (0.288);
\node[align=left] at (3.11769,-1.4) {\small $p$};
\draw[black, line width=0.4mm, fill=white] (3.4641,-3) circle (0.288);
\draw[black, line width=0.4mm, fill=white] (3.4641,-4) circle (0.288);
\draw[black, line width=0.4mm, fill=white] (3.4641,-5) circle (0.288);
\draw[black, line width=0.4mm, fill=white] (4.33013,-1.5) circle (0.288);
\draw[black, line width=0.4mm, fill=white] (4.33013,-2.5) circle (0.288);
\draw[black, line width=0.5mm] (1.99186,-5.85) -- (2.33827,-5.65);
\draw[black, line width=0.5mm] (3.4641,-4.7) -- (3.4641,-4.3);
\draw[black, line width=0.5mm] (3.4641,-3.7) -- (3.4641,-3.3);
\draw[black, line width=0.5mm] (3.4641,-2.7) -- (3.4641,-2.3);
\draw[black, line width=0.5mm] (3.72391,-2.85) -- (4.07032,-2.65);
\draw[black, line width=0.5mm] (3.72391,-1.85) -- (4.07032,-1.65);
\draw[black, line width=0.5mm] (3.72391,-2.15) -- (4.07032,-2.35);
\draw[black, line width=0.5mm] (4.33013,-2.2) -- (4.33013,-1.8);
\draw[black, line width=0.5mm] (2.85788,-5.35) -- (3.20429,-5.15);
\draw[black!20!red,-{Stealth[length=1.6mm,width=2.5mm]},line width=0.7mm] (0.866025,-5.81) -- (0.866025,-6.5);
\draw[black!20!red,-{Stealth[length=1.6mm,width=2.5mm]},line width=0.7mm] (1.46358,-5.155) -- (0.866025,-5.5);
\draw[black!20!red,-{Stealth[length=1.6mm,width=2.5mm]},line width=0.7mm] (2.59808,-2.81) -- (2.59808,-3.5);
\draw[black!20!red,-{Stealth[length=1.6mm,width=2.5mm]},line width=0.7mm] (2.59808,-3.81) -- (2.59808,-4.5);
\draw[black!20!red,-{Stealth[length=1.6mm,width=2.5mm]},line width=0.7mm] (2.32961,-4.655) -- (1.73205,-5);
\draw[black!20!red, line width=0.48mm, ] (3.4641,-2) circle (0.408);
\draw[black!20!red,-{Stealth[length=1.6mm,width=2.5mm]},line width=0.7mm] (3.19563,-2.155) -- (2.59808,-2.5);
    \end{tikzpicture}
    \end{center}
    \caption{$p$ is non-shiftable.}
    \label{fig:non_shiftable}
  \end{subfigure}%
\caption{Figures~\ref{fig:shiftable_1}, \ref{fig:shiftable_2} and \ref{fig:shiftable_3} illustrate different cases for the ``shift'' operation. In all of these images, sites $p_1$ and $p_2$ are shiftable agents, while $p_3$ is not.}
\end{figure*}

\vskip.1in
\noindent \underbar{\bf The comb operation:}
%\begin{definition}[Comb]
%\label{proc:comb}
The comb procedure (applied to a combable position $(\ell,d)$) has two phases, line formation and line merging.
After the line formation phase, the first two conditions for $(\ell,d)$ to be combed will be satisfied by the configuration (Figure~\ref{fig:line_merging_before}). The line merging phase gives us the third condition (Figure~\ref{fig:line_merging_after}).

\vskip.1in
\noindent \underbar{\textbf{Line formation}}
Let $L$ denote the set of agents on lane $\ell$ on or below $(\ell,d)$. The agents in $L$ can be grouped into connected components within $L$.
The line formation phase operates from top to bottom on $L$, removing the topmost agent of each component with size greater than $1$ at each turn, until every component on $L$ has size $1$. We maintain the invariant that $(\ell+1,d+1)$ is combed after each turn, while reducing the number of agents in $L$ by $1$.

We call a site ``shiftable'' if there is an agent on the site, and it has exactly two neighboring agents, one directly below and one directly up-right of it. In a turn, there are two possible cases. Denote by $p$ the topmost agent of the topmost component with size greater than $1$.

% FIGURES WOULD BE REAAALLY USEFUL HERE
If $p$ is shiftable, we apply what we call a ``shift'', which moves a set of agents on a line either down-right or up-right, so that $p$ either becomes unoccupied or non-shiftable. To apply a shift, we consider the sequence of sites $p = p_1, p_2, \dots$, where each site $p_{i+1}$ is exactly two steps down-right of site $p_i$. Let $k$ be the largest integer such that all of the sites $p_1, p_2, \dots, p_k$ are shiftable. Figures~\ref{fig:shiftable_1}, \ref{fig:shiftable_2} and \ref{fig:shiftable_3} illustrate examples where $k=2$.
Consider the first non-shiftable site $p_{k+1}$ in the sequence, and the sites directly above and directly down-left of $p_{k+1}$, which we will call $p_{k+1}^{U}$ and $p_{k+1}^{DL}$ respectively. If $p_{k+1}$ is unoccupied or either of $p_{k+1}^{U}$ or $p_{k+1}^{DL}$ are occupied, then moving $p_{k}$ one step down-right is a valid move (Figures~\ref{fig:shiftable_1}, \ref{fig:shiftable_2}). We can thus go backwards through the sequence from $p_{k}$ to $p_1$, moving each agent one step down-right, ending with shifting $p = p_1$ one step down-right, so that the site $p$ originally was occupying now becomes unoccupied.
On the other hand, if $p_{k+1}$ is occupied but $p_{k+1}^{U}$ and $p_{k+1}^{DL}$ aren't, as $p_{k+1}$ is non-shiftable, the remaining three neighbors (down, down-right and up-right) must form a single component, meaning moving $p_{k+1}$ one step up-left is a valid move (Figure~\ref{fig:shiftable_3}). We can then subsequently move each agent from $p_{k+1}$ to $p_2$ one step up-left, culminating in $p = p_1$ becoming non-shiftable, leading in to the second case which we will describe next. Note that after a shift, the invariant that $(\ell+1,d+1)$ is combed still holds.

% THIS DEFINITELY NEEDS A FIGURE
If $p$ is not shiftable, by the invariant we maintain, the sites above, up-left and down-left of $p$ must be unoccupied, while the site directly below $p$ is occupied. Thus, if the site up-right of $p$ is occupied, so must the site down-right of $p$. As $(\ell+1,d+1)$ is combed, the component on $L$ $p$ belongs to will have no agent down-left of it, except for potentially one line of agents extending from the bottommost agent of the component. The agent at $p$ can thus be moved down-left, down along the component on $L$ $p$ belongs to, then down-left to reach the end of the beforementioned line if it exists, and down once more to join the end of this line (Figure~\ref{fig:non_shiftable}). In all, this reduces the number of agents in $L$ by $1$, while maintaining the invariant that $(\ell+1,d+1)$ is combed.

Thus, after the line formation phase, every component in $L$ will have exactly $1$ agent, while $(\ell+1,d+1)$ remains combed. As the site above $(\ell,d)$ is empty, the first two conditions for $(\ell,d)$ being combed are satisfied. The line merging phase will give us the third condition. Figures~\ref{fig:line_merging_before} and \ref{fig:line_merging_after} illustrate configurations before and after the line merging phase.

\vskip.1in
\noindent \underbar{\textbf{Line merging}}
Let $C$ denote the column of sites directly to the right of $(\ell,d)$. The lines extending down and left in the residual region of $(\ell,d)$ may extend from agents in $C$ that are not the bottommost agents of their respective components. To fix this, the line merging phase processs these lines from the lowest to the highest. To move a line downwards by one step, the agents of the line are shifted down one by one, starting from the rightmost agent of the line and ending with the agent on the end of the line. These moves are always possible as long as there is no line directly below the current line. If there is a line directly below, we merge the current line into the line below by moving the agents one at the time to the end of the line below with a straightforward sequence of moves, starting with the leftmost agent of the current line.
%\end{definition}

\begin{figure}[h]
\begin{subfigure}{.5\linewidth}
  \centering
  \begin{center}
  \begin{tikzpicture}[x=0.4cm,y=0.4cm]
  \draw[lightgray] (0,-7) -- (6.06218,-3.5);
\draw[lightgray] (0,-7) -- (5.19615,-10);
\draw[lightgray] (0,-10) -- (6.06218,-6.5);
\draw[lightgray] (3.4641,0) -- (6.06218,-1.5);
\draw[lightgray] (3.4641,0) -- (3.4641,-10);
\draw[lightgray] (5.19615,-10) -- (6.06218,-9.5);
\draw[lightgray] (0,-4) -- (6.06218,-0.5);
\draw[lightgray] (0,-4) -- (6.06218,-7.5);
\draw[lightgray] (3.4641,-10) -- (6.06218,-8.5);
\draw[lightgray] (5.19615,0) -- (6.06218,-0.5);
\draw[lightgray] (5.19615,0) -- (5.19615,-10);
\draw[lightgray] (0,-1) -- (1.73205,0);
\draw[lightgray] (0,-1) -- (6.06218,-4.5);
\draw[lightgray] (0,-5) -- (6.06218,-1.5);
\draw[lightgray] (0,-5) -- (6.06218,-8.5);
\draw[lightgray] (0.866025,-0.5) -- (0.866025,-9.5);
\draw[lightgray] (0,-8) -- (6.06218,-4.5);
\draw[lightgray] (0,-8) -- (3.4641,-10);
\draw[lightgray] (0,-2) -- (3.4641,0);
\draw[lightgray] (0,-2) -- (6.06218,-5.5);
\draw[lightgray] (2.59808,-0.5) -- (2.59808,-9.5);
\draw[lightgray] (4.33013,-0.5) -- (4.33013,-9.5);
\draw[lightgray] (6.06218,-0.5) -- (6.06218,-9.5);
\draw[lightgray] (0,-6) -- (6.06218,-2.5);
\draw[lightgray] (0,-6) -- (6.06218,-9.5);
\draw[lightgray] (0,-9) -- (6.06218,-5.5);
\draw[lightgray] (0,-9) -- (1.73205,-10);
\draw[lightgray] (0,-3) -- (5.19615,0);
\draw[lightgray] (0,-3) -- (6.06218,-6.5);
\draw[lightgray] (0,0) -- (6.06218,-3.5);
\draw[lightgray] (0,0) -- (0,-10);
\draw[lightgray] (1.73205,-10) -- (6.06218,-7.5);
\draw[lightgray] (1.73205,0) -- (6.06218,-2.5);
\draw[lightgray] (1.73205,0) -- (1.73205,-10);
\draw[black, line width=0.4mm, fill=white] (2.59808,-5.5) circle (0.288);
\draw[black, line width=0.4mm, fill=white] (3.4641,-3) circle (0.288);
\draw[black, line width=0.4mm, fill=white] (3.4641,-5) circle (0.288);
\draw[black, line width=0.4mm, fill=white] (3.4641,-8) circle (0.288);
\draw[line width=0.4mm] (4.58469,-1.24544) -- (4.07557,-1.75456);
\draw[line width=0.4mm] (4.07557,-1.24544) -- (4.58469,-1.75456);
\node[align=left] at (3.81051,-0.8) {\normalsize $(\ell,d)$};
\draw[black, line width=0.4mm, fill=white] (4.33013,-2.5) circle (0.288);
\draw[black, line width=0.4mm, fill=white] (4.33013,-4.5) circle (0.288);
\draw[black, line width=0.4mm, fill=white] (4.33013,-7.5) circle (0.288);
\draw[black, line width=0.4mm, fill=white] (5.19615,-2) circle (0.288);
\draw[black, line width=0.4mm, fill=white] (5.19615,-3) circle (0.288);
\draw[black, line width=0.4mm, fill=white] (5.19615,-4) circle (0.288);
\draw[black, line width=0.4mm, fill=white] (5.19615,-5) circle (0.288);
\draw[black, line width=0.4mm, fill=white] (5.19615,-7) circle (0.288);
\draw[black, line width=0.4mm, fill=white] (5.19615,-8) circle (0.288);
\draw[black, line width=0.5mm] (5.19615,-3.7) -- (5.19615,-3.3);
\draw[black, line width=0.5mm] (4.58993,-4.35) -- (4.93634,-4.15);
\draw[black, line width=0.5mm] (4.58993,-4.65) -- (4.93634,-4.85);
\draw[black, line width=0.5mm] (5.19615,-7.7) -- (5.19615,-7.3);
\draw[black, line width=0.5mm] (3.72391,-7.85) -- (4.07032,-7.65);
\draw[black, line width=0.5mm] (3.72391,-4.85) -- (4.07032,-4.65);
\draw[black, line width=0.5mm] (4.58993,-7.35) -- (4.93634,-7.15);
\draw[black, line width=0.5mm] (4.58993,-7.65) -- (4.93634,-7.85);
\draw[black, line width=0.5mm] (5.19615,-2.7) -- (5.19615,-2.3);
\draw[black, line width=0.5mm] (3.72391,-2.85) -- (4.07032,-2.65);
\draw[black, line width=0.5mm] (4.58993,-2.35) -- (4.93634,-2.15);
\draw[black, line width=0.5mm] (4.58993,-2.65) -- (4.93634,-2.85);
\draw[black, line width=0.5mm] (5.19615,-4.7) -- (5.19615,-4.3);
\draw[black, line width=0.5mm] (2.85788,-5.35) -- (3.20429,-5.15);
\path [pattern=north east lines, pattern color=yellow] (4.33013,-1.5) -- (0,-4) -- (0,-10) -- (3.4641,-10) -- (4.33013,-9.5) -- cycle;
\draw[dashed, line width=0.6mm] (4.33013,-1.5) -- (0,-4);
\draw[dashed, line width=0.6mm] (4.33013,-1.5) -- (4.33013,-9.5);
  \end{tikzpicture}
  \end{center}
  \caption{After line formation, before line merging.}
  \label{fig:line_merging_before}
\end{subfigure}%
\hfill
\begin{subfigure}{.5\linewidth}
  \centering
  \begin{center}
  \begin{tikzpicture}[x=0.4cm,y=0.4cm]
  \draw[lightgray] (0,-7) -- (6.06218,-3.5);
\draw[lightgray] (0,-7) -- (5.19615,-10);
\draw[lightgray] (0,-10) -- (6.06218,-6.5);
\draw[lightgray] (3.4641,0) -- (6.06218,-1.5);
\draw[lightgray] (3.4641,0) -- (3.4641,-10);
\draw[lightgray] (5.19615,-10) -- (6.06218,-9.5);
\draw[lightgray] (0,-4) -- (6.06218,-0.5);
\draw[lightgray] (0,-4) -- (6.06218,-7.5);
\draw[lightgray] (3.4641,-10) -- (6.06218,-8.5);
\draw[lightgray] (5.19615,0) -- (6.06218,-0.5);
\draw[lightgray] (5.19615,0) -- (5.19615,-10);
\draw[lightgray] (0,-1) -- (1.73205,0);
\draw[lightgray] (0,-1) -- (6.06218,-4.5);
\draw[lightgray] (0,-5) -- (6.06218,-1.5);
\draw[lightgray] (0,-5) -- (6.06218,-8.5);
\draw[lightgray] (0.866025,-0.5) -- (0.866025,-9.5);
\draw[lightgray] (0,-8) -- (6.06218,-4.5);
\draw[lightgray] (0,-8) -- (3.4641,-10);
\draw[lightgray] (0,-2) -- (3.4641,0);
\draw[lightgray] (0,-2) -- (6.06218,-5.5);
\draw[lightgray] (2.59808,-0.5) -- (2.59808,-9.5);
\draw[lightgray] (4.33013,-0.5) -- (4.33013,-9.5);
\draw[lightgray] (6.06218,-0.5) -- (6.06218,-9.5);
\draw[lightgray] (0,-6) -- (6.06218,-2.5);
\draw[lightgray] (0,-6) -- (6.06218,-9.5);
\draw[lightgray] (0,-9) -- (6.06218,-5.5);
\draw[lightgray] (0,-9) -- (1.73205,-10);
\draw[lightgray] (0,-3) -- (5.19615,0);
\draw[lightgray] (0,-3) -- (6.06218,-6.5);
\draw[lightgray] (0,0) -- (6.06218,-3.5);
\draw[lightgray] (0,0) -- (0,-10);
\draw[lightgray] (1.73205,-10) -- (6.06218,-7.5);
\draw[lightgray] (1.73205,0) -- (6.06218,-2.5);
\draw[lightgray] (1.73205,0) -- (1.73205,-10);
\draw[black, line width=0.4mm, fill=white] (0.866025,-7.5) circle (0.288);
\draw[black, line width=0.4mm, fill=white] (1.73205,-7) circle (0.288);
\draw[black, line width=0.4mm, fill=white] (2.59808,-6.5) circle (0.288);
\draw[black, line width=0.4mm, fill=white] (3.4641,-6) circle (0.288);
\draw[black, line width=0.4mm, fill=white] (3.4641,-9) circle (0.288);
\draw[line width=0.4mm] (4.58469,-1.24544) -- (4.07557,-1.75456);
\draw[line width=0.4mm] (4.07557,-1.24544) -- (4.58469,-1.75456);
\node[align=left] at (3.81051,-0.8) {\normalsize $(\ell,d)$};
\draw[black, line width=0.4mm, fill=white] (4.33013,-5.5) circle (0.288);
\draw[black, line width=0.4mm, fill=white] (4.33013,-8.5) circle (0.288);
\draw[black, line width=0.4mm, fill=white] (5.19615,-2) circle (0.288);
\draw[black, line width=0.4mm, fill=white] (5.19615,-3) circle (0.288);
\draw[black, line width=0.4mm, fill=white] (5.19615,-4) circle (0.288);
\draw[black, line width=0.4mm, fill=white] (5.19615,-5) circle (0.288);
\draw[black, line width=0.4mm, fill=white] (5.19615,-7) circle (0.288);
\draw[black, line width=0.4mm, fill=white] (5.19615,-8) circle (0.288);
\draw[black, line width=0.5mm] (1.99186,-6.85) -- (2.33827,-6.65);
\draw[black, line width=0.5mm] (5.19615,-3.7) -- (5.19615,-3.3);
\draw[black, line width=0.5mm] (3.72391,-8.85) -- (4.07032,-8.65);
\draw[black, line width=0.5mm] (3.72391,-5.85) -- (4.07032,-5.65);
\draw[black, line width=0.5mm] (4.58993,-5.35) -- (4.93634,-5.15);
\draw[black, line width=0.5mm] (5.19615,-7.7) -- (5.19615,-7.3);
\draw[black, line width=0.5mm] (5.19615,-2.7) -- (5.19615,-2.3);
\draw[black, line width=0.5mm] (5.19615,-4.7) -- (5.19615,-4.3);
\draw[black, line width=0.5mm] (2.85788,-6.35) -- (3.20429,-6.15);
\draw[black, line width=0.5mm] (1.12583,-7.35) -- (1.47224,-7.15);
\draw[black, line width=0.5mm] (4.58993,-8.35) -- (4.93634,-8.15);
\path [pattern=north east lines, pattern color=yellow] (4.33013,-1.5) -- (0,-4) -- (0,-10) -- (3.4641,-10) -- (4.33013,-9.5) -- cycle;
\draw[dashed, line width=0.6mm] (4.33013,-1.5) -- (0,-4);
\draw[dashed, line width=0.6mm] (4.33013,-1.5) -- (4.33013,-9.5);
  \end{tikzpicture}
  \end{center}
  \caption{After line merging.}
  \label{fig:line_merging_after}
\end{subfigure}%
\caption{The results after the line formation and line merging phases when the comb procedure is applied to a combable position $(\ell,d)$.}
\end{figure}
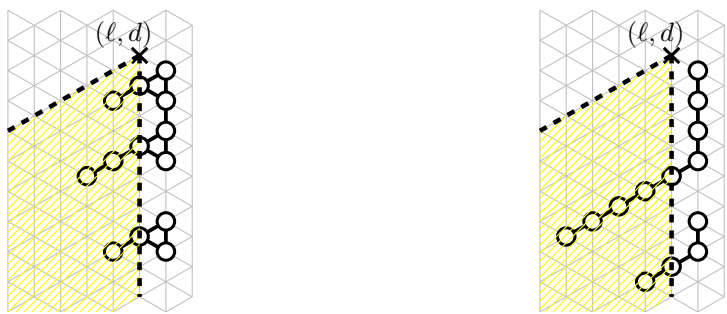

The description of the comb procedure above gives us the following Lemma:

\begin{lemma}
\label{lem:combmakescombed}
After executing a comb operation on a combable position $(\ell,d)$, the position $(\ell,d)$ will be combed (Definition~\ref{defn:combed}).
\end{lemma}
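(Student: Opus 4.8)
The plan is to track the two phases of the comb operation in sequence and to show that they install the three conditions of Definition~\ref{defn:combed} for $(\ell,d)$, all the while preserving the invariant that $(\ell+1,d+1)$ stays combed. The hypothesis that $(\ell,d)$ is combable (Definition~\ref{defn:combable}) gives exactly the two facts this argument needs at the start: $(\ell+1,d+1)$ is already combed, and the site directly above $(\ell,d)$ is empty. The geometric observation driving everything is that the \residualregion{} of $(\ell,d)$ differs from the \residualregion{} of $(\ell+1,d+1)$ by precisely the lane-$\ell$ sites at depth $\geq d$ --- the sites on which the set $L$ of the line-formation phase lives. Consequently the combedness of $(\ell+1,d+1)$ already governs the whole \residualregion{} of $(\ell,d)$ except lane $\ell$, and the comb operation is exactly the task of regularizing that single lane.

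For the line-formation phase I would run an induction on turns with the invariant that, after each turn, $(\ell+1,d+1)$ remains combed and the only irregular agents of the \residualregion{} lie on lane $\ell$. Termination is immediate once I observe that every turn removes exactly one agent from $L$: in the non-shiftable case the chosen agent $p$ is rerouted off lane $\ell$, and in the shiftable case the shift either empties $p$ directly or reduces $p$ to the non-shiftable case within the same turn, which then reroutes it off lane $\ell$. The per-turn content is purely local --- I would verify that the down-right propagation of a shift, the up-left propagation of a shift, and the down-left/down reroute of a non-shiftable agent are each a single legal compression move and each leave $(\ell+1,d+1)$ combed, which are precisely the neighbourhood checks drawn in the procedure description and in Figures~\ref{fig:shiftable_1}--\ref{fig:non_shiftable}. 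When the phase halts, every component of $L$ is a singleton; combining this with the emptiness of the site above $(\ell,d)$ and the combedness of $(\ell+1,d+1)$ yields conditions~1 and~2 of Definition~\ref{defn:combed}, since each agent of the \residualregion{} then lies on a straight down-left line and no agent sits directly above a topmost agent.

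For the line-merging phase I would show that it supplies condition~3 while preserving conditions~1 and~2 and the invariant. Sweeping the down-left lines from lowest to highest, each line is either slid down by one step (legal exactly when no line lies immediately below it) or merged into the line immediately below it; neither operation places an agent above a topmost agent nor breaks the straight-line structure, so conditions~1 and~2 survive, and $(\ell+1,d+1)$ stays combed because only agents already belonging to down-left lines are relocated. After the sweep, every line hangs from an agent of the column $C$ immediately to the right of the \residualregion{} with no agent directly below it, which is condition~3. With all three conditions in force, $(\ell,d)$ is combed, as required.

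The part I expect to be genuinely laborious, rather than conceptually hard, is the move-validity bookkeeping: confirming that each elementary relocation in both forms of the shift and in the non-shiftable reroute is a legal compression move that never disturbs the combedness of $(\ell+1,d+1)$, and, for line merging, that the downward slides and the merges are always executable and terminate. These reduce to the local case distinctions of the procedure description together with the maintained invariant; no single step is deep, but the exhaustive case analysis is where the bulk of the work lies.
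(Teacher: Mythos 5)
Your proposal is correct and follows essentially the same route as the paper: the paper offers no separate proof, stating that the lemma follows directly from the procedure description, and your argument reconstructs exactly that description's logic — the invariant that $(\ell+1,d+1)$ stays combed through each turn of line formation, termination because each turn removes one agent from $L$, conditions 1 and 2 of Definition~\ref{defn:combed} holding once every component of $L$ is a singleton, and condition 3 supplied by the lowest-to-highest sweep of the line-merging phase. Your explicit observation that the \residualregion{} of $(\ell,d)$ exceeds that of $(\ell+1,d+1)$ by exactly the lane-$\ell$ sites at depth at least $d$ is a nice way of making precise why regularizing lane $\ell$ suffices, but it is the same argument the paper leaves implicit.
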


The following Lemma states that combs only affect sites below it.

\begin{figure}[h]
%\centering
%\includegraphics[width=.3\linewidth]{diagrams_irreducibility/untouched_region.png}
\begin{center}
\begin{tikzpicture}[x=0.45cm,y=0.45cm]
\draw[lightgray] (5.19615,-9) -- (10.3923,-6);
\draw[lightgray] (0,-7) -- (10.3923,-1);
\draw[lightgray] (0,-7) -- (3.4641,-9);
\draw[lightgray] (3.4641,0) -- (10.3923,-4);
\draw[lightgray] (3.4641,0) -- (3.4641,-9);
\draw[lightgray] (7.79423,-0.5) -- (7.79423,-8.5);
\draw[lightgray] (0,-4) -- (6.9282,0);
\draw[lightgray] (0,-4) -- (8.66025,-9);
\draw[lightgray] (5.19615,0) -- (10.3923,-3);
\draw[lightgray] (5.19615,0) -- (5.19615,-9);
\draw[lightgray] (9.52628,-0.5) -- (9.52628,-8.5);
\draw[lightgray] (0,-1) -- (1.73205,0);
\draw[lightgray] (0,-1) -- (10.3923,-7);
\draw[lightgray] (6.9282,-9) -- (10.3923,-7);
\draw[lightgray] (8.66025,-9) -- (10.3923,-8);
\draw[lightgray] (0,-5) -- (8.66025,0);
\draw[lightgray] (0,-5) -- (6.9282,-9);
\draw[lightgray] (0.866025,-0.5) -- (0.866025,-8.5);
\draw[lightgray] (0,-8) -- (10.3923,-2);
\draw[lightgray] (0,-8) -- (1.73205,-9);
\draw[lightgray] (0,-2) -- (3.4641,0);
\draw[lightgray] (0,-2) -- (10.3923,-8);
\draw[lightgray] (2.59808,-0.5) -- (2.59808,-8.5);
\draw[lightgray] (6.9282,0) -- (10.3923,-2);
\draw[lightgray] (6.9282,0) -- (6.9282,-9);
\draw[lightgray] (4.33013,-0.5) -- (4.33013,-8.5);
\draw[lightgray] (8.66025,0) -- (10.3923,-1);
\draw[lightgray] (8.66025,0) -- (8.66025,-9);
\draw[lightgray] (6.06218,-0.5) -- (6.06218,-8.5);
\draw[lightgray] (10.3923,0) -- (10.3923,-9);
\draw[lightgray] (0,-6) -- (10.3923,0);
\draw[lightgray] (0,-6) -- (5.19615,-9);
\draw[lightgray] (0,-9) -- (10.3923,-3);
\draw[lightgray] (0,-3) -- (5.19615,0);
\draw[lightgray] (0,-3) -- (10.3923,-9);
\draw[lightgray] (3.4641,-9) -- (10.3923,-5);
\draw[lightgray] (1.73205,-9) -- (10.3923,-4);
\draw[lightgray] (0,0) -- (10.3923,-6);
\draw[lightgray] (0,0) -- (0,-9);
\draw[lightgray] (1.73205,0) -- (10.3923,-5);
\draw[lightgray] (1.73205,0) -- (1.73205,-9);
\draw[black!40!green, dotted, line width=0.5mm] (0,-9) -- (10.3923,-3);
\draw[black!40!green, dotted, line width=0.5mm] (1.73205,0) -- (10.3923,-5);
\draw[black!40!green, dotted, line width=0.5mm] (8.66025,0) -- (8.66025,-9);
\draw[line width=0.4mm] (3.71866,-2.74544) -- (3.20954,-3.25456);
\draw[line width=0.4mm] (3.20954,-2.74544) -- (3.71866,-3.25456);
\node[align=left] at (4.24352,-3.6) {\normalsize $(\ell,d)$};
\draw[fill=black] (6.9282,-7) circle (0.288);
\node[align=left] at (6.9282,-6.3) {\normalsize $R_{(\ell_2,d_2)}$};
\draw[black, line width=0.4mm, fill=white] (8.66025,-4) circle (0.336);
\draw[black, line width=0.32mm] (8.66025,-4) circle (0.24);
\node[align=left] at (8.66025,-4) {\scriptsize $f$};
\path [pattern=north east lines, pattern color=blue] (0,-4) -- (3.4641,-2) -- (10.3923,-6) -- (10.3923,0) -- (0,0) -- cycle;
\draw[dashed, line width=0.6mm] (3.4641,-2) -- (0,-4);
\draw[dashed, line width=0.6mm] (3.4641,-2) -- (10.3923,-6);
\draw[dashed, line width=0.6mm] (3.4641,-3) -- (0,-5);
\draw[dashed, line width=0.6mm] (3.4641,-3) -- (3.4641,-9);
\path [pattern=north east lines, pattern color=red] (3.4641,-9) -- (6.9282,-7) -- (10.3923,-9) -- cycle;
\draw[dashed, line width=0.6mm] (6.9282,-7) -- (3.4641,-9);
\draw[dashed, line width=0.6mm] (6.9282,-7) -- (10.3923,-9);
\end{tikzpicture}
\end{center}
\caption{Illustration of the unaffected region above the position to be combed $(\ell,d)$ from Lemma~\ref{lem:aboveunaffected}, and an unenterable region $R_{(\ell_2,d_2)}$ from Lemma~\ref{lem:unenterableregion} corresponding to some position $(\ell_2,d_2)$ strictly to the right of $(\ell,d)$. Both of these regions include the boundaries drawn in the Figure.}
\label{fig:untouched_region}
\end{figure}
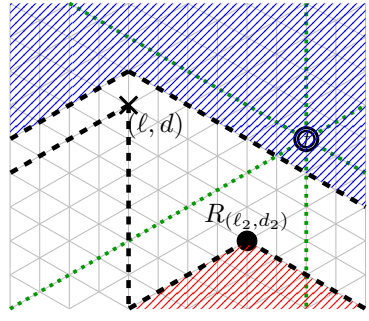

\begin{lemma}[Unaffected Region Above]
\label{lem:aboveunaffected}
Consider the two half-lines extending down-left and down-right from a combable position $(\ell,d)$ as in Figure~\ref{fig:untouched_region}. A comb operation on $(\ell,d)$ will not affect (will not move any agent into or out of) any site above these lines, not including the lines themselves.
In addition, if the site $(\ell-1,d)$ (one step directly down-right) is occupied, no site on the half-line going down-right from $(\ell-1,d)$, including $(\ell-1,d)$ itself, will be affected either.
\end{lemma}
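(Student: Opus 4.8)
The plan is to regard the comb operation as a finite concatenation of elementary single-agent moves and to show that no such move ever pushes an agent into, or pulls an agent out of, a site lying strictly above the two half-lines. In the $(lane,depth)$ coordinates the six step directions read as down $=(\ell,d+1)$, up $=(\ell,d-1)$, down-left $=(\ell+1,d+1)$, up-right $=(\ell-1,d-1)$, down-right $=(\ell-1,d)$ and up-left $=(\ell+1,d)$; the first two are forced by the definition of $depth$, down-left is the combable direction of Definition~\ref{defn:combable}, and the last two are pinned down by the fact that a ``two steps down-right'' jump in a shift lands at $(\ell-2,d)$. With this bookkeeping the down-left half-line is $\{(\ell+t,d+t):t\ge 0\}$ and the down-right half-line is $\{(\ell-t,d):t\ge 0\}$, so the set of sites strictly above both is an open upward wedge with apex $(\ell,d)$. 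It therefore suffices to verify, phase by phase, that every move keeps the moving agent in the closed region $A=\{\text{depth}\ge d\}\cap\{\text{on or below the down-left half-line}\}$, which is disjoint from the wedge, the sole exception being certain column-$C$ hooks in the merging phase that I defer to the boundary argument below.

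For the \emph{line formation} phase, every agent that moves is either an agent of $L$ (which sits on lane $\ell$ at depth $\ge d$) or an agent reached from one of these. A shift moves agents only in the down-right and up-left directions, both of which fix the depth; since the sequence $p_1,p_2,\dots$ starts at $p_1\in L$ with $\mathrm{depth}(p_1)\ge d$ and the steps only change the lane, every shifted agent stays at a fixed depth $\ge d$, hence inside $A$ and on or below the down-right half-line. A non-shiftable reroute moves its agent only by down and down-left steps, each of which weakly increases $d-\ell$ and keeps $\mathrm{depth}\ge d$, so the agent never leaves the \residualregion{} of $(\ell,d)$. Consequently line formation touches no site of depth $<d$, and in particular nothing strictly above either half-line.

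For the \emph{line merging} phase I would first invoke that $(\ell,d)$ is combable, so $(\ell+1,d+1)$ is combed, together with Lemma~\ref{lem:combmakescombed}, to pin down the configuration entering this phase: by Definition~\ref{defn:combed} the relevant agents form straight lines lying in the \residualregion{} of $(\ell,d)$ and hanging from the column $C$ immediately to the right. Line merging only slides such a line downward and then merges it into the line below, and every constituent move is a down step (or a short routing that terminates lower), so depth only increases and the lines remain on or below the down-left half-line throughout. Thus the bulk of merging also never enters the wedge.

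The one genuinely delicate point---and the step I expect to be the \textbf{main obstacle}---is the behaviour exactly on and just above the depth-$d$ row to the right of the apex, which is also what underlies the additional clause. Two things must be checked there. First, the tops of the column-$C$ lines may sit as high as $(\ell-1,d-1)$, which lies on the boundary of the unaffected region; one must use the combed structure of $(\ell+1,d+1)$ (and the emptiness of $(\ell,d-1)$ guaranteed by combability) to argue that sliding these lines down never vacates such a wedge-boundary site. Second, the only moves capable of occupying a site $(\ell-t,d)$ with $t\ge 1$ are down-right shifts anchored at $p_1=(\ell,d)$; but by the definition of \emph{shiftable}, if $(\ell,d)$ is shiftable then its only occupied neighbours are the one below and the one up-right, forcing its down-right neighbour $(\ell-1,d)$ to be empty. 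Hence, under the hypothesis that $(\ell-1,d)$ is occupied, no such anchored shift can run and the agent at $(\ell-1,d)$ is never displaced, so the down-right half-line from $(\ell-1,d)$ is left untouched, which is exactly the additional claim. Carefully discharging these apex and boundary-row cases is where the real work lies; the remaining geometry is immediate from the direction bookkeeping above.
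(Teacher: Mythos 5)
Your proof takes the same route as the paper's: classify the elementary moves of the comb, observe that every non-shift move stays inside the residual region of $(\ell,d)$ (which is disjoint from the protected wedge), that shift moves preserve depth and are anchored on lane $\ell$ at depth at least $d$, and derive the additional clause from the fact that $(\ell,d)$ can only be shiftable when its down-right neighbour $(\ell-1,d)$ is empty. These observations are exactly the paper's (much shorter) proof, and your handling of your second ``delicate point'' coincides with it essentially verbatim.

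The only real divergence is that the step you single out as the main obstacle, and explicitly defer, is not an obstacle at all. The anchors of the column-$C$ components can indeed sit at $(\ell-1,d-1)$, strictly inside the protected wedge, but the line-merging phase never applies a move to any agent of column $C$: the ``lines'' that get slid downward or merged consist solely of agents lying in lanes $\geq \ell$, i.e., inside the residual region, and they merely hang from the column-$C$ agents, which stay put throughout (compare Figures~\ref{fig:line_merging_before} and \ref{fig:line_merging_after}, where column $C$ is identical before and after). Nothing can move \emph{into} $(\ell-1,d-1)$ either, since by your own bookkeeping all moved agents remain at depth $\geq d$ or in lanes $\geq \ell$. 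So no appeal to the combed structure of $(\ell+1,d+1)$ or to the emptiness of $(\ell,d-1)$ is needed; the wedge-boundary sites are untouched for the trivial reason that no move is ever applied to them or into them. With that one observation the work you defer at the end evaporates, and your argument is complete and matches the paper's.
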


\begin{proof}
We observe that in the comb procedure, aside from the ``shift'' moves, all of the moves occur only within the residual region of $(\ell,d)$. The shift moves only go down-right or up-left, and if the shift originates from some position $p$, the shift does not move any agent up-left from $p$. Hence, the shifts also do not affect any site above the two half-lines described in the Lemma.

We observe that the only part of the procedure that can affect agents on the half-line going down-right from $(\ell-1,d)$ is a potential shift move on an agent on position $(\ell,d)$. However, if $(\ell-1,d)$ is occupied, $(\ell,d)$ will not be shiftable, and so this shift move will not occur.
\end{proof}

To apply a sequence of comb operations to ``push'' agents down towards the \targetspine{}, we only need to find a (not necessarily straight) ``line'' of vacant positions. The following definition and lemma makes this more formal.

\begin{lemma}[Combable Sequence]
Consider a sequence of pairs $((x_1,y_1),(x_2,y_2),\dots,(x_k,y_k))$, where each item in the sequence represents a $(lane,depth)$ pair. We call this a \emph{combable sequence} if:
\begin{enumerate}
\item $x_1$ vertically coincides with the leftmost agent of the configuration.
\item $x_{i+1} = x_i-1$ for all $i \in \{1,2,\dots,k-1\}$ and $x_k > 0$.
\item $y_i \geq 0$ for all $i \in \{1,2,\dots,k\}$.
\item $y_{i+1} \in \{y_i,y_i-1\}$ for all $i \in \{2,\dots,k\}$.
\item The locations $(x_i,y_i-1)$ are all vacant.
\item For each $i \in \{1,2,\dots,k\}$, if $y_i=0$, then the position $(x_i-1,0)$ must be occupied by an agent.
\end{enumerate}
An example of such a sequence is illustrated in Figure~\ref{fig:spine_comb}.
\end{lemma}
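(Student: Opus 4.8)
The statement to establish is that any combable sequence can in fact be combed: the positions $(x_1,y_1),\dots,(x_k,y_k)$ can be fed one at a time to the comb operation, in increasing order of the index, with each position combable (Definition~\ref{defn:combable}) at the moment it is combed, so that by Lemma~\ref{lem:combmakescombed} it becomes combed. The plan is an induction on the index $i$, carried under the invariant that after combing $(x_1,y_1),\dots,(x_m,y_m)$ each of these positions is combed and no site strictly above the staircase traced by the sequence has been disturbed. The base case is immediate: condition~1 places $x_1$ at the leftmost occupied lane, so the \residualregion{} of $(x_1+1,y_1+1)$ contains no agents and is vacuously combed; with $(x_1,y_1-1)$ empty by condition~5, the position $(x_1,y_1)$ is combable and can be combed.

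For the inductive step I must verify the two clauses of Definition~\ref{defn:combable} for $(x_i,y_i)$. The second clause, that $(x_i,y_i-1)$ is empty, holds initially by condition~5 and is preserved by every earlier comb: for $j<i$ we have $x_j>x_i$ and, since the depths are non-increasing in $i$ by condition~4, $y_j\ge y_i$, and a direct computation then places $(x_i,y_i-1)$ strictly above both half-lines emanating from $(x_j,y_j)$ (and above the down-right half-line from $(x_j-1,y_j)$), so Lemma~\ref{lem:aboveunaffected} guarantees it is untouched. The first clause requires $(x_i+1,y_i+1)=(x_{i-1},y_i+1)$ to be combed. Condition~4 leaves two cases: if $y_{i-1}=y_i+1$ this is precisely the position $(x_{i-1},y_{i-1})$ just combed in the previous step, which is combed by the invariant; if $y_{i-1}=y_i$ it is the position one step directly below the combed position $(x_{i-1},y_{i-1})$, so I must show that the position immediately below a combed position is again combed.

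This downward propagation of combedness, together with the dual requirement that combing $(x_i,y_i)$ not destroy the combedness of the earlier positions, is where I expect the difficulty to concentrate. The \residualregion{} of $(\ell,d+1)$ is that of $(\ell,d)$ with its topmost down-left diagonal deleted, so conditions~2 and~3 of Definition~\ref{defn:combed} are inherited by the surviving lines, while condition~1 needs a short argument that the sites exposed above the new topmost agents are empty, using the ``no agent directly below'' property of the anchors in the column to the right; condition~6 is exactly what supplies such an anchor in the boundary case $y_i=0$, where $(x_i,y_i)$ lies on the \sourcespine{} and the forced agent at $(x_i-1,0)$ plays that role. Rather than track each earlier position separately --- their \residualregion{}s only partly nest inside that of $(x_i,y_i)$ when $y_{i-1}=y_i$ --- I would strengthen the invariant to a single geometric statement (the region above the staircase is empty, and everything on and below it is organised into down-left lines anchored at the lane to the right) and show this whole invariant is restored by one comb, combining Lemma~\ref{lem:combmakescombed} for the part inside the current \residualregion{} with Lemma~\ref{lem:aboveunaffected} for the untouched part above. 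Establishing that this global invariant is preserved by a single comb is the crux of the argument.
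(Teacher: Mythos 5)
Your skeleton matches the paper's proof of this claim (which the paper carries out in the follow-up lemma ``Combability of Each Step in a Sequence''): induction along the sequence, the base case at the leftmost lane, Lemma~\ref{lem:aboveunaffected} to show prior combs never disturb $(x_i,y_i-1)$ (nor $(x_i-1,0)$ when $y_i=0$), and the case split $y_{i-1}=y_i+1$ versus $y_{i-1}=y_i$. The gap is in the case $y_{i-1}=y_i$. You reduce it to the general claim that \emph{the position immediately below a combed position is again combed}, to be verified ``using the `no agent directly below' property of the anchors.'' That general claim is false, and the anchor property cannot rescue it. A combed position $(\ell,d)$ may carry a line of agents along the top diagonal of its own \residualregion{}: take the line $(\ell+t,d+t)$, $t\ge 0$, anchored at $(\ell-1,d-1)$ with $(\ell-1,d)$ empty, together with a second line $(\ell+t,d+2+t)$, $t \ge 0$, anchored at $(\ell-1,d+1)$ with $(\ell-1,d+2)$ empty. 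All three conditions of Definition~\ref{defn:combed} hold at $(\ell,d)$, and the two anchor conditions are mutually consistent, so your proposed argument has nothing to object to here. Yet $(\ell,d+1)$ is \emph{not} combed: deleting the top diagonal from the region does not remove its agents from the lattice, so the topmost agent of lane $\ell$ in the new region is $(\ell,d+2)$ and the occupied site $(\ell,d)$ sits directly above it, violating condition~1.

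What actually closes this case --- and what the paper uses --- is condition~5 of the combable sequence rather than the anchor property: $(x_i,y_i-1)$ is vacant (and remains vacant by Lemma~\ref{lem:aboveunaffected}), and $(x_i,y_i-1)$ is exactly the site at which any line lying along the top diagonal of the \residualregion{} of $(x_{i-1},y_{i-1})=(x_i+1,y_i)$ would have to be anchored by condition~3 of Definition~\ref{defn:combed}. Hence that diagonal is entirely unoccupied, and only then does passing from $(x_i+1,y_i)$ down to $(x_i+1,y_i+1)$ preserve all three conditions. You never invoke the vacancy of $(x_i,y_i-1)$ at this point, which is the one idea your write-up is missing. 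Your fallback plan does not repair this: the strengthened invariant ``the region above the staircase is empty'' is not implied by the hypotheses (only the single site above each $(x_i,y_i)$ is required vacant) and is false in the intended application, where the \sourcespine{} and the rest of the configuration lie above the staircase of a \spinecomb{}; you also concede that its preservation --- ``the crux'' --- is unproven. A final simplification: maintaining combedness of \emph{all} earlier positions is unnecessary, since only $(x_{i-1},y_{i-1})$, combed immediately before step $i$ by Lemma~\ref{lem:combmakescombed}, is ever used; the ``dual requirement'' that pushes you toward the global invariant can simply be dropped.
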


\begin{definition}[Combing a Sequence]
\label{proc:combingasequence}
Consider a combable sequence $((x_1,y_1),(x_2,y_2),\dots,(x_k,y_k))$. Combing this sequence refers to combing each pair $(x_i,y_i)$ in succession. The following Lemma justifies that this is always possible.
\end{definition}

\begin{lemma}[Combability of Each Step in a Sequence]
When combing a combable sequence $((x_1,y_1),(x_2,y_2),\dots,(x_k,y_k))$ as described in Definition~\ref{proc:combingasequence}, when $(x_i,y_i)$ is the next position to be combed, $(x_i,y_i)$ will be combable.
\end{lemma}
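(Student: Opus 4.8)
My plan is to argue by induction on $i$: assuming the positions $(x_1,y_1),\dots,(x_i,y_i)$ have each been combable, and hence combed by Lemma~\ref{lem:combmakescombed}, at the moment they were reached, I would show that $(x_{i+1},y_{i+1})$ is combable when it is reached. Unwinding Definition~\ref{defn:combable} for $(x_{i+1},y_{i+1})$ and using $x_{i+1}+1=x_i$ (hypothesis 2), exactly two things must be checked: that the down-left neighbour $(x_i,\,y_{i+1}+1)$ is combed, and that the site directly above $(x_{i+1},y_{i+1})$, namely $(x_{i+1},\,y_{i+1}-1)$, is empty. The side conditions $\ell>0,\ d\ge 0$ demanded by Definition~\ref{defn:combable} hold throughout, since $x_{i+1}>0$ follows from $x_k>0$ and $x_{j+1}=x_j-1$, and $y_{i+1}\ge 0$ is hypothesis 3. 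For the base case $i=1$, hypothesis 1 makes lane $x_1$ the leftmost occupied column, so the residual region of $(x_1+1,y_1+1)$ lies wholly in lanes $\ge x_1+1$ and contains no agents; an empty residual region satisfies all three clauses of Definition~\ref{defn:combed} vacuously, and hypothesis 5 gives emptiness of the site above $(x_1,y_1)$, so $(x_1,y_1)$ is combable.

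For the emptiness of the site above in the inductive step, $(x_{i+1},y_{i+1}-1)$ is empty in the initial configuration by hypothesis 5, so it suffices to show no earlier comb moves an agent into it. By Lemma~\ref{lem:aboveunaffected} a comb at $(x_j,y_j)$ only touches sites on or below the two half-lines descending from $(x_j,y_j)$. For $j\le i$ we have $x_j-x_{i+1}=(i+1)-j\ge 1$, so lane $x_{i+1}$ lies strictly to the right of the down-left half-line from $(x_j,y_j)$; and since hypothesis 4 forces $y_j\ge y_{i+1}$, the depth of the down-right half-line from $(x_j,y_j)$ at lane $x_{i+1}$ is $y_j+(x_j-x_{i+1})\ge y_{i+1}+1>y_{i+1}-1$. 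Thus $(x_{i+1},y_{i+1}-1)$ sits strictly above the descending wedge of every earlier comb and is never disturbed.

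For the down-left neighbour, hypothesis 4 gives $y_{i+1}\in\{y_i,\,y_i-1\}$. If $y_{i+1}=y_i-1$, the neighbour $(x_i,y_{i+1}+1)$ is the position $(x_i,y_i)$ just combed, which is combed by Lemma~\ref{lem:combmakescombed} and stays so as no moves intervene. If $y_{i+1}=y_i$, the neighbour is $(x_i,y_i+1)$, one step directly below the position just combed, and here I would invoke a monotonicity lemma: \emph{if $(\ell,d)$ is combed then so is $(\ell,d+1)$}. The structural point is that the residual region of $(\ell,d+1)$ is the residual region of $(\ell,d)$ with only its top diagonal boundary deleted, and that every maximal line in Definition~\ref{defn:combed} runs parallel to this boundary (each advances one lane per step in depth). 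Deleting the boundary therefore removes at most one whole line and leaves the others intact, so clause 2 transfers; clause 3 transfers because the surviving lines keep their original right-edge anchors, each still with no agent directly below; and clause 1 transfers because clause 3 for $(\ell,d)$ forbids two lines anchored in vertically adjacent cells of the column just right of the region, which is precisely the configuration that could place an agent directly above a topmost agent of the smaller region.

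The main obstacle is exactly this monotonicity lemma. Its proof is a clause-by-clause check against Definition~\ref{defn:combed}, and the delicate input is clause 3 of the combed definition at $(\ell,d)$: a topmost agent $a$ of a column of the shrunk region either has the site above it inside that region (empty by maximality) or on the deleted boundary, and in the latter case an occupied boundary site would force a line anchored directly above $a$'s anchor, contradicting ``no agent directly below the anchor.'' Once this lemma is in hand, the two cases for the down-left neighbour close the induction, with the preservation argument of the previous paragraph and the vacuous base case completing the proof.
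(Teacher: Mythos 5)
Your proof is correct, and its skeleton coincides with the paper's: induction along the sequence, the same vacuous base case at $(x_1,y_1)$, preservation of the empty site above each position via Lemma~\ref{lem:aboveunaffected}, and the case split $y_{i+1}\in\{y_i-1,\,y_i\}$, with the first case handled identically (the down-left neighbour is exactly the position just combed). Where you genuinely diverge is the case $y_{i+1}=y_i$. The paper stays conditional: it uses the vacancy of the site $(x_{i+1},y_{i+1}-1)$ directly above the position about to be combed (hypothesis 5, preserved by Lemma~\ref{lem:aboveunaffected}) to argue that the entire top diagonal of the residual region of the just-combed $(x_i,y_i)$ is empty --- any agent on that diagonal would, by clause 3 of Definition~\ref{defn:combed}, lie on a line whose anchor in lane $x_{i+1}$ is precisely that vacant site --- so deleting an empty diagonal trivially leaves $(x_i,y_i+1)$ combed. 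You instead prove an unconditional monotonicity lemma (combed at $(\ell,d)$ implies combed at $(\ell,d+1)$), whose only delicate point is clause 1; your resolution --- clause 3 forbids anchors in vertically adjacent cells of the column just right of the region, which is exactly what an occupied site above a new topmost agent of the shrunk region would require --- is sound. Your lemma is stronger and reusable (it makes no use of the sequence's vacancy hypothesis at that step), at the cost of that extra structural argument; the paper's route is shorter because it spends hypothesis 5 instead. One slip worth fixing: in your preservation argument the down-right half-line from $(x_j,y_j)$ stays at depth coordinate $y_j$, since down-right is $(\ell,d)\mapsto(\ell-1,d)$ in lane/depth coordinates, not $y_j+(x_j-x_{i+1})$; the conclusion survives unchanged, because $y_j\ge y_{i+1}>y_{i+1}-1$ already places $(x_{i+1},y_{i+1}-1)$ strictly above that line.
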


\begin{proof}
We first note that for any $i$,
by the definition of a combable sequence,
the location directly above $(x_i,y_i)$ must be empty,
and if $y_i = 0$, then $(x_i-1,0)$ is occupied.
These two conditions continue to be true even as combs $1,\dots,i-1$ are executed, as by Lemma~\ref{lem:aboveunaffected}, none of these prior combs will affect $(x_i,y_i-1)$ or $(x_i-1,y_i)$.
This covers two of the conditions necessary for $(x_i,y_i)$ to be combable.

When $i=1$, $(x_1,y_1)$ is clearly combable as there are no agents to the left of $x_1$, and the site directly above $(x_1,y_1)$ is empty.

For $i \geq 2$, as $(x_{i-1},y_{i-1})$ was combed in the previous step, $(x_{i-1},y_{i-1})$ is combed (Lemma~\ref{lem:combmakescombed}). There are two cases for $y_i$.
If $y_i = y_{i-1}-1$, then $(x_i+1,y_i+1) = (x_{i-1},y_{i-1})$ is combed.
On the other hand, if $y_i = y_{i-1}$, then $(x_i+1,y_i) = (x_{i-1},y_{i-1})$ is combed. As $(x_i,y_i-1)$ is unoccupied, every location starting from $(x_i+1,y_i)$ extending down-left must also be unoccupied, which implies $(x_i+1,y_i+1)$ is combed. This gives us our final condition, so $(x_i,y_i)$ is combable.
\end{proof}

% Properties of the Comb Operation
%\subsection{Properties of the comb operation}
In addition to the two properties of the comb operation given as Lemmas~\ref{lem:combmakescombed} and \ref{lem:aboveunaffected}, we state and show a few more properties of the comb operation that we will use later in the proof.

\begin{lemma}[Combing and \Spine{} Lengths]
\label{lem:combingspinelength}
After executing a comb on some position $(\ell,d)$ where $d < \ell$, the length of the \spine{} going down-left will be at most $\ell-1$.
\end{lemma}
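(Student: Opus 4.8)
The plan is to work entirely in the $(lane,depth)$ coordinates of the comb, where the immobile agent is at $(0,0)$, the \sourcespine{} is the ray $\{(\ell',0):\ell'\ge 0\}$, and the \spine{} going down-left (the \targetspine{}) is exactly the diagonal $\{(t,t):t\ge 0\}$, since $(t,t)$ is reached by $t$ down-left steps from the center. A single comb on the combable position $(\ell,d)$ makes $(\ell,d)$ combed by Lemma~\ref{lem:combmakescombed}, with \residualregion{} $R=\{(\ell',d'):\ell'\ge\ell,\ d'\ge d+(\ell'-\ell)\}$. The first, routine, step is to check that the hypothesis $d<\ell$ forces the whole far portion of the down-left \spine{} into $R$: a \spine{} location $(t,t)$ with $t\ge\ell$ satisfies $d'=t\ge d+(t-\ell)$ precisely because $\ell\ge d$. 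By the definition of length it then suffices to show the \anchoragent{} of the down-left \spine{} sits at distance at most $\ell-1$, i.e. that no \spine{} location at distance $\ge\ell$ can be an \anchoragent{}. So I would fix a \spine{} agent $a=(t,t)$ with $t\ge\ell$ and try to show that $a$ is a \tailagent{}, or is severed from the center, so that it cannot be the \anchoragent{}.

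Next I would bring in combedness. Every agent of $R$ lies on a straight down-left line whose top-right end (head) is an agent of the column one step right of $R$, namely lane $\ell-1$, with no agent directly below it. Tracing the line through $a$ up-right to lane $\ell-1$ lands exactly on the \spine{} location $(\ell-1,\ell-1)$, so the head is itself a down-left \spine{} cell at distance $\ell-1$, and it must be occupied for $a$ to be occupied; thus the far \spine{} is carried by a single down-left line attached at $(\ell-1,\ell-1)$. The four non-\spine{} neighbors of $a$ are $(t,t\pm1)$ and $(t\pm1,t)$: the would-be lines through $(t,t+1)$ and $(t-1,t)$ hang from $(\ell-1,\ell)$, which is empty because it lies directly below the head; and the lines through $(t,t-1)$ and $(t+1,t)$ hang from $(\ell-1,\ell-2)$, which cannot be a valid head since the head occupies the cell directly below it. Consequently the local combed conditions actually force all four neighbors of $a$ to be empty, so combedness \emph{alone} would make $a$ clean. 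This is the crucial tension: the end-state conditions of Definition~\ref{defn:combed} permit an isolated clean diagonal line, so the lemma cannot follow from combedness in isolation.

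Therefore the argument must appeal to the provenance of the agents under the comb rather than to its end configuration. The plan for the main step is to track the comb operation itself via Lemma~\ref{lem:aboveunaffected}: the comb only rearranges agents weakly below the down-left half-line through $(\ell,d)$ and only ever moves agents down-right or down-left, so it cannot manufacture the agents at distances $\ell-2,\ell-1$ (in lanes $\le \ell-1$, up-right of the attachment) that would be needed to connect the head $(\ell-1,\ell-1)$, and the line hanging from it, contiguously back to the immobile agent. I would argue that either such a connecting agent is absent---so the line at distance $\ge\ell$ is not part of the down-left \spine{} extending from the center at all, capping the \spine{} at distance $\le\ell-1$---or, when the connection is present, the same down-right/down-left-only motion deposits an agent in lane $\ell-1$ (or just up-left of the attachment) that is adjacent to the far line, turning each $a$ into a \tailagent{}. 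I expect this last step, ruling out a clean \emph{and} center-connected \spine{} protruding past lane $\ell-1$, to be the main obstacle, precisely because the local combed picture does not exclude it; the resolution should come from the geometry of which cells the comb can and cannot fill, established through Lemma~\ref{lem:aboveunaffected} and the combable hypothesis, rather than from Definition~\ref{defn:combed} alone.
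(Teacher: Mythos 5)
Your second paragraph is, in substance, the paper's entire proof. The paper splits on whether $(\ell,\ell)$ is occupied: if it is empty, combedness forces every site down-left of it to be empty too; if it is occupied, the paper asserts that $(\ell,d)$ being combed makes $(\ell,\ell)$ and every \spine{} agent down-left of it \tailagents{} --- and the content behind that assertion is exactly your derivation that the combed conditions force all four non-\spine{} neighbors of each $(t,t)$, $t\ge\ell$, to be empty. The ``crucial tension'' that made you abandon this line is real, but it is a misstatement in the paper's definition of the \anchoragent{}, not a gap in the argument. Read literally (``the furthest out agent \ldots with no adjacent non-\spine{} agents''), the definition contradicts Figure~\ref{fig:spines}: the marked anchors there each \emph{have} an adjacent non-\spine{} agent, while the \tailagents{} beyond them are precisely the \spine{} agents with none; for instance the upward \spine{} there has agents only at distances $1$ and $5$, both with non-\spine{} neighbors, yet its length is reported as $5$, whereas the literal reading would give no \anchoragent{} and length $0$. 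It likewise contradicts Definition~\ref{proc:spinecomb}, which claims sites directly above \tailagents{} are vacant ``from the definition of tail agents.'' The intended meaning is the reverse of the literal one: the tail is the far stretch of \spine{} agents having no adjacent non-\spine{} agents, and the \anchoragent{} is the furthest-out \spine{} agent that does have one. Under that reading, your cleanness derivation immediately places the \anchoragent{} at distance at most $\ell-1$ and finishes the proof.

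Given that, the genuine gap in your proposal is your third paragraph, and it cannot be repaired in the direction you suggest. Under the literal definition the lemma is simply false: a clean down-left line attached at $(\ell-1,\ell-1)$ and connected through it to the center is exactly what combs are designed to produce (it is the ``tail'' of the hexagon-with-a-tail configurations in Lemma~\ref{lem:reachinghexagon}), and its tip would then literally be ``the furthest out agent with no adjacent non-\spine{} agents,'' i.e.\ the \anchoragent{}, giving length greater than $\ell-1$. So no provenance argument via Lemma~\ref{lem:aboveunaffected} can exclude such configurations; your own second paragraph already shows that connectivity to the center is fully compatible with cleanness. The resolution is definitional, not dynamical: once the \anchoragent{} is read as the figure and Definition~\ref{proc:spinecomb} require, your second paragraph stands as a complete proof coinciding with the paper's.
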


\begin{proof}
Let $S$ denote the \spine{} going down-left. The coordinates $(\ell,\ell)$ denotes the position on the \spine{} $S$ vertically below $(\ell,d)$. If $(\ell,\ell)$ is empty after the comb, then every site down-left of $(\ell,\ell)$ is also empty, so the \spine{} $S$ has length at most $\ell-1$. If $(\ell,\ell)$ is not empty after the comb, $(\ell,d)$ being combed ensures that $(\ell,\ell)$ and every agent on the \spine{} $S$ down-left of $(\ell,\ell)$ are tail agents, so \spine{} $S$ has length at most $\ell-1$.
\end{proof}

\begin{lemma}[Preservation of the Rightmost extent]
\label{lem:rightmostextent}
Let $\ell$ denote the lane (x-coordinate) of the rightmost agent of a configuration. After a comb operation is applied, the lanes to the right of $\ell$ (sites with lane less than $\ell$) will continue to be empty.
\end{lemma}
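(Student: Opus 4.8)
The plan is to prove the lemma by tracking the lane coordinate of each agent through the comb and maintaining the invariant that no occupied site ever has lane smaller than $\ell$ (the lane of the rightmost agent of the initial configuration). The key observation is that, in the $(\mathit{lane},\mathit{depth})$ coordinate system, the only elementary move used anywhere in the comb procedure that \emph{decreases} an agent's lane is a single step down-right. Concretely, a down-right step sends a position to lane one smaller (same depth), while an up-left step increases the lane, a down step leaves the lane unchanged, and a down-left step increases the lane. So I would first record this per-move bookkeeping and then argue that the rightmost extent can only be threatened by a down-right move.

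With that in hand, I would go through the move types of the comb in turn. The ``non-shiftable'' relocation of line formation moves its agent only by down-left and down steps; the line merging phase moves agents only downward and toward the \targetspine{} (down and down-left / merge steps); and the up-left portion of a shift consists solely of up-left steps. None of these decreases the lane, so none can populate a site with lane $<\ell$, and the invariant is trivially preserved across all of them. The one remaining case, and the crux of the argument, is the down-right portion of a shift, where a maximal run $p_1,\dots,p_k$ of \emph{shiftable} sites is moved one step down-right each (from $p_k$ back to $p_1$).

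For that case I would invoke the definition of a shiftable site: such a site carries an agent and has an agent directly up-right of it, and the up-right neighbor of a site at lane $\ell'$ sits at lane $\ell'-1$. Assuming the invariant holds at the moment a (still) shiftable agent $p_i$ is moved, its up-right neighbor is occupied, hence at lane $\geq \ell$, which forces $\mathrm{lane}(p_i)\geq \ell+1$; thus $p_i$ lands at lane $\geq \ell$ and the invariant is preserved. I would also note that moving the run from $p_k$ to $p_1$ does not disturb the up-right neighbor of the next agent to be moved, since consecutive $p_i$ are two steps apart, so each $p_i$ is indeed still shiftable when its turn comes. Since the invariant holds initially (the rightmost agent is at lane $\ell$), it holds after the entire comb, giving the lemma. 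The part requiring the most care is precisely this down-right shift: one must confirm both that shiftability genuinely pins the moved agent one lane to the left of the current rightmost extent, and that every other move type of the comb is lane non-decreasing so that no other move can ever violate the invariant.
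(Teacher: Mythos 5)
Your proof is correct and follows essentially the same route as the paper's: the crux in both is that a down-right (lane-decreasing) move is only ever applied to a shiftable agent, whose occupied up-right neighbor forces it to sit at least one lane to the left of the current rightmost extent, so the move cannot push past lane $\ell$. The only cosmetic difference is that the paper dismisses all non-shift moves at once by noting they stay inside the residual region of the combed position (which lies strictly left of lane $\ell$ because the comb lane is positive), whereas you dismiss them by per-move lane bookkeeping; both observations are immediate.
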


\begin{proof}
Consider a comb applied to some position $(\ell^*,d)$. As we enforce that $\ell^* > 0$ for a comb, this position is necessarily to the left of the immobile agent, while the rightmost agent of the configuration must be either on the same lane as the immobile agent or further right.
All moves aside from the ``shift'' moves in a comb procedure of a position $(\ell^*,d)$ operate only within the residual region of $(\ell^*,d)$, and so will not affect any site on lane $\ell$ or further right.

In the shift moves, an agent is only moved rightward (down-right) if it is shiftable. A shiftable agent must have an agent directly up-right of it, so a shift move cannot move an agent rightward of $\ell$.
\end{proof}

\begin{lemma}[Unenterable Region Below]
\label{lem:unenterableregion}
Consider a position $(\ell,d)$ and the diagonal half-lines extending down-left and down-right from $(\ell,d)$. Consider the region $R_{\ell,d}$ containing every location on or below these lines (Figure~\ref{fig:untouched_region}).

If the region $R_{\ell,d}$ is unoccupied, if a comb operation is applied on a lane strictly to the left of lane $\ell$, $R_{\ell,d}$ will continue to be unoccupied after the comb.
\end{lemma}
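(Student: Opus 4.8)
The plan is to track the two ``diagonal coordinates'' $u := d-\ell$ and $v := d+\ell$, which are constant along down-left and down-right lines respectively, since a down-left step sends $(\ell,d)\mapsto(\ell+1,d+1)$ (fixing $u$) and a down-right step sends $(\ell,d)\mapsto(\ell-1,d+1)$ (fixing $v$). Writing $u_0:=d-\ell$ and $v_0:=d+\ell$ for the apex, the region $R_{\ell,d}$ is exactly the lattice quadrant $\{u\geq u_0\}\cap\{v\geq v_0\}$, and the elementary moves act simply: a down move is $(u,v)\mapsto(u+1,v+1)$, a down-left move $(u,v)\mapsto(u,v+2)$, a down-right move $(u,v)\mapsto(u+2,v)$, and an up-left move $(u,v)\mapsto(u-2,v)$. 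The observation that organizes everything is that (in the paper's convention, where sites to the right have smaller lane, as in Lemma~\ref{lem:rightmostextent}) a comb applied strictly to the left of lane $\ell$ sits at some lane $\ell^{\ast}>\ell$; hence every site it touches by a down or down-left move lies at lane $\geq\ell$, and at any lane $\geq\ell$ membership in $R_{\ell,d}$ reduces to the single inequality $u\geq u_0$ (the condition $v\geq v_0$ is then automatic, since $v=2(\mathrm{lane})+u\geq 2\ell+u_0=v_0$).

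I would then induct over the elementary moves of the comb, maintaining the invariant that $R_{\ell,d}$ is empty. For one move from an occupied site $s$ to an empty site $t$, the inductive hypothesis gives $s\notin R_{\ell,d}$, and the goal is $t\notin R_{\ell,d}$. By Lemma~\ref{lem:aboveunaffected} the sites that feed the combed lines are never disturbed, so it suffices to classify the moves the comb actually performs. The easy directions are immediate in these coordinates: a down-left move preserves $u$, so if $s$ has $u(s)<u_0$ then $u(t)=u(s)<u_0$; an up-left move only decreases $u$ (with $v$ fixed), moving away from $R_{\ell,d}$; and a move crossing from lane $\ell$ to lane $\ell-1$ preserves $v$ and is handled by the fact that on lane $\ell$ the two boundary inequalities $u\geq u_0$ and $v\geq v_0$ coincide.

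The remaining and genuinely delicate cases are the moves that increase $u$ while staying at lanes $>\ell$: the ``down'' slides of line formation and line merging, and the down-right steps of a shift. Note that the comb's residual region is \emph{not} disjoint from $R_{\ell,d}$ (they overlap far down), so the short argument used for Lemma~\ref{lem:aboveunaffected} does not apply, and one must use emptiness of $R_{\ell,d}$ to pin the source below the boundary diagonal. For a shift, the moving agent is shiftable and so has an occupied neighbour directly below it; since that neighbour lies at lane $\geq\ell$ and cannot be in the empty $R_{\ell,d}$, its coordinate $u(s)+1$ satisfies $u(s)+1<u_0$, forcing $u(s)\leq u_0-2$, whence the down-right image has $u(t)=u(s)+2\leq u_0$. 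For the ``down'' slides, the agent comes to rest at the end of an existing down-left line, and emptiness of $R_{\ell,d}$ forces every such line to sit at $u<u_0$, so the agent lands strictly outside $R_{\ell,d}$.

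The hard part will be excluding exactly the borderline landing $u(t)=u_0$ (an agent deposited precisely on the left boundary diagonal of $R_{\ell,d}$, which the quadrant includes): the bound $u(s)\leq u_0-2$ only yields $u(t)\leq u_0$, and in the borderline configuration the terminal non-shiftable site $p_{k+1}$ and its relevant neighbours are themselves forced empty by $R_{\ell,d}$, so they trigger rather than forbid the down-right step. Closing this case is where the local shiftability structure must be combined with the global emptiness hypothesis and with the invariant (maintained throughout line formation) that $(\ell^{\ast}+1,d^{\ast}+1)$ stays combed, which constrains precisely how far down-right an agent can be pushed and rules out a valid move onto the boundary diagonal. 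Once every elementary move is shown to preserve emptiness of $R_{\ell,d}$, the invariant persists through the whole comb, giving the lemma.
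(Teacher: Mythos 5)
Your high-level strategy (induct over the elementary moves of the comb, keep emptiness of $R_{\ell,d}$ as an invariant, and use the structural precondition of each move --- especially the occupied site directly below a shiftable agent --- to contradict emptiness) is the same as the paper's, and your observation that the comb's residual region overlaps $R_{\ell,d}$, so that Lemma~\ref{lem:aboveunaffected} alone cannot suffice, is correct. However, your execution rests on a genuine geometric error, and that error is precisely what creates the gap you yourself leave open. In the paper's $(\mathrm{lane},\mathrm{depth})$ coordinates the six neighbours of $(\ell,d)$ are $(\ell,d\pm1)$, $(\ell+1,d)$ (up-left), $(\ell+1,d+1)$ (down-left), $(\ell-1,d-1)$ (up-right) and $(\ell-1,d)$ (down-right). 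Hence in your coordinates $u=d-\ell$, $v=d+\ell$, a down-right step is $(u,v)\mapsto(u+1,v-1)$ and an up-left step is $(u,v)\mapsto(u-1,v+1)$, not $(u+2,v)$ and $(u-2,v)$. For the same reason $R_{\ell,d}$ is \emph{not} the quadrant $\{u\geq u_0\}\cap\{v\geq v_0\}$: its right half is bounded by the down-right half-line through $(\ell,d)$, which has \emph{constant depth} $d$, so for lanes $\ell'\leq\ell$ membership is simply $d'\geq d$; for instance $(\ell-1,d)\in R_{\ell,d}$ even though its $v$-coordinate is $v_0-1$.

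Because of the mis-transcribed move table you only obtain $u(t)\leq u_0$ for the target of a down-right shift step, and you explicitly decline to close the resulting ``borderline landing $u(t)=u_0$'' case --- so as written the proof is incomplete. With the correct geometry no such case exists and no delicate argument is needed: if the target $(\ell_s-1,d_s)$ of a down-right step lay in $R_{\ell,d}$, then the occupied site $(\ell_s,d_s+1)$ directly below the shiftable source would also lie in $R_{\ell,d}$, because it is one step deeper and only one lane further left than the target, and the depth threshold defining $R_{\ell,d}$ grows by at most one per lane; this contradicts emptiness \emph{on either side of lane} $\ell$. That last point matters because your restriction of the shift analysis to lanes $\geq\ell$ is unjustified: shift sequences advance two steps down-right at a time and routinely cross into lanes $<\ell$. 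This below-neighbour argument is exactly the paper's; the paper complements it by noting that down-right and up-left steps preserve depth (so they cannot enter the right half of $R_{\ell,d}$, where membership depends only on depth), that all non-shift moves stay in the comb's residual region (lanes $>\ell$) and either preserve $u$ (down-left) or require an occupied down-right neighbour that would itself lie in $R_{\ell,d}$ (down), and by an end-state argument for line merging. So the repair is not a new idea but a correct transcription of the lattice: with the move table and the description of $R_{\ell,d}$ fixed, your inductive scheme closes along the same lines as the paper's proof.
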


\begin{proof}
For the shift movements in the line formation phase, an agent is only moved down-right if it is shiftable, which means it must have an agent directly below it. Thus, $R_{\ell,d}$ cannot be entered from the left side (left of lane $\ell$) by this movement unless there is already an agent in $R_{\ell,d}$.
In addition, as the shift movements only move agents down-right or up-left, it cannot move agents into $R_{\ell,d}$ from the right side (right of lane $\ell$).

For the other movements in the comb operation, we only need to consider possibly entry into $R_{\ell,d}$ from the left side, as these movements occur only within the residual region of the comb, which is strictly to the left of $(\ell,d)$.

In the line formation phase, only down-left and downward movements are used. Down-left movements cannot enter $R_{\ell,d}$, and downward movements only occur when there is an agent directly down-right of the agent to be moved.

In the line merging phase, we simply need to consider the end state of the comb. The end state consists of straight lines stretching down-left from the lane (column) $\ell'$ one lane right of the lane to be combed. All of the agents in this lane are above $R_{\ell,d}$ after the line formation phase, and as $\ell' \leq \ell$, all lines stretching down-left from these agents will also be above $R_{\ell,d}$.
\end{proof}

% M
\subsection{Using combs to show ergodicity}
Our objective is to show that from any (connected) configuration, there exists a sequence of valid moves to transform the configuration into a straight line with the immobile agent at one end. As valid moves cannot introduce holes into a configuration, and all valid moves between hole-free configurations are reversible, this would imply that one can transform any connected configuration of agents into any hole-free configuration of agents using only valid moves.  We proceed by showing that we can always reduce the minimum spine length of any configuration with a series of moves to reach a straight line of agents.

%\subsubsection{Reducing the Minimum \Spine{} Length}
\begin{lemma}
\label{lem:reducespinelength}
If the minimum \spine{} length of a configuration is at least $1$, there exists a sequence of moves to reduce the minimum \spine{} length of the configuration.
\end{lemma}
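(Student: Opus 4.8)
The plan is to reduce the minimum \spine{} length with a single comb, after orienting the configuration (using the reflection and $6$-fold symmetry) so that a \spine{} $S$ achieving the minimum length $m \ge 1$ plays the role of the \targetspine{} (the \spine{} going down-left) and one of its neighbours is the \sourcespine{}. Since every \spine{} has length at least $m \ge 1$, the \sourcespine{} is nonempty, so there are agents to comb. The entire reduction then rests on Lemma~\ref{lem:combingspinelength}: if I can comb some position $(\ell,d)$ with $d < \ell$ and $\ell \le m$ as the final comb of the operation, the length of $S$ afterwards is at most $\ell - 1 \le m - 1$. Because the minimum is taken over all six \spines{} and $S$ is one of them, the new minimum is then at most $m-1 < m$. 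The clean point of the argument is that it therefore suffices to exhibit one such comb; I never need to control what happens to the other \spines{}, and Lemmas~\ref{lem:aboveunaffected}, \ref{lem:rightmostextent} and \ref{lem:unenterableregion} already certify that the comb touches nothing above or to the right of it in an illegitimate way, so it is a genuine sequence of valid moves on a hole-free configuration.

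The heart of the proof is thus to construct a combable sequence whose final element $(x_k,y_k)$ satisfies $x_k \le m$ and $y_k < x_k$, and then to comb it (Definition~\ref{proc:combingasequence}), the admissibility of each individual step being supplied by the lemma guaranteeing that each position of a combable sequence is combable when it is reached. I would build the sequence by tracing the upper-left contour of the agent set inward from the \emph{leftmost} agent: this fixes $x_1$ together with an admissible $y_1$ whose site directly above is vacant, and at each step I decrease the lane by one while letting the depth drop by at most one, hugging the top of the occupied region so that every site $(x_i, y_i - 1)$ stays empty. The key geometric input is that the \sourcespine{} is occupied out to distance at least $m$, so at every lane $\ell \le m$ the site $(\ell,0)$ carries an agent; once the contour has descended to the \sourcespine{} I can continue the sequence along it at depth $0$, with condition~(6) for a combable sequence (requiring $(x_i-1,0)$ occupied whenever $y_i=0$) satisfied all the way down to lane $1$, where $(0,0)$ is the immobile agent. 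Terminating at any lane $x_k \le m$ once the contour sits at depth $0$ yields $y_k = 0 < x_k$, exactly as Lemma~\ref{lem:combingspinelength} demands.

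The step I expect to be the main obstacle is precisely the existence of this descending sequence, namely showing that the contour can always be continued down to the \sourcespine{} while simultaneously respecting the ``depth decreases by at most one'' constraint and the vacancy requirement that $(x_i,y_i-1)$ be empty. The difficulty is that agents sitting just above the \sourcespine{} — in the clockwise-adjacent region — may protrude above the contour and block the shallow part of the path, yet the depth is only permitted to fall, never to jump back up. I would resolve this by arguing, from connectivity and hole-freeness together with the fact that nothing lies to the left of lane $x_1$, that the relevant upper boundary is itself a weakly descending staircase of the permitted shape, so that tracing it produces a legitimate combable sequence; Lemma~\ref{lem:aboveunaffected} then ensures that the vacancies the later steps depend upon are not destroyed by the earlier combs in the same sequence. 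With the sequence in hand, combing it and applying Lemma~\ref{lem:combingspinelength} to its final element drives the length of $S$ strictly below $m$ and hence reduces the minimum \spine{} length, completing the argument.
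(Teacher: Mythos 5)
Your core reduction step is sound: if a comb whose final position is $(\ell,d)$ with $0 \le d < \ell \le m$ could be executed, Lemma~\ref{lem:combingspinelength} would force the \targetspine{} below length $m$ and hence lower the minimum. The genuine gap is the existence claim that you yourself flag as the main obstacle, and your proposed fix --- that connectivity and hole-freeness make the relevant upper boundary a weakly descending staircase reaching depth $0$ at some lane $\le m$ --- is false. Concretely, take the hexagon of radius $2$ around the immobile agent and occupy every site except its six corners: the configuration consists of the immobile agent, the six sites at distance $1$, and the six non-\spine{} sites at distance $2$. It is connected and hole-free, each \spine{} contains exactly one agent (at distance $1$), so every \spine{} has length $1$ and $m=1$. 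Your final comb would have to be at the unique position with $0 \le d < \ell \le 1$, namely $(1,0)$; but the site $(1,-1)$ directly above it is occupied (it is one of the distance-$2$ midpoint agents), so $(1,0)$ is not combable and cannot appear in any combable sequence (condition~5), and by the full $12$-fold symmetry of this configuration the same obstruction holds for every choice of \sourcespine{} and \targetspine{}. The underlying reason is structural: combs only push agents downward and require vacancies above the positions they comb, so agents in the sector clockwise of (i.e., above) the \sourcespine{} can block every comb at lanes $\le m$, and no single comb can clear them. (A secondary gap: \spines{} need not be contiguously occupied out to their length --- see $S_{i+5}$ in Figure~\ref{fig:spines}, occupied only at distances $1$ and $5$ --- so your plan to run the sequence along the \sourcespine{} at depth $0$ can also violate condition~6 even where the vacancies exist.)

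This counterexample is exactly why the paper's proof is as long as it is. The paper applies seven \spinecombs{} counterclockwise around the center, using Lemmas~\ref{lem:aboveunaffected}, \ref{lem:rightmostextent} and \ref{lem:unenterableregion} to certify that each comb clears a sector without disturbing the sectors already cleared; it reduces the minimum immediately only when a ``gap in the line'' appears (Lemma~\ref{lem:reducingusinggap}), which is where your one-comb argument legitimately lives, because the gap supplies precisely the vacancy your construction needs. If no gap ever appears, the process terminates in a hexagon with a tail (Lemma~\ref{lem:reachinghexagon}), and there the paper must resort to a separate, non-comb argument --- moving a single corner or side agent of the hexagon to manufacture a vacancy --- before a final comb via Lemma~\ref{lem:combingspinelength} can finish. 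In short, your proposal keeps only the last step of the paper's argument; the bulk of the proof is the machinery that creates the circumstances under which that step is possible, and without it the step provably cannot be carried out.
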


To reduce the minimum \spine{} length of the configuration, we execute \spinecombs{} in a specific order. Pick a \spine{} of minimum length and denote it as $S_0$. We apply \spinecombs{} in a counterclockwise order, from $S_0$ to $S_1$, followed by $S_1$ to $S_2$, and so on. When applying comb a comb operation from a \sourcespine{} $S_i$ to a \targetspine{} $S_{i+1}$, as always, for ease of analysis, we will treat $S_i$ and $S_{i+1}$ as the \spines{} going in the up-left and down-left directions from the immobile agent respectively.

\begin{definition}[\SpineComb{}]
\label{proc:spinecomb}
%A \spinecomb{} can either be one or two sequences of combs.
Let $r$ denote the length of the \sourcespine{} $S_0$. Let $r_t$ denote the distance of the furthest out agent on the \sourcespine{} from the immobile agent (hence the agents of distances $r+1,\dots,r_t$ are the tail agents).

A \spinecomb{} applies a comb on the combable sequence $((x_1,y_1),(x_2,y_2),\dots,(x_k,y_k))$, where $x_1$ vertically coincides with the leftmost agent of the configuration, $x_i = x_1-i+1$ for each $i \in \{2,3,\dots,k\}$, $x_k = r+1$, $y_i = 1$ whenever $x_i > r_t$, and $y_i = 0$ when $r+1 \leq x_i \leq r_t$. From the definition of tail agents one can easily verify that this $(x_i,y_i-1)$ is vacant for each $i \in \{1,2,\dots,k\}$. Figure~\ref{fig:comb_before_after} illustrates configurations before and after a \spinecomb{} is applied.
\end{definition}

\begin{lemma}
\label{lem:spinecomb}
Consider a \spinecomb{} from a \sourcespine{} of length $r$.
After the \spinecomb{}, the position $(r+1,1)$ will be combed.
Also, the region between (and including) the two half-lines extending up-left and down-left indefinitely from the position $(r+1,0)$ will be empty.
\end{lemma}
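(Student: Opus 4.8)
The plan is to deduce both parts from the combed-ness of the final position of the combable sequence, namely $(r+1,y_k)$, where $y_k=1$ when the \sourcespine{} has no \tailagents{} ($r_t=r$) and $y_k=0$ when it does ($r_t\ge r+1$). By the Combable Sequence and Combability of Each Step lemmas, the sequence defining the \spinecomb{} is a valid combable sequence and each $(x_i,y_i)$ is combable when it is reached, so Lemma~\ref{lem:combmakescombed} makes each position combed as it is processed; since $(r+1,y_k)$ is the last position, it remains combed at the end. The one auxiliary fact I would isolate first is that the topmost diagonal of the \residualregion{} of $(x_i,y_i)$ (the down-left half-line through $(x_i,y_i)$ itself) is empty after $(x_i,y_i)$ is combed \emph{whenever} the site $(x_i-1,y_i-1)$ is vacant or the site $(x_i-1,y_i)$ is occupied. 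Indeed, by conditions~2 and~3 of Definition~\ref{defn:combed}, any agent on that diagonal must lie on a down-left line issuing from $(x_i-1,y_i-1)$ in the column just to the right of the \residualregion{}, with that start site occupied and nothing directly below it; either hypothesis contradicts this.

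For the first statement, if $y_k=1$ then $(r+1,1)$ is combed outright. If $y_k=0$, the \anchoragent{} sits at $(r,0)$, so the auxiliary fact (with $(x_k-1,y_k)=(r,0)$ occupied) shows the topmost diagonal of $(r+1,0)$ is empty; hence the \residualregion{} of $(r+1,1)$ is exactly that of $(r+1,0)$ with one empty diagonal deleted. Since the two regions share the same right-hand column (lane $r$) and the deletion removes no agents, changes no down-left line, and alters no line start, all three conditions of Definition~\ref{defn:combed} carry over and $(r+1,1)$ is combed.

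For the second statement I would prove, by induction on $j$, the invariant that after combing $(x_1,y_1),\dots,(x_j,y_j)$ every site $(\ell',d')$ with $\ell'\ge x_j$ and $d'\ge 0$ lying strictly above the down-left half-line of $(x_j,y_j)$ is empty. The base case holds because $x_1$ is the leftmost occupied lane and $(x_1,y_1-1)$ is vacant. For the step, the new target region lies entirely in the zone that Lemma~\ref{lem:aboveunaffected} certifies is untouched by the comb at $(x_{j+1},y_{j+1})$, so it suffices to check it is already empty after the $j$-th comb; relative to the $j$-th region the only new sites are the part of lane $x_{j+1}$ below depth $y_{j+1}$ (vacant by the combable-sequence condition) and, when $y_{j+1}=y_j$, the topmost diagonal of $(x_j,y_j)$, which is empty by the auxiliary fact since combability of $(x_{j+1},y_{j+1})$ forces $(x_{j+1},y_{j+1}-1)=(x_j-1,y_j-1)$ vacant. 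Specializing to $j=k$ and adjoining the topmost diagonal of $(r+1,0)$ (empty as in part one when $y_k=0$, and not needed when $y_k=1$, since $W$ then lies strictly above the $(r+1,1)$ diagonal) yields precisely that the region between the up-left and down-left half-lines from $(r+1,0)$ is empty.

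The crux is the inductive step, and within it the case $y_{j+1}=y_j$: here the staircase shifts one step down-right and exposes a fresh diagonal that must be certified empty, which is exactly where the auxiliary emptiness fact and the combability of the upcoming position are combined, and where one must verify that the newly claimed region falls inside the protected zone of Lemma~\ref{lem:aboveunaffected}. Once this is in place, the case split on whether \tailagents{} exist and the single-diagonal transfer of combed-ness in part one are routine.
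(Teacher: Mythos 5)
Your proposal is correct, and it follows the same basic route as the paper's own proof: part 1 hinges on the observation that a down-left line along the top diagonal of the \residualregion{} of $(r+1,0)$ would, by conditions 2 and 3 of Definition~\ref{defn:combed}, have to issue from an occupied site at $(r,-1)$ with nothing directly below it, which the \anchoragent{} at $(r,0)$ forbids; part 2 combines emptiness of newly exposed diagonals with the protection given by Lemma~\ref{lem:aboveunaffected}. Where you differ is the bookkeeping in part 2: the paper argues site-by-site, taking for each site $(x,y)$ of the region the last sequence position whose file contains $(x,y+1)$ and asserting that $(x,y)$ is empty once that position is combed, whereas you maintain an explicit inductive invariant, the growing empty wedge $W_j$ above the staircase of already-combed positions. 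Your organization is in fact tighter than the paper's: combed-ness constrains sites above the \residualregion{} only where the region actually contains agents (condition 1), so in the paper's argument, when $(x,y+1)$ happens to be empty after the relevant comb, nothing in Definition~\ref{defn:combed} by itself forces $(x,y)$ to be empty --- one also needs to know $(x,y)$ was empty \emph{before} that comb and appeal to Lemma~\ref{lem:aboveunaffected}. Your invariant is precisely that missing statement, and your auxiliary fact (applied with $(x_{j+1},y_{j+1}-1)$ vacant) correctly certifies the one genuinely new diagonal that appears each time the staircase steps down-right. Two small loose ends, neither fatal: you use the occupancy of $(r,0)$ at the time of the final comb without remarking that the earlier combs cannot have displaced the \anchoragent{} --- this needs the second clause of Lemma~\ref{lem:aboveunaffected}, applied inductively along the tail at each earlier comb with $y_i=0$, which is exactly how the paper disposes of it --- and the symbol $W$ near the end of your argument is used before being defined (you mean the region in the lemma statement, which indeed coincides with $W_k$ when $y_k=1$).
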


\begin{proof}
The last comb operation is on position $(r+1,1)$ or $(r+1,0)$. If it is the former, $(r+1,1)$ will be combed (Lemma~\ref{lem:combmakescombed}). If it is on the latter, as none of the comb operations will affect the site $(r,0)$ directly down-right of the last comb position, $(r,0)$ will remain occupied by an agent. As $(r+1,0)$ is combed, there will be no agents on the diagonal stretching down-left from $(r+1,0)$. Hence, $(r+1,1)$ is also combed.

The region described in the lemma can be divided into ``files'', diagonal lines going down-left. Consider any site $(x,y)$ in this region. If $(x,y)$ is in the lowest file of the region (on the diagonal extending down-left from $(r+1,0)$), as $(r+1,1)$ is combed, $(x,y)$ must be unoccupied. Otherwise, if $(x,y+1)$ is in the same file as some position in the combable sequence, let $(x_i,y_i)$ be the last position in the sequence in the same file as $(x,y+1)$. After $(x_i,y_i)$ is combed, $(x,y)$ must be empty. Subsequent combs will not affect $(x,y)$ by Lemma~\ref{lem:aboveunaffected}. If $(x,y+1)$ is not in the same file as any position in the combable sequence, $(x,y)$ must be empty as $(x_1,y_1)$ is vertically aligned with the leftmost agent of the configuration. Similarly by Lemma~\ref{lem:aboveunaffected}, none of the combs will move an agent into $(x,y)$. Hence, $(x,y)$ will be empty after the \spinecomb{} in all cases.
\end{proof}

After a \spinecomb{} is applied, there are two possible cases, having a gap in the line (defined next), and not having a gap in the line. If there is a gap in the line, we show that we can directly reduce the minimum \spine{} length of the configuration from here, giving us the result of Lemma~\ref{lem:reducespinelength}. Hence, we can proceed with the rest of the proof of Lemma~\ref{lem:reducespinelength} assuming that there will never be a gap in the line.

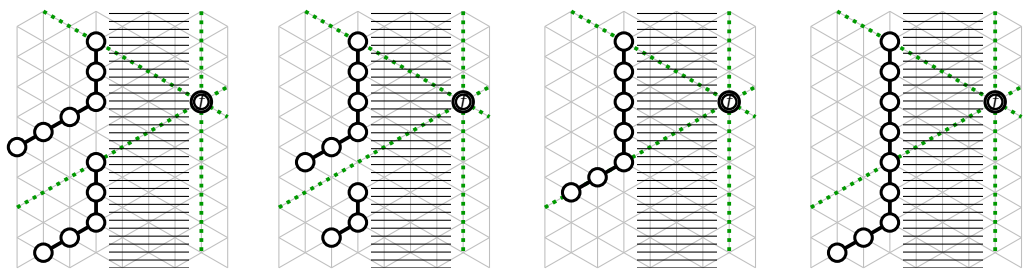
\begin{figure*}
\begin{subfigure}{.25\textwidth}
  %\centering
  %\includegraphics[width=.7\linewidth]{diagrams_irreducibility/gap_yes1.png}
  \begin{center}
  \begin{tikzpicture}[x=0.4cm,y=0.4cm]
  \draw[lightgray] (5.19615,-9) -- (6.9282,-8);
\draw[lightgray] (0,-7) -- (6.9282,-3);
\draw[lightgray] (0,-7) -- (3.4641,-9);
\draw[lightgray] (0,-4) -- (6.06218,-0.5);
\draw[lightgray] (0,-4) -- (6.9282,-8);
\draw[lightgray] (0,-1) -- (0.866025,-0.5);
\draw[lightgray] (0,-1) -- (6.9282,-5);
\draw[lightgray] (0,-1) -- (0,-9);
\draw[lightgray] (1.73205,-1) -- (1.73205,-9);
\draw[lightgray] (3.4641,-1) -- (3.4641,-9);
\draw[lightgray] (0,-5) -- (6.9282,-1);
\draw[lightgray] (0,-5) -- (6.9282,-9);
\draw[lightgray] (0.866025,-0.5) -- (6.9282,-4);
\draw[lightgray] (0.866025,-0.5) -- (0.866025,-8.5);
\draw[lightgray] (0,-8) -- (6.9282,-4);
\draw[lightgray] (0,-8) -- (1.73205,-9);
\draw[lightgray] (5.19615,-1) -- (5.19615,-9);
\draw[lightgray] (0,-2) -- (2.59808,-0.5);
\draw[lightgray] (0,-2) -- (6.9282,-6);
\draw[lightgray] (2.59808,-0.5) -- (6.9282,-3);
\draw[lightgray] (2.59808,-0.5) -- (2.59808,-8.5);
\draw[lightgray] (4.33013,-0.5) -- (6.9282,-2);
\draw[lightgray] (4.33013,-0.5) -- (4.33013,-8.5);
\draw[lightgray] (6.06218,-0.5) -- (6.9282,-1);
\draw[lightgray] (6.06218,-0.5) -- (6.06218,-8.5);
\draw[lightgray] (0,-6) -- (6.9282,-2);
\draw[lightgray] (0,-6) -- (5.19615,-9);
\draw[lightgray] (0,-9) -- (6.9282,-5);
\draw[lightgray] (0,-3) -- (4.33013,-0.5);
\draw[lightgray] (0,-3) -- (6.9282,-7);
\draw[lightgray] (3.4641,-9) -- (6.9282,-7);
\draw[lightgray] (1.73205,-9) -- (6.9282,-6);
\draw[lightgray] (6.9282,-1) -- (6.9282,-9);
\draw[black!40!green, dotted, line width=0.5mm] (0,-7) -- (6.9282,-3);
\draw[black!40!green, dotted, line width=0.5mm] (0.866025,-0.5) -- (6.9282,-4);
\draw[black!40!green, dotted, line width=0.5mm] (6.06218,-0.5) -- (6.06218,-8.5);
\draw[black, line width=0.4mm, fill=white] (0,-5) circle (0.288);
\draw[black, line width=0.4mm, fill=white] (0.866025,-4.5) circle (0.288);
\draw[black, line width=0.4mm, fill=white] (0.866025,-8.5) circle (0.288);
\draw[black, line width=0.4mm, fill=white] (1.73205,-4) circle (0.288);
\draw[black, line width=0.4mm, fill=white] (1.73205,-8) circle (0.288);
\draw[black, line width=0.4mm, fill=white] (2.59808,-1.5) circle (0.288);
\draw[black, line width=0.4mm, fill=white] (2.59808,-2.5) circle (0.288);
\draw[black, line width=0.4mm, fill=white] (2.59808,-3.5) circle (0.288);
\draw[black, line width=0.4mm, fill=white] (2.59808,-5.5) circle (0.288);
\draw[black, line width=0.4mm, fill=white] (2.59808,-6.5) circle (0.288);
\draw[black, line width=0.4mm, fill=white] (2.59808,-7.5) circle (0.288);
\draw[black, line width=0.4mm, fill=white] (6.06218,-3.5) circle (0.336);
\draw[black, line width=0.32mm] (6.06218,-3.5) circle (0.24);
\node[align=left] at (6.06218,-3.5) {\scriptsize $f$};
\draw[black, line width=0.5mm] (2.59808,-2.2) -- (2.59808,-1.8);
\draw[black, line width=0.5mm] (2.59808,-3.2) -- (2.59808,-2.8);
\draw[black, line width=0.5mm] (1.99186,-7.85) -- (2.33827,-7.65);
\draw[black, line width=0.5mm] (1.12583,-4.35) -- (1.47224,-4.15);
\draw[black, line width=0.5mm] (1.12583,-8.35) -- (1.47224,-8.15);
\draw[black, line width=0.5mm] (0.259808,-4.85) -- (0.606218,-4.65);
\draw[black, line width=0.5mm] (2.59808,-6.2) -- (2.59808,-5.8);
\draw[black, line width=0.5mm] (2.59808,-7.2) -- (2.59808,-6.8);
\draw[black, line width=0.5mm] (1.99186,-3.85) -- (2.33827,-3.65);
\path [pattern=horizontal lines, pattern color=black] (3.03109,-0.5) -- (5.62917,-0.5) -- (5.62917,-9) -- (3.03109,-9) -- cycle;
  \end{tikzpicture}
  \end{center}
  \caption{Gap between \spines{}.}
  \label{fig:gap_yes1}
\end{subfigure}%
\begin{subfigure}{.25\textwidth}
  %\centering
  %\includegraphics[width=.7\linewidth]{diagrams_irreducibility/gap_yes2.png}
  \begin{center}
  \begin{tikzpicture}[x=0.4cm,y=0.4cm]
  \draw[lightgray] (5.19615,-9) -- (6.9282,-8);
\draw[lightgray] (0,-7) -- (6.9282,-3);
\draw[lightgray] (0,-7) -- (3.4641,-9);
\draw[lightgray] (0,-4) -- (6.06218,-0.5);
\draw[lightgray] (0,-4) -- (6.9282,-8);
\draw[lightgray] (0,-1) -- (0.866025,-0.5);
\draw[lightgray] (0,-1) -- (6.9282,-5);
\draw[lightgray] (0,-1) -- (0,-9);
\draw[lightgray] (1.73205,-1) -- (1.73205,-9);
\draw[lightgray] (3.4641,-1) -- (3.4641,-9);
\draw[lightgray] (0,-5) -- (6.9282,-1);
\draw[lightgray] (0,-5) -- (6.9282,-9);
\draw[lightgray] (0.866025,-0.5) -- (6.9282,-4);
\draw[lightgray] (0.866025,-0.5) -- (0.866025,-8.5);
\draw[lightgray] (0,-8) -- (6.9282,-4);
\draw[lightgray] (0,-8) -- (1.73205,-9);
\draw[lightgray] (5.19615,-1) -- (5.19615,-9);
\draw[lightgray] (0,-2) -- (2.59808,-0.5);
\draw[lightgray] (0,-2) -- (6.9282,-6);
\draw[lightgray] (2.59808,-0.5) -- (6.9282,-3);
\draw[lightgray] (2.59808,-0.5) -- (2.59808,-8.5);
\draw[lightgray] (4.33013,-0.5) -- (6.9282,-2);
\draw[lightgray] (4.33013,-0.5) -- (4.33013,-8.5);
\draw[lightgray] (6.06218,-0.5) -- (6.9282,-1);
\draw[lightgray] (6.06218,-0.5) -- (6.06218,-8.5);
\draw[lightgray] (0,-6) -- (6.9282,-2);
\draw[lightgray] (0,-6) -- (5.19615,-9);
\draw[lightgray] (0,-9) -- (6.9282,-5);
\draw[lightgray] (0,-3) -- (4.33013,-0.5);
\draw[lightgray] (0,-3) -- (6.9282,-7);
\draw[lightgray] (3.4641,-9) -- (6.9282,-7);
\draw[lightgray] (1.73205,-9) -- (6.9282,-6);
\draw[lightgray] (6.9282,-1) -- (6.9282,-9);
\draw[black!40!green, dotted, line width=0.5mm] (0,-7) -- (6.9282,-3);
\draw[black!40!green, dotted, line width=0.5mm] (0.866025,-0.5) -- (6.9282,-4);
\draw[black!40!green, dotted, line width=0.5mm] (6.06218,-0.5) -- (6.06218,-8.5);
\draw[black, line width=0.4mm, fill=white] (0.866025,-5.5) circle (0.288);
\draw[black, line width=0.4mm, fill=white] (1.73205,-5) circle (0.288);
\draw[black, line width=0.4mm, fill=white] (1.73205,-8) circle (0.288);
\draw[black, line width=0.4mm, fill=white] (2.59808,-1.5) circle (0.288);
\draw[black, line width=0.4mm, fill=white] (2.59808,-2.5) circle (0.288);
\draw[black, line width=0.4mm, fill=white] (2.59808,-3.5) circle (0.288);
\draw[black, line width=0.4mm, fill=white] (2.59808,-4.5) circle (0.288);
\draw[black, line width=0.4mm, fill=white] (2.59808,-6.5) circle (0.288);
\draw[black, line width=0.4mm, fill=white] (2.59808,-7.5) circle (0.288);
\draw[black, line width=0.4mm, fill=white] (6.06218,-3.5) circle (0.336);
\draw[black, line width=0.32mm] (6.06218,-3.5) circle (0.24);
\node[align=left] at (6.06218,-3.5) {\scriptsize $f$};
\draw[black, line width=0.5mm] (2.59808,-2.2) -- (2.59808,-1.8);
\draw[black, line width=0.5mm] (2.59808,-3.2) -- (2.59808,-2.8);
\draw[black, line width=0.5mm] (1.99186,-7.85) -- (2.33827,-7.65);
\draw[black, line width=0.5mm] (1.12583,-5.35) -- (1.47224,-5.15);
\draw[black, line width=0.5mm] (2.59808,-7.2) -- (2.59808,-6.8);
\draw[black, line width=0.5mm] (1.99186,-4.85) -- (2.33827,-4.65);
\draw[black, line width=0.5mm] (2.59808,-4.2) -- (2.59808,-3.8);
\path [pattern=horizontal lines, pattern color=black] (3.03109,-0.5) -- (5.62917,-0.5) -- (5.62917,-9) -- (3.03109,-9) -- cycle;
  \end{tikzpicture}
  \end{center}
  \caption{Gap on \targetspine{}.}
  \label{fig:gap_yes2}
\end{subfigure}%
\begin{subfigure}{.25\textwidth}
  %\centering
  %\includegraphics[width=.7\linewidth]{diagrams_irreducibility/gap_no1.png}
  \begin{center}
  \begin{tikzpicture}[x=0.4cm,y=0.4cm]
  \draw[lightgray] (5.19615,-9) -- (6.9282,-8);
\draw[lightgray] (0,-7) -- (6.9282,-3);
\draw[lightgray] (0,-7) -- (3.4641,-9);
\draw[lightgray] (0,-4) -- (6.06218,-0.5);
\draw[lightgray] (0,-4) -- (6.9282,-8);
\draw[lightgray] (0,-1) -- (0.866025,-0.5);
\draw[lightgray] (0,-1) -- (6.9282,-5);
\draw[lightgray] (0,-1) -- (0,-9);
\draw[lightgray] (1.73205,-1) -- (1.73205,-9);
\draw[lightgray] (3.4641,-1) -- (3.4641,-9);
\draw[lightgray] (0,-5) -- (6.9282,-1);
\draw[lightgray] (0,-5) -- (6.9282,-9);
\draw[lightgray] (0.866025,-0.5) -- (6.9282,-4);
\draw[lightgray] (0.866025,-0.5) -- (0.866025,-8.5);
\draw[lightgray] (0,-8) -- (6.9282,-4);
\draw[lightgray] (0,-8) -- (1.73205,-9);
\draw[lightgray] (5.19615,-1) -- (5.19615,-9);
\draw[lightgray] (0,-2) -- (2.59808,-0.5);
\draw[lightgray] (0,-2) -- (6.9282,-6);
\draw[lightgray] (2.59808,-0.5) -- (6.9282,-3);
\draw[lightgray] (2.59808,-0.5) -- (2.59808,-8.5);
\draw[lightgray] (4.33013,-0.5) -- (6.9282,-2);
\draw[lightgray] (4.33013,-0.5) -- (4.33013,-8.5);
\draw[lightgray] (6.06218,-0.5) -- (6.9282,-1);
\draw[lightgray] (6.06218,-0.5) -- (6.06218,-8.5);
\draw[lightgray] (0,-6) -- (6.9282,-2);
\draw[lightgray] (0,-6) -- (5.19615,-9);
\draw[lightgray] (0,-9) -- (6.9282,-5);
\draw[lightgray] (0,-3) -- (4.33013,-0.5);
\draw[lightgray] (0,-3) -- (6.9282,-7);
\draw[lightgray] (3.4641,-9) -- (6.9282,-7);
\draw[lightgray] (1.73205,-9) -- (6.9282,-6);
\draw[lightgray] (6.9282,-1) -- (6.9282,-9);
\draw[black!40!green, dotted, line width=0.5mm] (0,-7) -- (6.9282,-3);
\draw[black!40!green, dotted, line width=0.5mm] (0.866025,-0.5) -- (6.9282,-4);
\draw[black!40!green, dotted, line width=0.5mm] (6.06218,-0.5) -- (6.06218,-8.5);
\draw[black, line width=0.4mm, fill=white] (0.866025,-6.5) circle (0.288);
\draw[black, line width=0.4mm, fill=white] (1.73205,-6) circle (0.288);
\draw[black, line width=0.4mm, fill=white] (2.59808,-1.5) circle (0.288);
\draw[black, line width=0.4mm, fill=white] (2.59808,-2.5) circle (0.288);
\draw[black, line width=0.4mm, fill=white] (2.59808,-3.5) circle (0.288);
\draw[black, line width=0.4mm, fill=white] (2.59808,-4.5) circle (0.288);
\draw[black, line width=0.4mm, fill=white] (2.59808,-5.5) circle (0.288);
\draw[black, line width=0.4mm, fill=white] (6.06218,-3.5) circle (0.336);
\draw[black, line width=0.32mm] (6.06218,-3.5) circle (0.24);
\node[align=left] at (6.06218,-3.5) {\scriptsize $f$};
\draw[black, line width=0.5mm] (2.59808,-2.2) -- (2.59808,-1.8);
\draw[black, line width=0.5mm] (1.99186,-5.85) -- (2.33827,-5.65);
\draw[black, line width=0.5mm] (2.59808,-3.2) -- (2.59808,-2.8);
\draw[black, line width=0.5mm] (1.12583,-6.35) -- (1.47224,-6.15);
\draw[black, line width=0.5mm] (2.59808,-4.2) -- (2.59808,-3.8);
\draw[black, line width=0.5mm] (2.59808,-5.2) -- (2.59808,-4.8);
\path [pattern=horizontal lines, pattern color=black] (3.03109,-0.5) -- (5.62917,-0.5) -- (5.62917,-9) -- (3.03109,-9) -- cycle;
  \end{tikzpicture}
  \end{center}
  \caption{No gap.}
  \label{fig:gap_no1}
\end{subfigure}%
\begin{subfigure}{.25\textwidth}
  %\centering
  %\includegraphics[width=.7\linewidth]{diagrams_irreducibility/gap_no2.png}
  \begin{center}
  \begin{tikzpicture}[x=0.4cm,y=0.4cm]
  \draw[lightgray] (5.19615,-9) -- (6.9282,-8);
\draw[lightgray] (0,-7) -- (6.9282,-3);
\draw[lightgray] (0,-7) -- (3.4641,-9);
\draw[lightgray] (0,-4) -- (6.06218,-0.5);
\draw[lightgray] (0,-4) -- (6.9282,-8);
\draw[lightgray] (0,-1) -- (0.866025,-0.5);
\draw[lightgray] (0,-1) -- (6.9282,-5);
\draw[lightgray] (0,-1) -- (0,-9);
\draw[lightgray] (1.73205,-1) -- (1.73205,-9);
\draw[lightgray] (3.4641,-1) -- (3.4641,-9);
\draw[lightgray] (0,-5) -- (6.9282,-1);
\draw[lightgray] (0,-5) -- (6.9282,-9);
\draw[lightgray] (0.866025,-0.5) -- (6.9282,-4);
\draw[lightgray] (0.866025,-0.5) -- (0.866025,-8.5);
\draw[lightgray] (0,-8) -- (6.9282,-4);
\draw[lightgray] (0,-8) -- (1.73205,-9);
\draw[lightgray] (5.19615,-1) -- (5.19615,-9);
\draw[lightgray] (0,-2) -- (2.59808,-0.5);
\draw[lightgray] (0,-2) -- (6.9282,-6);
\draw[lightgray] (2.59808,-0.5) -- (6.9282,-3);
\draw[lightgray] (2.59808,-0.5) -- (2.59808,-8.5);
\draw[lightgray] (4.33013,-0.5) -- (6.9282,-2);
\draw[lightgray] (4.33013,-0.5) -- (4.33013,-8.5);
\draw[lightgray] (6.06218,-0.5) -- (6.9282,-1);
\draw[lightgray] (6.06218,-0.5) -- (6.06218,-8.5);
\draw[lightgray] (0,-6) -- (6.9282,-2);
\draw[lightgray] (0,-6) -- (5.19615,-9);
\draw[lightgray] (0,-9) -- (6.9282,-5);
\draw[lightgray] (0,-3) -- (4.33013,-0.5);
\draw[lightgray] (0,-3) -- (6.9282,-7);
\draw[lightgray] (3.4641,-9) -- (6.9282,-7);
\draw[lightgray] (1.73205,-9) -- (6.9282,-6);
\draw[lightgray] (6.9282,-1) -- (6.9282,-9);
\draw[black!40!green, dotted, line width=0.5mm] (0,-7) -- (6.9282,-3);
\draw[black!40!green, dotted, line width=0.5mm] (0.866025,-0.5) -- (6.9282,-4);
\draw[black!40!green, dotted, line width=0.5mm] (6.06218,-0.5) -- (6.06218,-8.5);
\draw[black, line width=0.4mm, fill=white] (0.866025,-8.5) circle (0.288);
\draw[black, line width=0.4mm, fill=white] (1.73205,-8) circle (0.288);
\draw[black, line width=0.4mm, fill=white] (2.59808,-1.5) circle (0.288);
\draw[black, line width=0.4mm, fill=white] (2.59808,-2.5) circle (0.288);
\draw[black, line width=0.4mm, fill=white] (2.59808,-3.5) circle (0.288);
\draw[black, line width=0.4mm, fill=white] (2.59808,-4.5) circle (0.288);
\draw[black, line width=0.4mm, fill=white] (2.59808,-5.5) circle (0.288);
\draw[black, line width=0.4mm, fill=white] (2.59808,-6.5) circle (0.288);
\draw[black, line width=0.4mm, fill=white] (2.59808,-7.5) circle (0.288);
\draw[black, line width=0.4mm, fill=white] (6.06218,-3.5) circle (0.336);
\draw[black, line width=0.32mm] (6.06218,-3.5) circle (0.24);
\node[align=left] at (6.06218,-3.5) {\scriptsize $f$};
\draw[black, line width=0.5mm] (2.59808,-2.2) -- (2.59808,-1.8);
\draw[black, line width=0.5mm] (2.59808,-3.2) -- (2.59808,-2.8);
\draw[black, line width=0.5mm] (1.99186,-7.85) -- (2.33827,-7.65);
\draw[black, line width=0.5mm] (1.12583,-8.35) -- (1.47224,-8.15);
\draw[black, line width=0.5mm] (2.59808,-6.2) -- (2.59808,-5.8);
\draw[black, line width=0.5mm] (2.59808,-7.2) -- (2.59808,-6.8);
\draw[black, line width=0.5mm] (2.59808,-4.2) -- (2.59808,-3.8);
\draw[black, line width=0.5mm] (2.59808,-5.2) -- (2.59808,-4.8);
\path [pattern=horizontal lines, pattern color=black] (3.03109,-0.5) -- (5.62917,-0.5) -- (5.62917,-9) -- (3.03109,-9) -- cycle;
  \end{tikzpicture}
  \end{center}
  \caption{No gap.}
  \label{fig:gap_no2}
\end{subfigure}%
\caption{Illustration of the line between the \spines{} going up-left and down-left from the immobile agent. Figures~\ref{fig:gap_yes1} and \ref{fig:gap_yes2} have gaps in the line (Definition~\ref{defn:gapintheline}), while Figures~\ref{fig:gap_no1} and \ref{fig:gap_no2} do not. Observe that in the cases with no gap, the length of the \targetspine{} matches that of the \sourcespine{}.}
\end{figure*}

\begin{definition}[Gap in the line]
\label{defn:gapintheline}
Let $r$ be the length of the \sourcespine{} $S_i$. Consider the vertical line segment of sites from the location of the \anchoragent{} of $S_i$ down to the site on the \targetspine{} $S_{i+1}$ of distance $r$ from the center, including the two \spine{} location endpoints. If there is a site on this line segment that is unoccupied by agents, we say that there is a \emph{gap} in the line from the \sourcespine{} to the \targetspine{}.
\end{definition}

\begin{lemma}[Reducing minimum \spine{} length using a gap]
\label{lem:reducingusinggap}
After a \spinecomb{} is applied from a \sourcespine{} of minimum length, if there is a gap in the line from the \sourcespine{} to the \targetspine{}, there exists a sequence of moves to reduce the minimum \spine{} length of the configuration. 
\end{lemma}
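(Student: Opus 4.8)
The plan is to first use Lemma~\ref{lem:spinecomb} to pin down the shape of the configuration produced by the \spinecomb{}, and then read off from the location of the gap a short explicit sequence of valid moves that strictly shortens one of the two \spines{} below $r$. Recall that after the \spinecomb{} the position $(r+1,1)$ is combed, so by Definition~\ref{defn:combed} everything in its \residualregion{} is a union of clean down-left lines, and the whole wedge between the up-left and down-left half-lines from $(r+1,0)$ is empty. In particular the \sourcespine{} $S_0$ carries no \tailagents{}, and by Lemma~\ref{lem:aboveunaffected} the \spinecomb{} never touches sites lying up-right of the combed positions, so the old \anchoragent{} site $(r,0)$ remains occupied. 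The line whose gap we analyse is exactly the vertical segment $(r,0),(r,1),\dots,(r,r)$, whose bottom endpoint $(r,r)$ is the distance-$r$ site of the \targetspine{} $S_1$. Since $(r,0)$ is occupied, the topmost empty site $(r,g)$ of this segment has $g\ge 1$, and I would split on whether the gap reaches $S_1$.

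In the first case, suppose $(r,r)$ itself is empty (the ``gap on the \targetspine{}'' picture). Then $S_1$ has no agent at distance $r$; using that $(r+1,1)$ is combed to certify that, along the $S_1$ direction, $S_1$ is a contiguous line from the center with no agent surviving past the break, this forces the furthest contiguous $S_1$ agent to lie at distance at most $r-1$, so $|S_1|\le r-1$. As the minimum \spine{} length is at most $|S_1|$, it is already strictly below the old minimum $r$, and we are done without further moves (apart from optionally tidying any agents left hanging off lane $r$ below the gap).

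In the second case $(r,r)$ is occupied, so the topmost gap satisfies $1\le g\le r-1$ and lies strictly between the two \spines{}. Here I would peel off the finite stack of agents sitting on lane $r$ strictly above the gap — the occupied sites among $(r,0),\dots,(r,g-1)$ together with the clean down-left lines combed off them. Because the wedge above is empty and the gap at $(r,g)$ separates this stack from the combed material below, the stack is attached to the bulk only through the inner \spine{} edge $(r,0)$--$(r-1,0)$. I would move its agents one at a time, peeling each hanging line from its far (down-left) end inward and walking each agent through the empty wedge or down through the gap to reattach it to the combed lines or to $S_1$ below; Lemmas~\ref{lem:aboveunaffected} and~\ref{lem:unenterableregion} guarantee that the maneuvering room stays available and that nothing re-enters the region above. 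Every such move detaches a leaf or a line endpoint and reattaches it adjacent to occupied sites, so it keeps the configuration connected and hole-free. The last agent removed is the one at $(r,0)$, which is by then a leaf; once it is gone the furthest agent of $S_0$ lies at distance at most $r-1$, so $|S_0|\le r-1$, and again the minimum \spine{} length, being at most $|S_0|$, drops strictly below $r$.

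The step I expect to be the main obstacle is the explicit verification that the peeling-and-relocation moves in the interior-gap case are all valid, i.e.\ that at every intermediate step the configuration remains connected and hole-free. This rests on using the combed structure of Definition~\ref{defn:combed} to know precisely which agents are line endpoints that may be detached, and on using the empty wedge together with the gap as guaranteed free space into which agents can be walked; organising this as an induction that peels lines from the gap upward (and within each line from its far end inward) while maintaining connectivity is the delicate part. The remaining care is in drawing the boundary between the ``target-end'' and ``interior'' cases cleanly, so that in each case one provably exhibits a single \spine{} of length strictly less than $r$ rather than merely redistributing the mass.
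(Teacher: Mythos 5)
Your first case (the gap lies on the \targetspine{}) matches the paper: combedness of $(r+1,1)$ forces every \targetspine{} site beyond the gap to be empty, so its length is already at most $r-1$ and no moves are needed. The problem is your second case, where the gap $(r,g)$ lies strictly between the two \spines{}. The paper disposes of this case in two lines by \emph{reusing the comb machinery}: since $(r,g)$ is empty, and combedness of $(r+1,1)$ implies that the entire diagonal extending down-left from $(r,g)$ is empty (any agent there would belong to a down-left line anchored at $(r,g)$, which is vacant), the position $(r,g+1)$ is combable; applying one more comb there and invoking Lemma~\ref{lem:combingspinelength} immediately yields a \targetspine{} of length at most $r-1$. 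All validity-of-moves concerns are already packaged inside the comb operation, so nothing new needs to be checked.

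Your replacement for this step --- peeling the lane-$r$ agents above the gap off one at a time and walking them elsewhere so as to shorten the \emph{source} \spine{} --- has a genuine gap, and you correctly identify it yourself: none of the individual moves are shown to be valid. Two concrete points make this more than a formality. First, your structural claim that the stack above the gap ``is attached to the bulk only through the inner \spine{} edge $(r,0)$--$(r-1,0)$'' is unjustified and false in general: the combed structure of $(r+1,1)$ constrains only lanes $\geq r+1$, so the agents at $(r,1),\dots,(r,g-1)$ may have occupied neighbors in lane $r-1$ (the interior region strictly between the \spines{} is untouched by the \spinecomb{}), and your peeling induction would have to cope with arbitrarily many such attachments. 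Second, ``walking each agent through the empty wedge or down through the gap'' is not a move available in this chain: each elementary move displaces an agent by one site and must keep the configuration connected (and hole-free) at every intermediate step, so an agent cannot traverse vacant territory; you would need to exhibit a boundary-hugging path of valid single moves for every peeled agent, which is exactly the kind of argument the comb operation exists to encapsulate. Without that, the second case is not proved.
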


\begin{proof}
Let $r$ denote the length of the \sourcespine{} $S_i$. Suppose that there is an unoccupied site $(r,d)$ on this line segment.
If the unoccupied site on the line segment is on the \sourcespine{} $S_i$ (which actually never happens), as every site on the \spine{} of distance greater than $r$ will be unoccupied by Lemma~\ref{lem:spinecomb}, the new minimum \spine{} length would be at most $r-1$. If the unoccupied site is on the \targetspine{} $S_{i+1}$, as $(r+1,1)$ is combed, every site of the \targetspine{} of distance greater than this unoccupied site would also be unoccupied. Hence the minimum \spine{} length would also have decreased to at most $r-1$ in this case.

If the unoccupied site $(r,d)$ lies strictly between the \sourcespine{} and the \targetspine{}, we apply one more comb on position $(r,d+1)$. Position $(r,d+1)$ is combable as $(r,d)$ is empty, and $(r+1,1)$ being combed ensures that all sites on the half-line extending down-left from $(r,d)$ are also empty, so $(r+1,d+1)$ is also combed. By Lemma~\ref{lem:combingspinelength}, combing $(r,d+1)$ results in the \targetspine{} having length at most $r-1$.
\end{proof}

Thus, from now on we may assume that whenever a \spinecomb{} is executed from a \sourcespine{} of minimum length, there will be no gaps in the line between the \sourcespine{} and the \targetspine{}.
In addition, we assume that the comb operations do not cause any other \spine{} (in particular the \spine{} going downwards) to end up with a \spine{} length below the current minimum spine length, as in this case we have already achieved the result of Lemma~\ref{lem:reducespinelength}.
The following Lemma shows that such a \spinecomb{} ``pushes'' all of the agents of distance greater than the minimum \spine{} length towards the \targetspine{} ore beyond.

\begin{lemma}[Resulting configuration assuming no gap exists]
\label{lem:nogapresult}
Suppose that a \spinecomb{} is executed from a \sourcespine{} $S_i$ (of minimal length $r$) to a \targetspine{} $S_{i+1}$, and assume that there are no gaps in the line between $S_i$ and $S_{i+1}$.
In the resulting configuration, there will be no agent of distance greater than $r$ strictly between the \sourcespine{} and \targetspine{}, or on the \sourcespine{} itself. Furthermore, the lengths of both the \sourcespine{} and \targetspine{} will now be exactly $r$.
\end{lemma}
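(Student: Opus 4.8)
The plan is to work entirely in the local $(\ell,d)$ coordinates of the \spinecomb{}, in which the \sourcespine{} $S_i$ is the set of positions $(\ell,0)$ and the \targetspine{} $S_{i+1}$ is the set of positions $(\ell,\ell)$, and every position $(\ell,d)$ with $0\le d\le \ell$ has distance exactly $\ell$ from the center. An agent ``of distance greater than $r$ strictly between the \spines{} or on the \sourcespine{}'' is then exactly an agent at some $(\ell,d)$ with $\ell\ge r+1$ and $0\le d<\ell$, so my goal for the first claim is to show every such site is empty. I would first unpack the no-gap hypothesis (Definition~\ref{defn:gapintheline}): it says precisely that the whole vertical wall $(r,0),(r,1),\dots,(r,r)$ at lane $r$ is occupied. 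I would then split the target region along the lower boundary of the \residualregion{} of $(r+1,1)$, i.e.\ the file $d-\ell=-r$.

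For the outer half of the region I would invoke Lemma~\ref{lem:spinecomb} directly. Its second conclusion states that the region between the half-lines going up-left and down-left from $(r+1,0)$ is empty; in coordinates this is exactly the set of $(\ell,d)$ with $\ell\ge r+1$ and $0\le d\le \ell-r-1$. In particular the \sourcespine{} tail $(\ell,0)$ for $\ell>r$ is empty, so no \spine{} agent of distance greater than $r$ survives on $S_i$, and the portion of the between-region below file $-r$ is cleared.

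The heart of the argument, and the step I expect to be the main obstacle, is the remaining sliver: sites $(\ell,d)$ with $\ell\ge r+1$ and $\ell-r\le d\le \ell-1$, which is exactly the part of the \residualregion{} of $(r+1,1)$ lying strictly above the \targetspine{}. Here I would use that $(r+1,1)$ is combed (Lemma~\ref{lem:spinecomb} together with Definition~\ref{defn:combed}): every agent of this \residualregion{} lies on a straight down-left line which, by the third condition of \emph{combed}, emanates from an agent in the column immediately to the right of the \residualregion{} --- namely lane $r$ --- that has \emph{no} agent directly below it. A line lying in this sliver would have to emanate from a site $(r,d')$ with $0\le d'\le r-1$; but by the no-gap wall such a site always has the occupied site $(r,d'+1)$ directly beneath it, so it cannot be the top of a line. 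Hence no line, and therefore no agent, occupies the sliver, and combining with the previous paragraph establishes the first claim. The delicate points I would check carefully are that ``the column directly to the right of the \residualregion{}'' is indeed lane $r$, and that the files making up the sliver are precisely those whose would-be origins are the occupied, non-bottom wall sites.

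Finally, for the \spine{} lengths I would argue each bound separately. For the \targetspine{}, the last comb of the \spinecomb{} is performed on $(r+1,1)$ or $(r+1,0)$, in either case a position $(\ell,d)$ with $d<\ell$; Lemma~\ref{lem:combingspinelength} then gives that the down-left (target) \spine{} has length at most $r$ immediately afterwards, while the no-gap wall keeps $(r,r)$ attached to the bulk, forcing length at least $r$, hence exactly $r$. For the \sourcespine{}, the empty wedge above already shows no \spine{} agent survives past distance $r$ (length at most $r$), and by Lemma~\ref{lem:aboveunaffected} the combs --- all performed at lanes $\ge r+1$ --- never disturb $(r,0)$, which the wall keeps attached to the bulk; so its length is exactly $r$ as well. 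A minor point to dispatch is the degenerate case $r=1$, where $(r,1)$ lies on the \targetspine{} rather than strictly between the \spines{}, which I would handle by direct inspection.
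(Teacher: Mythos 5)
Your proposal is correct and follows essentially the same route as the paper's proof: use Lemma~\ref{lem:spinecomb} to clear the wedge between the up-left and down-left half-lines from $(r+1,0)$, use the combed structure of $(r+1,1)$ together with the fully occupied no-gap wall on lane $r$ (whose sites all have occupied sites beneath them, so no down-left line can originate there) to force every agent in the residual region onto or below the \targetspine{}, and use the wall plus the combed/comb-length property to pin both \spine{} lengths to exactly $r$. Your write-up is in fact more explicit than the paper's one-line treatment of the sliver, and your citation of Lemma~\ref{lem:combingspinelength} for the \targetspine{} bound is the correct one (the paper's text mis-cites Lemma~\ref{lem:reducespinelength} at that point), while the $r=1$ degeneracy you flag is glossed over by the paper as well.
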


\begin{proof}
After the \spinecomb{}, position $(r+1,1)$ will be combed, and the region between the down-left and up-left diagonals extending from $(r+1,0)$ as described in Lemma~\ref{lem:spinecomb} will be empty. As there is no gap in the line from $S_i$ to $S_{i+1}$, the only lines extending left and down in the residual region of $(r+1,1)$ will be on the \targetspine{} or below, giving us the first part of this Lemma.

The length of the \sourcespine{} is $r$ as position $(r,0)$ is occupied while no position on the \sourcespine{} beyond that is. For the length of the \targetspine{}, the position on the \targetspine{} of distance $r$ from the immobile agent is occupied and is not a tail agent, and by Lemma~\ref{lem:reducespinelength}, $(r+1,1)$ being combed implies that the \targetspine{} has length at most $r$.
\end{proof}

As a \spinecomb{} sets the length of the \targetspine{} $S_{i+1}$ to be the same as that of the \sourcespine{} $S_i$, which has minimum length, we can continue executing \spinecombs{} in a counterclockwise fashion, from $S_{i+1}$ to $S_{i+2}$, followed by $S_{i+2}$ to $S_{i+3}$, and so on. We show that after seven of these \spinecombs{} which do not create gaps, we will reach a type of configuration we will call a \emph{hexagon with a tail}. Figure~\ref{fig:hexagon_tail} illustrates examples of these ``hexagon with a tail'' configurations, though one should note that it is not necessary for all sites on the outer hexagon to be filled.

\begin{figure*}[t]
\begin{subfigure}{.33\textwidth}
  %\centering
  %\includegraphics[width=.8\linewidth]{diagrams_irreducibility/hexagon_tail1.png}
  \begin{center}
  \begin{tikzpicture}[x=0.35cm,y=0.35cm]
  \draw[lightgray] (10.3923,-1) -- (10.3923,-10);
\draw[lightgray] (0,-7) -- (11.2583,-0.5);
\draw[lightgray] (0,-7) -- (6.06218,-10.5);
\draw[lightgray] (0,-10) -- (12.1244,-3);
\draw[lightgray] (0,-10) -- (0.866025,-10.5);
\draw[lightgray] (7.79423,-0.5) -- (12.1244,-3);
\draw[lightgray] (7.79423,-0.5) -- (7.79423,-10.5);
\draw[lightgray] (12.1244,-1) -- (12.1244,-10);
\draw[lightgray] (0,-4) -- (6.06218,-0.5);
\draw[lightgray] (0,-4) -- (11.2583,-10.5);
\draw[lightgray] (9.52628,-0.5) -- (12.1244,-2);
\draw[lightgray] (9.52628,-0.5) -- (9.52628,-10.5);
\draw[lightgray] (11.2583,-10.5) -- (12.1244,-10);
\draw[lightgray] (0,-1) -- (0.866025,-0.5);
\draw[lightgray] (0,-1) -- (12.1244,-8);
\draw[lightgray] (0,-1) -- (0,-10);
\draw[lightgray] (11.2583,-0.5) -- (12.1244,-1);
\draw[lightgray] (11.2583,-0.5) -- (11.2583,-10.5);
\draw[lightgray] (1.73205,-1) -- (1.73205,-10);
\draw[lightgray] (9.52628,-10.5) -- (12.1244,-9);
\draw[lightgray] (0.866025,-10.5) -- (12.1244,-4);
\draw[lightgray] (3.4641,-1) -- (3.4641,-10);
\draw[lightgray] (0,-5) -- (7.79423,-0.5);
\draw[lightgray] (0,-5) -- (9.52628,-10.5);
\draw[lightgray] (0.866025,-0.5) -- (12.1244,-7);
\draw[lightgray] (0.866025,-0.5) -- (0.866025,-10.5);
\draw[lightgray] (0,-8) -- (12.1244,-1);
\draw[lightgray] (0,-8) -- (4.33013,-10.5);
\draw[lightgray] (4.33013,-10.5) -- (12.1244,-6);
\draw[lightgray] (5.19615,-1) -- (5.19615,-10);
\draw[lightgray] (0,-2) -- (2.59808,-0.5);
\draw[lightgray] (0,-2) -- (12.1244,-9);
\draw[lightgray] (2.59808,-0.5) -- (12.1244,-6);
\draw[lightgray] (2.59808,-0.5) -- (2.59808,-10.5);
\draw[lightgray] (2.59808,-10.5) -- (12.1244,-5);
\draw[lightgray] (4.33013,-0.5) -- (12.1244,-5);
\draw[lightgray] (4.33013,-0.5) -- (4.33013,-10.5);
\draw[lightgray] (6.06218,-0.5) -- (12.1244,-4);
\draw[lightgray] (6.06218,-0.5) -- (6.06218,-10.5);
\draw[lightgray] (0,-6) -- (9.52628,-0.5);
\draw[lightgray] (0,-6) -- (7.79423,-10.5);
\draw[lightgray] (0,-9) -- (12.1244,-2);
\draw[lightgray] (0,-9) -- (2.59808,-10.5);
\draw[lightgray] (0,-3) -- (4.33013,-0.5);
\draw[lightgray] (0,-3) -- (12.1244,-10);
\draw[lightgray] (6.9282,-1) -- (6.9282,-10);
\draw[lightgray] (6.06218,-10.5) -- (12.1244,-7);
\draw[lightgray] (7.79423,-10.5) -- (12.1244,-8);
\draw[lightgray] (8.66025,-1) -- (8.66025,-10);
\draw[black!40!green, dotted, line width=0.5mm] (0,-10) -- (12.1244,-3);
\draw[black!40!green, dotted, line width=0.5mm] (0,-1) -- (12.1244,-8);
\draw[black!40!green, dotted, line width=0.5mm] (7.79423,-0.5) -- (7.79423,-10.5);
\draw[black, line width=0.4mm, fill=white] (0.866025,-1.5) circle (0.288);
\draw[black, line width=0.4mm, fill=white] (1.73205,-2) circle (0.288);
\draw[black, line width=0.4mm, fill=white] (2.59808,-2.5) circle (0.288);
\draw[black, line width=0.4mm, fill=white] (3.4641,-3) circle (0.288);
\draw[black, line width=0.4mm, fill=white] (4.33013,-3.5) circle (0.288);
\node[align=left] at (4.50333,-2.8) {\footnotesize $p_1$};
\draw[black, line width=0.4mm, fill=white] (4.33013,-4.5) circle (0.288);
\draw[black, line width=0.4mm, fill=white] (4.33013,-5.5) circle (0.288);
\draw[black, line width=0.4mm, fill=white] (4.33013,-6.5) circle (0.288);
\draw[black, line width=0.4mm, fill=white] (4.33013,-7.5) circle (0.288);
\draw[black, line width=0.4mm, fill=white] (5.19615,-3) circle (0.288);
\draw[black, line width=0.4mm, fill=white] (5.19615,-4) circle (0.288);
\draw[black, line width=0.4mm, fill=white] (5.19615,-8) circle (0.288);
\draw[black, line width=0.4mm, fill=white] (6.06218,-2.5) circle (0.288);
\draw[black, line width=0.4mm, fill=white] (6.06218,-4.5) circle (0.288);
\draw[black, line width=0.4mm, fill=white] (6.06218,-8.5) circle (0.288);
\draw[black, line width=0.4mm, fill=white] (6.9282,-2) circle (0.288);
\draw[black, line width=0.4mm, fill=white] (6.9282,-4) circle (0.288);
\draw[black, line width=0.4mm, fill=white] (6.9282,-6) circle (0.288);
\draw[black, line width=0.4mm, fill=white] (6.9282,-7) circle (0.288);
\draw[black, line width=0.4mm, fill=white] (6.9282,-9) circle (0.288);
\draw[black, line width=0.4mm, fill=white] (7.79423,-1.5) circle (0.288);
\draw[black, line width=0.4mm, fill=white] (7.79423,-4.5) circle (0.288);
\draw[black, line width=0.4mm, fill=white] (7.79423,-5.5) circle (0.336);
\draw[black, line width=0.32mm] (7.79423,-5.5) circle (0.24);
\node[align=left] at (7.79423,-5.5) {\scriptsize $f$};
\draw[black, line width=0.4mm, fill=white] (7.79423,-6.5) circle (0.288);
\draw[black, line width=0.4mm, fill=white] (7.79423,-9.5) circle (0.288);
\draw[black, line width=0.4mm, fill=white] (8.66025,-2) circle (0.288);
\draw[black, line width=0.4mm, fill=white] (8.66025,-7) circle (0.288);
\draw[black, line width=0.4mm, fill=white] (8.66025,-9) circle (0.288);
\draw[black, line width=0.4mm, fill=white] (9.52628,-2.5) circle (0.288);
\draw[black, line width=0.4mm, fill=white] (9.52628,-8.5) circle (0.288);
\draw[black, line width=0.4mm, fill=white] (10.3923,-3) circle (0.288);
\draw[black, line width=0.4mm, fill=white] (10.3923,-8) circle (0.288);
\draw[black, line width=0.4mm, fill=white] (11.2583,-3.5) circle (0.288);
\draw[black, line width=0.4mm, fill=white] (11.2583,-4.5) circle (0.288);
\draw[black, line width=0.4mm, fill=white] (11.2583,-5.5) circle (0.288);
\draw[black, line width=0.4mm, fill=white] (11.2583,-6.5) circle (0.288);
\draw[black, line width=0.4mm, fill=white] (11.2583,-7.5) circle (0.288);
\draw[black, line width=0.5mm] (8.05404,-1.65) -- (8.40045,-1.85);
\draw[black, line width=0.5mm] (4.33013,-6.2) -- (4.33013,-5.8);
\draw[black, line width=0.5mm] (9.78609,-2.65) -- (10.1325,-2.85);
\draw[black, line width=0.5mm] (11.2583,-4.2) -- (11.2583,-3.8);
\draw[black, line width=0.5mm] (7.18801,-9.15) -- (7.53442,-9.35);
\draw[black, line width=0.5mm] (6.32199,-2.35) -- (6.6684,-2.15);
\draw[black, line width=0.5mm] (4.33013,-5.2) -- (4.33013,-4.8);
\draw[black, line width=0.5mm] (8.92006,-2.15) -- (9.26647,-2.35);
\draw[black, line width=0.5mm] (5.45596,-2.85) -- (5.80237,-2.65);
\draw[black, line width=0.5mm] (11.2583,-7.2) -- (11.2583,-6.8);
\draw[black, line width=0.5mm] (1.99186,-2.15) -- (2.33827,-2.35);
\draw[black, line width=0.5mm] (7.18801,-5.85) -- (7.53442,-5.65);
\draw[black, line width=0.5mm] (7.18801,-6.15) -- (7.53442,-6.35);
\draw[black, line width=0.5mm] (1.12583,-1.65) -- (1.47224,-1.85);
\draw[black, line width=0.5mm] (8.92006,-8.85) -- (9.26647,-8.65);
\draw[black, line width=0.5mm] (10.6521,-3.15) -- (10.9985,-3.35);
\draw[black, line width=0.5mm] (10.6521,-7.85) -- (10.9985,-7.65);
\draw[black, line width=0.5mm] (5.19615,-3.7) -- (5.19615,-3.3);
\draw[black, line width=0.5mm] (5.45596,-4.15) -- (5.80237,-4.35);
\draw[black, line width=0.5mm] (2.85788,-2.65) -- (3.20429,-2.85);
\draw[black, line width=0.5mm] (4.33013,-4.2) -- (4.33013,-3.8);
\draw[black, line width=0.5mm] (4.58993,-4.35) -- (4.93634,-4.15);
\draw[black, line width=0.5mm] (7.18801,-1.85) -- (7.53442,-1.65);
\draw[black, line width=0.5mm] (11.2583,-6.2) -- (11.2583,-5.8);
\draw[black, line width=0.5mm] (6.9282,-6.7) -- (6.9282,-6.3);
\draw[black, line width=0.5mm] (7.18801,-6.85) -- (7.53442,-6.65);
\draw[black, line width=0.5mm] (7.79423,-6.2) -- (7.79423,-5.8);
\draw[black, line width=0.5mm] (8.05404,-6.65) -- (8.40045,-6.85);
\draw[black, line width=0.5mm] (4.58993,-3.35) -- (4.93634,-3.15);
\draw[black, line width=0.5mm] (4.58993,-3.65) -- (4.93634,-3.85);
\draw[black, line width=0.5mm] (5.45596,-8.15) -- (5.80237,-8.35);
\draw[black, line width=0.5mm] (9.78609,-8.35) -- (10.1325,-8.15);
\draw[black, line width=0.5mm] (4.33013,-7.2) -- (4.33013,-6.8);
\draw[black, line width=0.5mm] (4.58993,-7.65) -- (4.93634,-7.85);
\draw[black, line width=0.5mm] (8.05404,-9.35) -- (8.40045,-9.15);
\draw[black, line width=0.5mm] (11.2583,-5.2) -- (11.2583,-4.8);
\draw[black, line width=0.5mm] (6.32199,-4.35) -- (6.6684,-4.15);
\draw[black, line width=0.5mm] (3.72391,-3.15) -- (4.07032,-3.35);
\draw[black, line width=0.5mm] (7.79423,-5.2) -- (7.79423,-4.8);
\draw[black, line width=0.5mm] (6.32199,-8.65) -- (6.6684,-8.85);
\draw[black, line width=0.5mm] (7.18801,-4.15) -- (7.53442,-4.35);
\draw[black!20!red, line width=0.36mm, ] (0.866025,-1.5) circle (0.408);
\draw[black!20!red,-{Stealth[length=1.6mm,width=2.5mm]},line width=0.7mm] (0.597558,-1.655) -- (0,-2);
\draw[black!20!red, line width=0.36mm, ] (1.73205,-2) circle (0.408);
\draw[black!20!red,-{Stealth[length=1.6mm,width=2.5mm]},line width=0.7mm] (1.46358,-2.155) -- (0.866025,-2.5);
\draw[black!20!red, line width=0.36mm, ] (2.59808,-2.5) circle (0.408);
\draw[black!20!red,-{Stealth[length=1.6mm,width=2.5mm]},line width=0.7mm] (2.32961,-2.655) -- (1.73205,-3);
\draw[black!20!red, line width=0.36mm, ] (3.4641,-3) circle (0.408);
\draw[black!20!red,-{Stealth[length=1.6mm,width=2.5mm]},line width=0.7mm] (3.19563,-3.155) -- (2.59808,-3.5);
\draw[black!20!red, line width=0.36mm, ] (4.33013,-3.5) circle (0.408);
\draw[black!20!red,-{Stealth[length=1.6mm,width=2.5mm]},line width=0.7mm] (4.06166,-3.655) -- (3.4641,-4);
  \end{tikzpicture}
  \end{center}
  \caption{Moving a corner outward.}
  \label{fig:hexagon_tail1}
\end{subfigure}%
\begin{subfigure}{.33\textwidth}
  %\centering
  %\includegraphics[width=.8\linewidth]{diagrams_irreducibility/hexagon_tail2.png}
  \begin{center}
  \begin{tikzpicture}[x=0.35cm,y=0.35cm]
  \draw[lightgray] (10.3923,-1) -- (10.3923,-10);
\draw[lightgray] (0,-7) -- (11.2583,-0.5);
\draw[lightgray] (0,-7) -- (6.06218,-10.5);
\draw[lightgray] (0,-10) -- (12.1244,-3);
\draw[lightgray] (0,-10) -- (0.866025,-10.5);
\draw[lightgray] (7.79423,-0.5) -- (12.1244,-3);
\draw[lightgray] (7.79423,-0.5) -- (7.79423,-10.5);
\draw[lightgray] (12.1244,-1) -- (12.1244,-10);
\draw[lightgray] (0,-4) -- (6.06218,-0.5);
\draw[lightgray] (0,-4) -- (11.2583,-10.5);
\draw[lightgray] (9.52628,-0.5) -- (12.1244,-2);
\draw[lightgray] (9.52628,-0.5) -- (9.52628,-10.5);
\draw[lightgray] (11.2583,-10.5) -- (12.1244,-10);
\draw[lightgray] (0,-1) -- (0.866025,-0.5);
\draw[lightgray] (0,-1) -- (12.1244,-8);
\draw[lightgray] (0,-1) -- (0,-10);
\draw[lightgray] (11.2583,-0.5) -- (12.1244,-1);
\draw[lightgray] (11.2583,-0.5) -- (11.2583,-10.5);
\draw[lightgray] (1.73205,-1) -- (1.73205,-10);
\draw[lightgray] (9.52628,-10.5) -- (12.1244,-9);
\draw[lightgray] (0.866025,-10.5) -- (12.1244,-4);
\draw[lightgray] (3.4641,-1) -- (3.4641,-10);
\draw[lightgray] (0,-5) -- (7.79423,-0.5);
\draw[lightgray] (0,-5) -- (9.52628,-10.5);
\draw[lightgray] (0.866025,-0.5) -- (12.1244,-7);
\draw[lightgray] (0.866025,-0.5) -- (0.866025,-10.5);
\draw[lightgray] (0,-8) -- (12.1244,-1);
\draw[lightgray] (0,-8) -- (4.33013,-10.5);
\draw[lightgray] (4.33013,-10.5) -- (12.1244,-6);
\draw[lightgray] (5.19615,-1) -- (5.19615,-10);
\draw[lightgray] (0,-2) -- (2.59808,-0.5);
\draw[lightgray] (0,-2) -- (12.1244,-9);
\draw[lightgray] (2.59808,-0.5) -- (12.1244,-6);
\draw[lightgray] (2.59808,-0.5) -- (2.59808,-10.5);
\draw[lightgray] (2.59808,-10.5) -- (12.1244,-5);
\draw[lightgray] (4.33013,-0.5) -- (12.1244,-5);
\draw[lightgray] (4.33013,-0.5) -- (4.33013,-10.5);
\draw[lightgray] (6.06218,-0.5) -- (12.1244,-4);
\draw[lightgray] (6.06218,-0.5) -- (6.06218,-10.5);
\draw[lightgray] (0,-6) -- (9.52628,-0.5);
\draw[lightgray] (0,-6) -- (7.79423,-10.5);
\draw[lightgray] (0,-9) -- (12.1244,-2);
\draw[lightgray] (0,-9) -- (2.59808,-10.5);
\draw[lightgray] (0,-3) -- (4.33013,-0.5);
\draw[lightgray] (0,-3) -- (12.1244,-10);
\draw[lightgray] (6.9282,-1) -- (6.9282,-10);
\draw[lightgray] (6.06218,-10.5) -- (12.1244,-7);
\draw[lightgray] (7.79423,-10.5) -- (12.1244,-8);
\draw[lightgray] (8.66025,-1) -- (8.66025,-10);
\draw[black!40!green, dotted, line width=0.5mm] (0,-10) -- (12.1244,-3);
\draw[black!40!green, dotted, line width=0.5mm] (0,-1) -- (12.1244,-8);
\draw[black!40!green, dotted, line width=0.5mm] (7.79423,-0.5) -- (7.79423,-10.5);
\draw[black, line width=0.4mm, fill=white] (0.866025,-1.5) circle (0.288);
\draw[black, line width=0.4mm, fill=white] (1.73205,-2) circle (0.288);
\draw[black, line width=0.4mm, fill=white] (2.59808,-2.5) circle (0.288);
\draw[black, line width=0.4mm, fill=white] (3.4641,-3) circle (0.288);
\draw[black, line width=0.4mm, fill=white] (4.33013,-3.5) circle (0.288);
\draw[black, line width=0.4mm, fill=white] (4.33013,-4.5) circle (0.288);
\draw[black, line width=0.4mm, fill=white] (4.33013,-5.5) circle (0.288);
\node[align=left] at (3.5074,-5.5) {\footnotesize $p_2$};
\draw[black, line width=0.4mm, fill=white] (4.33013,-6.5) circle (0.288);
\draw[black, line width=0.4mm, fill=white] (4.33013,-7.5) circle (0.288);
\draw[black, line width=0.4mm, fill=white] (5.19615,-3) circle (0.288);
\draw[black, line width=0.4mm, fill=white] (5.19615,-5) circle (0.288);
\draw[black, line width=0.4mm, fill=white] (5.19615,-8) circle (0.288);
\draw[black, line width=0.4mm, fill=white] (6.06218,-2.5) circle (0.288);
\draw[black, line width=0.4mm, fill=white] (6.06218,-5.5) circle (0.288);
\draw[black, line width=0.4mm, fill=white] (6.06218,-8.5) circle (0.288);
\draw[black, line width=0.4mm, fill=white] (6.9282,-2) circle (0.288);
\draw[black, line width=0.4mm, fill=white] (6.9282,-5) circle (0.288);
\draw[black, line width=0.4mm, fill=white] (6.9282,-9) circle (0.288);
\draw[black, line width=0.4mm, fill=white] (7.79423,-1.5) circle (0.288);
\draw[black, line width=0.4mm, fill=white] (7.79423,-3.5) circle (0.288);
\draw[black, line width=0.4mm, fill=white] (7.79423,-4.5) circle (0.288);
\draw[black, line width=0.4mm, fill=white] (7.79423,-5.5) circle (0.336);
\draw[black, line width=0.32mm] (7.79423,-5.5) circle (0.24);
\node[align=left] at (7.79423,-5.5) {\scriptsize $f$};
\draw[black, line width=0.4mm, fill=white] (7.79423,-6.5) circle (0.288);
\draw[black, line width=0.4mm, fill=white] (7.79423,-9.5) circle (0.288);
\draw[black, line width=0.4mm, fill=white] (8.66025,-2) circle (0.288);
\draw[black, line width=0.4mm, fill=white] (8.66025,-4) circle (0.288);
\draw[black, line width=0.4mm, fill=white] (8.66025,-5) circle (0.288);
\draw[black, line width=0.4mm, fill=white] (8.66025,-9) circle (0.288);
\draw[black, line width=0.4mm, fill=white] (9.52628,-2.5) circle (0.288);
\draw[black, line width=0.4mm, fill=white] (9.52628,-8.5) circle (0.288);
\draw[black, line width=0.4mm, fill=white] (10.3923,-3) circle (0.288);
\draw[black, line width=0.4mm, fill=white] (10.3923,-8) circle (0.288);
\draw[black, line width=0.4mm, fill=white] (11.2583,-3.5) circle (0.288);
\draw[black, line width=0.4mm, fill=white] (11.2583,-4.5) circle (0.288);
\draw[black, line width=0.4mm, fill=white] (11.2583,-5.5) circle (0.288);
\draw[black, line width=0.4mm, fill=white] (11.2583,-6.5) circle (0.288);
\draw[black, line width=0.4mm, fill=white] (11.2583,-7.5) circle (0.288);
\draw[black, line width=0.5mm] (8.05404,-1.65) -- (8.40045,-1.85);
\draw[black, line width=0.5mm] (4.33013,-6.2) -- (4.33013,-5.8);
\draw[black, line width=0.5mm] (9.78609,-2.65) -- (10.1325,-2.85);
\draw[black, line width=0.5mm] (11.2583,-4.2) -- (11.2583,-3.8);
\draw[black, line width=0.5mm] (7.18801,-9.15) -- (7.53442,-9.35);
\draw[black, line width=0.5mm] (7.79423,-4.2) -- (7.79423,-3.8);
\draw[black, line width=0.5mm] (8.05404,-4.35) -- (8.40045,-4.15);
\draw[black, line width=0.5mm] (8.05404,-4.65) -- (8.40045,-4.85);
\draw[black, line width=0.5mm] (7.18801,-4.85) -- (7.53442,-4.65);
\draw[black, line width=0.5mm] (7.18801,-5.15) -- (7.53442,-5.35);
\draw[black, line width=0.5mm] (6.32199,-2.35) -- (6.6684,-2.15);
\draw[black, line width=0.5mm] (4.33013,-5.2) -- (4.33013,-4.8);
\draw[black, line width=0.5mm] (4.58993,-5.35) -- (4.93634,-5.15);
\draw[black, line width=0.5mm] (5.45596,-2.85) -- (5.80237,-2.65);
\draw[black, line width=0.5mm] (11.2583,-7.2) -- (11.2583,-6.8);
\draw[black, line width=0.5mm] (1.99186,-2.15) -- (2.33827,-2.35);
\draw[black, line width=0.5mm] (1.12583,-1.65) -- (1.47224,-1.85);
\draw[black, line width=0.5mm] (8.92006,-8.85) -- (9.26647,-8.65);
\draw[black, line width=0.5mm] (10.6521,-3.15) -- (10.9985,-3.35);
\draw[black, line width=0.5mm] (10.6521,-7.85) -- (10.9985,-7.65);
\draw[black, line width=0.5mm] (2.85788,-2.65) -- (3.20429,-2.85);
\draw[black, line width=0.5mm] (4.33013,-4.2) -- (4.33013,-3.8);
\draw[black, line width=0.5mm] (4.58993,-4.65) -- (4.93634,-4.85);
\draw[black, line width=0.5mm] (7.18801,-1.85) -- (7.53442,-1.65);
\draw[black, line width=0.5mm] (8.05404,-3.65) -- (8.40045,-3.85);
\draw[black, line width=0.5mm] (11.2583,-6.2) -- (11.2583,-5.8);
\draw[black, line width=0.5mm] (6.32199,-5.35) -- (6.6684,-5.15);
\draw[black, line width=0.5mm] (7.79423,-6.2) -- (7.79423,-5.8);
\draw[black, line width=0.5mm] (5.45596,-5.15) -- (5.80237,-5.35);
\draw[black, line width=0.5mm] (4.58993,-3.35) -- (4.93634,-3.15);
\draw[black, line width=0.5mm] (8.66025,-4.7) -- (8.66025,-4.3);
\draw[black, line width=0.5mm] (5.45596,-8.15) -- (5.80237,-8.35);
\draw[black, line width=0.5mm] (9.78609,-8.35) -- (10.1325,-8.15);
\draw[black, line width=0.5mm] (4.33013,-7.2) -- (4.33013,-6.8);
\draw[black, line width=0.5mm] (4.58993,-7.65) -- (4.93634,-7.85);
\draw[black, line width=0.5mm] (8.05404,-9.35) -- (8.40045,-9.15);
\draw[black, line width=0.5mm] (11.2583,-5.2) -- (11.2583,-4.8);
\draw[black, line width=0.5mm] (3.72391,-3.15) -- (4.07032,-3.35);
\draw[black, line width=0.5mm] (7.79423,-5.2) -- (7.79423,-4.8);
\draw[black, line width=0.5mm] (8.05404,-5.35) -- (8.40045,-5.15);
\draw[black, line width=0.5mm] (6.32199,-8.65) -- (6.6684,-8.85);
\draw[black, line width=0.5mm] (8.92006,-2.15) -- (9.26647,-2.35);
\draw[black!20!red, line width=0.36mm, ] (4.33013,-5.5) circle (0.408);
\draw[black!20!red,-{Stealth[length=1.6mm,width=2.5mm]},line width=0.7mm] (4.59859,-5.655) -- (5.19615,-6);
  \end{tikzpicture}
  \end{center}
  \caption{Moving a side agent inward.}
  \label{fig:hexagon_tail2}
\end{subfigure}%
\begin{subfigure}{.33\textwidth}
  %\centering
  %\includegraphics[width=.8\linewidth]{diagrams_irreducibility/hexagon_tail3.png}
  \begin{center}
  \begin{tikzpicture}[x=0.35cm,y=0.35cm]
  \draw[lightgray] (10.3923,-1) -- (10.3923,-10);
\draw[lightgray] (0,-7) -- (11.2583,-0.5);
\draw[lightgray] (0,-7) -- (6.06218,-10.5);
\draw[lightgray] (0,-10) -- (12.1244,-3);
\draw[lightgray] (0,-10) -- (0.866025,-10.5);
\draw[lightgray] (7.79423,-0.5) -- (12.1244,-3);
\draw[lightgray] (7.79423,-0.5) -- (7.79423,-10.5);
\draw[lightgray] (12.1244,-1) -- (12.1244,-10);
\draw[lightgray] (0,-4) -- (6.06218,-0.5);
\draw[lightgray] (0,-4) -- (11.2583,-10.5);
\draw[lightgray] (9.52628,-0.5) -- (12.1244,-2);
\draw[lightgray] (9.52628,-0.5) -- (9.52628,-10.5);
\draw[lightgray] (11.2583,-10.5) -- (12.1244,-10);
\draw[lightgray] (0,-1) -- (0.866025,-0.5);
\draw[lightgray] (0,-1) -- (12.1244,-8);
\draw[lightgray] (0,-1) -- (0,-10);
\draw[lightgray] (11.2583,-0.5) -- (12.1244,-1);
\draw[lightgray] (11.2583,-0.5) -- (11.2583,-10.5);
\draw[lightgray] (1.73205,-1) -- (1.73205,-10);
\draw[lightgray] (9.52628,-10.5) -- (12.1244,-9);
\draw[lightgray] (0.866025,-10.5) -- (12.1244,-4);
\draw[lightgray] (3.4641,-1) -- (3.4641,-10);
\draw[lightgray] (0,-5) -- (7.79423,-0.5);
\draw[lightgray] (0,-5) -- (9.52628,-10.5);
\draw[lightgray] (0.866025,-0.5) -- (12.1244,-7);
\draw[lightgray] (0.866025,-0.5) -- (0.866025,-10.5);
\draw[lightgray] (0,-8) -- (12.1244,-1);
\draw[lightgray] (0,-8) -- (4.33013,-10.5);
\draw[lightgray] (4.33013,-10.5) -- (12.1244,-6);
\draw[lightgray] (5.19615,-1) -- (5.19615,-10);
\draw[lightgray] (0,-2) -- (2.59808,-0.5);
\draw[lightgray] (0,-2) -- (12.1244,-9);
\draw[lightgray] (2.59808,-0.5) -- (12.1244,-6);
\draw[lightgray] (2.59808,-0.5) -- (2.59808,-10.5);
\draw[lightgray] (2.59808,-10.5) -- (12.1244,-5);
\draw[lightgray] (4.33013,-0.5) -- (12.1244,-5);
\draw[lightgray] (4.33013,-0.5) -- (4.33013,-10.5);
\draw[lightgray] (6.06218,-0.5) -- (12.1244,-4);
\draw[lightgray] (6.06218,-0.5) -- (6.06218,-10.5);
\draw[lightgray] (0,-6) -- (9.52628,-0.5);
\draw[lightgray] (0,-6) -- (7.79423,-10.5);
\draw[lightgray] (0,-9) -- (12.1244,-2);
\draw[lightgray] (0,-9) -- (2.59808,-10.5);
\draw[lightgray] (0,-3) -- (4.33013,-0.5);
\draw[lightgray] (0,-3) -- (12.1244,-10);
\draw[lightgray] (6.9282,-1) -- (6.9282,-10);
\draw[lightgray] (6.06218,-10.5) -- (12.1244,-7);
\draw[lightgray] (7.79423,-10.5) -- (12.1244,-8);
\draw[lightgray] (8.66025,-1) -- (8.66025,-10);
\draw[black!40!green, dotted, line width=0.5mm] (0,-10) -- (12.1244,-3);
\draw[black!40!green, dotted, line width=0.5mm] (0,-1) -- (12.1244,-8);
\draw[black!40!green, dotted, line width=0.5mm] (7.79423,-0.5) -- (7.79423,-10.5);
\draw[black, line width=0.4mm, fill=white] (0.866025,-1.5) circle (0.288);
\draw[black, line width=0.4mm, fill=white] (1.73205,-2) circle (0.288);
\draw[black, line width=0.4mm, fill=white] (2.59808,-2.5) circle (0.288);
\draw[black, line width=0.4mm, fill=white] (3.4641,-3) circle (0.288);
\draw[black, line width=0.4mm, fill=white] (4.33013,-3.5) circle (0.288);
\draw[black, line width=0.4mm, fill=white] (4.33013,-4.5) circle (0.288);
\draw[black, line width=0.4mm, fill=white] (4.33013,-5.5) circle (0.288);
\node[align=left] at (3.72391,-5.95) {\footnotesize $p_3$};
\draw[black, line width=0.4mm, fill=white] (4.33013,-6.5) circle (0.288);
\draw[black, line width=0.4mm, fill=white] (4.33013,-7.5) circle (0.288);
\draw[black, line width=0.4mm, fill=white] (5.19615,-3) circle (0.288);
\draw[black, line width=0.4mm, fill=white] (5.19615,-5) circle (0.288);
\draw[black, line width=0.4mm, fill=white] (5.19615,-6) circle (0.288);
\draw[black, line width=0.4mm, fill=white] (5.19615,-8) circle (0.288);
\draw[black, line width=0.4mm, fill=white] (6.06218,-2.5) circle (0.288);
\draw[black, line width=0.4mm, fill=white] (6.06218,-4.5) circle (0.288);
\draw[black, line width=0.4mm, fill=white] (6.06218,-8.5) circle (0.288);
\draw[black, line width=0.4mm, fill=white] (6.9282,-2) circle (0.288);
\draw[black, line width=0.4mm, fill=white] (6.9282,-5) circle (0.288);
\draw[black, line width=0.4mm, fill=white] (6.9282,-9) circle (0.288);
\draw[black, line width=0.4mm, fill=white] (7.79423,-1.5) circle (0.288);
\draw[black, line width=0.4mm, fill=white] (7.79423,-3.5) circle (0.288);
\draw[black, line width=0.4mm, fill=white] (7.79423,-4.5) circle (0.288);
\draw[black, line width=0.4mm, fill=white] (7.79423,-5.5) circle (0.336);
\draw[black, line width=0.32mm] (7.79423,-5.5) circle (0.24);
\node[align=left] at (7.79423,-5.5) {\scriptsize $f$};
\draw[black, line width=0.4mm, fill=white] (7.79423,-6.5) circle (0.288);
\draw[black, line width=0.4mm, fill=white] (7.79423,-9.5) circle (0.288);
\draw[black, line width=0.4mm, fill=white] (8.66025,-2) circle (0.288);
\draw[black, line width=0.4mm, fill=white] (8.66025,-4) circle (0.288);
\draw[black, line width=0.4mm, fill=white] (8.66025,-5) circle (0.288);
\draw[black, line width=0.4mm, fill=white] (8.66025,-9) circle (0.288);
\draw[black, line width=0.4mm, fill=white] (9.52628,-2.5) circle (0.288);
\draw[black, line width=0.4mm, fill=white] (9.52628,-8.5) circle (0.288);
\draw[black, line width=0.4mm, fill=white] (10.3923,-3) circle (0.288);
\draw[black, line width=0.4mm, fill=white] (10.3923,-8) circle (0.288);
\draw[black, line width=0.4mm, fill=white] (11.2583,-3.5) circle (0.288);
\draw[black, line width=0.4mm, fill=white] (11.2583,-4.5) circle (0.288);
\draw[black, line width=0.4mm, fill=white] (11.2583,-5.5) circle (0.288);
\draw[black, line width=0.4mm, fill=white] (11.2583,-6.5) circle (0.288);
\draw[black, line width=0.4mm, fill=white] (11.2583,-7.5) circle (0.288);
\draw[black, line width=0.5mm] (8.05404,-1.65) -- (8.40045,-1.85);
\draw[black, line width=0.5mm] (5.19615,-5.7) -- (5.19615,-5.3);
\draw[black, line width=0.5mm] (4.33013,-6.2) -- (4.33013,-5.8);
\draw[black, line width=0.5mm] (4.58993,-6.35) -- (4.93634,-6.15);
\draw[black, line width=0.5mm] (9.78609,-2.65) -- (10.1325,-2.85);
\draw[black, line width=0.5mm] (11.2583,-4.2) -- (11.2583,-3.8);
\draw[black, line width=0.5mm] (7.18801,-9.15) -- (7.53442,-9.35);
\draw[black, line width=0.5mm] (7.79423,-4.2) -- (7.79423,-3.8);
\draw[black, line width=0.5mm] (8.05404,-4.35) -- (8.40045,-4.15);
\draw[black, line width=0.5mm] (8.05404,-4.65) -- (8.40045,-4.85);
\draw[black, line width=0.5mm] (7.18801,-4.85) -- (7.53442,-4.65);
\draw[black, line width=0.5mm] (7.18801,-5.15) -- (7.53442,-5.35);
\draw[black, line width=0.5mm] (6.32199,-2.35) -- (6.6684,-2.15);
\draw[black, line width=0.5mm] (4.33013,-5.2) -- (4.33013,-4.8);
\draw[black, line width=0.5mm] (4.58993,-5.35) -- (4.93634,-5.15);
\draw[black, line width=0.5mm] (4.58993,-5.65) -- (4.93634,-5.85);
\draw[black, line width=0.5mm] (5.45596,-2.85) -- (5.80237,-2.65);
\draw[black, line width=0.5mm] (11.2583,-7.2) -- (11.2583,-6.8);
\draw[black, line width=0.5mm] (1.99186,-2.15) -- (2.33827,-2.35);
\draw[black, line width=0.5mm] (1.12583,-1.65) -- (1.47224,-1.85);
\draw[black, line width=0.5mm] (8.92006,-8.85) -- (9.26647,-8.65);
\draw[black, line width=0.5mm] (10.6521,-3.15) -- (10.9985,-3.35);
\draw[black, line width=0.5mm] (10.6521,-7.85) -- (10.9985,-7.65);
\draw[black, line width=0.5mm] (2.85788,-2.65) -- (3.20429,-2.85);
\draw[black, line width=0.5mm] (4.33013,-4.2) -- (4.33013,-3.8);
\draw[black, line width=0.5mm] (4.58993,-4.65) -- (4.93634,-4.85);
\draw[black, line width=0.5mm] (7.18801,-1.85) -- (7.53442,-1.65);
\draw[black, line width=0.5mm] (8.05404,-3.65) -- (8.40045,-3.85);
\draw[black, line width=0.5mm] (11.2583,-6.2) -- (11.2583,-5.8);
\draw[black, line width=0.5mm] (7.79423,-6.2) -- (7.79423,-5.8);
\draw[black, line width=0.5mm] (5.45596,-4.85) -- (5.80237,-4.65);
\draw[black, line width=0.5mm] (4.58993,-3.35) -- (4.93634,-3.15);
\draw[black, line width=0.5mm] (8.66025,-4.7) -- (8.66025,-4.3);
\draw[black, line width=0.5mm] (5.45596,-8.15) -- (5.80237,-8.35);
\draw[black, line width=0.5mm] (9.78609,-8.35) -- (10.1325,-8.15);
\draw[black, line width=0.5mm] (4.33013,-7.2) -- (4.33013,-6.8);
\draw[black, line width=0.5mm] (4.58993,-7.65) -- (4.93634,-7.85);
\draw[black, line width=0.5mm] (8.05404,-9.35) -- (8.40045,-9.15);
\draw[black, line width=0.5mm] (11.2583,-5.2) -- (11.2583,-4.8);
\draw[black, line width=0.5mm] (6.32199,-4.65) -- (6.6684,-4.85);
\draw[black, line width=0.5mm] (3.72391,-3.15) -- (4.07032,-3.35);
\draw[black, line width=0.5mm] (7.79423,-5.2) -- (7.79423,-4.8);
\draw[black, line width=0.5mm] (8.05404,-5.35) -- (8.40045,-5.15);
\draw[black, line width=0.5mm] (6.32199,-8.65) -- (6.6684,-8.85);
\draw[black, line width=0.5mm] (8.92006,-2.15) -- (9.26647,-2.35);
\draw[black!20!red, line width=0.36mm, ] (4.33013,-5.5) circle (0.408);
\draw[black!20!red,-{Stealth[length=1.6mm,width=2.5mm]},line width=0.7mm] (4.06166,-5.345) -- (3.4641,-5);
  \end{tikzpicture}
  \end{center}
  \caption{Moving a side agent outward.}
  \label{fig:hexagon_tail3}
\end{subfigure}%
\caption{Possible cases for reducing the minimum \spine{} length from a hexagon with a tail.}
\label{fig:hexagon_tail}
\end{figure*}
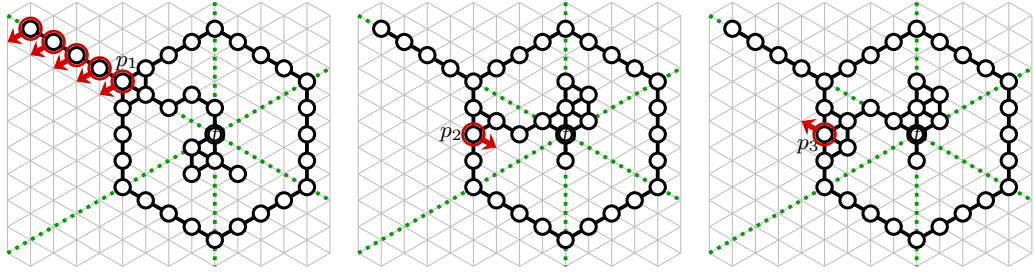

\begin{definition}[Hexagon with a Tail]
We say a configuration forms a has the ``hexagon with a tail'' arrangement of radius $r$ if:
\begin{itemize}
\item All \spines{} have length exactly $r$.
\item There are tail agents on at most one of the \spines{}.
\item Aside from these tail agents, there are no agents of distance greater than $r$ from the center.
\end{itemize}
If $r = 0$, this ``regular hexagon'' comprises of only the immobile agent. In other words, a hexagon with a tail of radius $r$ has all of the agents extending in a straight line from the immobile agent.
\end{definition}

\begin{lemma}[Reaching a Hexagon with a Tail]
\label{lem:reachinghexagon}
After seven \spinecombs{} in a counterclockwise order starting from a \spine{} of minimum length $r$, assuming that no gaps in the lines are formed and that no \spine{} ends up with length below $r$ in the process, we will end up with a hexagon with a tail arrangement of radius $r$.
\end{lemma}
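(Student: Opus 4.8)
The plan is to induct on the seven \spinecombs{}, maintaining an invariant that records which regions and \spines{} have been emptied of agents of distance greater than $r$. Write $R_j$ for the open wedge strictly between $S_j$ and $S_{j+1}$, and call a region or a \spine{} \emph{clear} when it contains no agent of distance greater than $r$ (so a clear \spine{} carries no \tailagents{}). The driving observation is that the two standing hypotheses --- that no gaps form and that no \spine{} ever drops below length $r$ --- keep $r$ the minimum \spine{} length throughout, so every \sourcespine{} we comb from has length exactly $r$ and is hence of minimal length. Lemma~\ref{lem:nogapresult} then applies to each comb: the \spinecomb{} from $S_{k-1}$ to $S_k$ makes $R_{k-1}$ and $S_{k-1}$ clear and forces $|S_{k-1}| = |S_k| = r$.

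First I would record the preservation half of the invariant. Every combable position of a \spinecomb{} lies on or one step below its \sourcespine{} (Definition~\ref{proc:spinecomb}), so Lemma~\ref{lem:aboveunaffected}, applied position by position, shows that a comb from $S_{k-1}$ to $S_k$ disturbs nothing in the half-plane above $S_{k-1}$ --- in particular it leaves the regions $R_{k-2}, R_{k-3}, \dots$ and the \spines{} $S_{k-2}, S_{k-3}, \dots$ exactly as they were. Feeding the clearing statement of Lemma~\ref{lem:nogapresult} into this, an induction on $k$ gives: after comb $k$ (for $1 \le k \le 6$) the regions $R_0, \dots, R_{k-1}$ and the \spines{} $S_0, \dots, S_{k-1}$ are all clear, every \spine{} touched so far has length $r$, and all remaining excess sits strictly downstream. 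In particular, after comb $5$ the only excess lies in $R_5 \cup S_5$; and comb $6$, whose \targetspine{} is $S_0$, sweeps this excess across $S_0$ into the region $R_0$ that comb $1$ had emptied. The outcome of comb $6$ is therefore: all six \spines{} have length exactly $r$, every region except $R_0$ and every \spine{} except $S_0$ is clear, and the only agents of distance greater than $r$ lie in $R_0$ and on $S_0$.

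This wrap-around is exactly why a sixth comb cannot suffice and a seventh is needed: comb $6$ re-populates the one region comb $1$ had cleared. The seventh comb, from $S_0$ to $S_1$, again makes $R_0$ and $S_0$ clear (Lemma~\ref{lem:nogapresult}); since Lemma~\ref{lem:aboveunaffected} preserves the already-clear $R_2, \dots, R_5$ and $S_2, \dots, S_5$, the final configuration can only fail to be a hexagon with a tail if comb $7$ deposits an agent of distance greater than $r$ strictly inside $R_1$, rather than onto $S_1$ as a \tailagent{}. Ruling this out is the heart of the matter and the step I expect to be the main obstacle. The favorable feature to exploit is that the input to comb $7$ is not arbitrary but the \emph{combed} output of comb $6$ (Lemma~\ref{lem:combmakescombed}): a family of straight lines, parallel to $S_0$, lying above a region $R_1$ that is free of excess. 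I would argue that the line-merging phase of comb $7$, having no excess below it to obstruct the merges and being handed a gap-free column by the no-gap hypothesis, collapses every one of these lines onto the single lowest file, which is precisely the tail of $S_1$. Carefully tracking the line-merging phase on this structured input is what the argument ultimately rests on; granting it, all surviving agents of distance greater than $r$ are \tailagents{} on the single \spine{} $S_1$, the region $R_1$ is clear, and --- together with every \spine{} having length exactly $r$ and no other \spine{} carrying a tail --- this is exactly a hexagon with a tail of radius $r$.
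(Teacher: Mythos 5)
Your induction invariant is false, and it fails exactly where the real difficulty of this lemma lies. Lemma~\ref{lem:aboveunaffected} protects only the half-plane \emph{above} the \sourcespine{}; it does nothing to stop a comb from depositing agents \emph{below} its \targetspine{}. A \spinecomb{} from $S_{k-1}$ to $S_k$ can leave agents in $R_k$ and $R_{k+1}$ and on $S_{k+1}$, $S_{k+2}$ (these are the ``limited side effects on the two regions counterclockwise'' the paper warns about, and the reason Lemma~\ref{lem:nogapresult} only asserts clearing of the source region and \sourcespine{}). With just six regions this wraps around one comb earlier than you account for: comb~$5$, from $S_4$ to $S_5$, can push excess into $R_0$ and onto $S_0$ --- the paper says so explicitly (``the fifth \spinecomb{} may move agents onto \spine{} $S_0$ or the region between \spines{} $S_0$ and $S_1$''), and controls this spill not with Lemma~\ref{lem:aboveunaffected} but with Lemma~\ref{lem:rightmostextent}. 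So your claim that after comb~$5$ the only excess lies in $R_5\cup S_5$ is unjustified (indeed false in general), and the same defect recurs at comb~$6$: its side effects reach $R_1$, $S_1$ and $S_2$, so ``the only agents of distance greater than $r$ lie in $R_0$ and on $S_0$'' does not follow from anything you have proved. Your argument only ever invokes Lemmas~\ref{lem:aboveunaffected} and~\ref{lem:nogapresult}, and those two lemmas alone cannot control the forward spill; the paper needs Lemma~\ref{lem:rightmostextent} (after comb~$5$, nothing lies further right than the \anchoragent{} of $S_1$) and Lemma~\ref{lem:unenterableregion} (the cone beyond $S_2$'s anchor stays empty through comb~$6$) precisely to obtain the confinement you assert for free.

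The second gap is the one you flag yourself: the comb-$7$ step is left as ``granting it.'' Here your plan also diverges from what actually works. The paper never tracks the line-merging phase of comb~$7$ on structured input; instead it first upgrades the post-comb-$6$ state to a \emph{strip} statement (all excess lies between the two up-left diagonals through $(r,0)$ and $(r,r)$, a strictly stronger fact than ``excess in $R_0\cup S_0$,'' obtained from the rightmost-extent and unenterable-region arguments above), and then applies Lemma~\ref{lem:unenterableregion} twice more, to the cones below $(r,r+1)$ and beyond $(-r-1,0)$, to conclude that comb~$7$ can never place an agent strictly inside $R_1$ or past $S_3$; combined with Lemmas~\ref{lem:aboveunaffected} and~\ref{lem:nogapresult} this pins all surviving excess onto $S_1$. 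So both halves of your proposal --- the invariant driving the induction and the final confinement step --- have genuine holes, and both holes are plugged in the paper by the two quantitative lemmas (\ref{lem:rightmostextent} and \ref{lem:unenterableregion}) that your argument never uses. As written, the proposal is not a proof.
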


\begin{proof}
We will denote the starting \spine{} as $S_0$, and name the remaining spines $S_1$ to $S_5$ in counterclockwise order. The \spinecombs{} hence go from $S_0$ to $S_1$, from $S_1$ to $S_2$ and so on, with the final (seventh) comb being from $S_0$ to $S_1$. \Spine{} $S_0$ is assumed to be a minimum length \spine{}, of length $r$.

By Lemma~\ref{lem:aboveunaffected}, a \spinecomb{} from spines $S_i$ to $S_{i+1}$ will only affect agents on spines $S_i$, $S_{i+1}$, $S_{i+2}$, and the agents between spines $S_i$ and $S_{i+1}$, between spines $S_{i+1}$ and $S_{i+2}$, and between spines $S_{i+2}$ and $S_{i+3}$. Note that this does include the agents on spine $S_{i+3}$. Hence, the first four \spinecombs{} will not affect the result of the first \spinecomb{} from $S_0$ to $S_1$. 

On the fifth \spinecomb{} from $S_4$ to $S_5$, as usual without loss of generality we take $S_4$ to be the \spine{} going up-left and $S_5$ to be the \spine{} going down-left. \Spine{} $S_0$ will thus be going downwards and \spine{} $S_1$ will be going down-right. Due to the effects of the first three combs, there will be no agent further right than the \anchoragent{} of \spine{} $S_1$. By Lemma~\ref{lem:rightmostextent}, while the fifth \spinecomb{} may move agents onto \spine{} $S_0$ or the region between \spines{} $S_0$ and $S_1$, none of these agents in the resulting configuration will be further right than the \anchoragent{} of \spine $S_1$.

On the sixth \spinecomb{} from $S_5$ to $S_0$, taking $S_5$ to be going up-left and $S_0$ to be going down-left, consider the position $(-r-1,0)$, which is one agent down-right of the \anchoragent{} of the down-right \spine{} $S_2$. The region $R_{-r-1,0}$, as defined in Lemma~\ref{lem:unenterableregion}, will be empty after the fifth \spinecomb{}, due to what we have just shown to happen after the fifth \spinecomb{}. By Lemma~\ref{lem:unenterableregion}, this region will continue to be empty after the sixth \spinecomb{}.

On the seventh and final \spinecomb{} from $S_0$ to $S_1$, take $S_0$ to be going up-left and $S_1$ to be going down-left. Consider the positions $(r,0)$ and $(r,r)$, which are on the \sourcespine{} $S_0$ and \targetspine{} $S_1$ respectively, of distance $r$ from the center. As a result of the sixth \spinecomb{} with Lemma~\ref{lem:nogapresult}, all agents of distance greater than $r$ from the center must lie between (inclusive) the two diagonal lines going up-left from the positions $(r,0)$ and $(r,r)$. Now, from the position $(r,r+1)$ which lies directly below $(r,r)$ and the position $(-r-1,0)$, which lies on \spine{} $S_3$ of distance $r+1$ from the center, we consider the two regions $R_{r,r+1}$ and $R_{-r-1,0}$ as in Lemma~\ref{lem:unenterableregion}. Both of these regions are initially empty, and so will remain empty after the seventh comb. By Lemmas~\ref{lem:aboveunaffected} and \ref{lem:nogapresult}, the only place where agents of distance greater than $r$ can be are on the \targetspine{} $S_1$.

As we had assumed that no \spine{} will have ended up with length less than $r$ in the process, this means we have reached a hexagon with a tail arrangement of radius $r$.
\end{proof}

The following Lemma then concludes the proof that we can always reduce the minimum \spine{} length, provided that the current minimum \spine{} length is at least $1$.

\begin{lemma}
From a hexagon with a tail arrangement of radius $r \geq 1$, there exists a sequence of moves to reduce the minimum \spine{} length by $1$.
\end{lemma}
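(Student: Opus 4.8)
The plan is to shorten a single \spine{} from length $r$ to $r-1$ by a short, explicit sequence of single-agent moves localized near one corner of the hexagon, using the rotational and reflectional symmetry of the configuration to place everything in a convenient orientation and using the \tailagents{} (if any are present) as a reservoir into which a displaced agent can be absorbed. Since every \spine{} currently has length exactly $r$, it suffices to make one \spine{} have its \anchoragent{} at distance $r-1$ while keeping every other \spine{} at length at least $r-1$, and while keeping the configuration connected and hole-free; this reduces the minimum \spine{} length by one. First I would fix, up to symmetry, which \spine{} is to be shortened and where the (at most one) tail sits, reducing to the canonical orientation of Figure~\ref{fig:hexagon_tail}.

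The core is a case analysis on the local occupancy pattern around the corner agent at distance $r$ on the target \spine{}, precisely the three situations drawn in Figures~\ref{fig:hexagon_tail1}, \ref{fig:hexagon_tail2}, and \ref{fig:hexagon_tail3}. The determining data are whether the distance-$r$ sites flanking the corner and the distance-$(r-1)$ sites immediately behind it are occupied. When the corner is a sharp exposed tip whose neighbouring edge sites recede (Figures~\ref{fig:hexagon_tail1} and \ref{fig:hexagon_tail3}), I would slide the corner agent, and if necessary a short run of side agents, outward along the boundary so that it is relocated into (or forms) a tail, vacating the distance-$r$ \spine{} site and leaving the distance-$(r-1)$ site as the new exposed \anchoragent{}. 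When instead the adjacent edge is filled right up to the corner (Figure~\ref{fig:hexagon_tail2}), simply removing the corner would expose an agent that still has an adjacent non-\spine{} neighbour and thus over-shorten the \spine{}; here I would move a side agent inward to repair the boundary so that the target \spine{} lands at length exactly $r-1$. In every case the sequence consists only of moving a single agent into an adjacent empty site along the exposed outer boundary.

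The verification then has two parts, carried out move by move. First, each individual move must be a valid compression move: the destination is empty, the moving agent keeps the configuration connected (it slides while remaining adjacent to a neighbour, and since we always operate on the exposed boundary no agent is a cut vertex at the moment it moves), and no hole is created (all activity takes place on the outer boundary of an already hole-free configuration). Second, I would confirm that after the whole sequence the target \spine{} has its \anchoragent{} at distance exactly $r-1$, that no other \spine{} has dropped below $r-1$, and that the only agents of distance greater than $r-1$ now lie on a single \spine{} as \tailagents{}, so that the outcome is again consistent with a hexagon-with-a-tail of radius $r-1$ and the minimum \spine{} length has decreased by one.

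The main obstacle is showing that the three cases are exhaustive and that the chosen moves land the target \spine{} at exactly $r-1$, rather than overshooting to a smaller length or inadvertently driving some other \spine{} below $r-1$. This is delicate precisely because the boundary of a hexagon-with-a-tail need not be fully filled, so the occupancy of the distance-$(r-1)$ sites behind the corner, together with the placement of the tail, genuinely dictates which agents must be moved; the bulk of the work is verifying that the configurations of Figure~\ref{fig:hexagon_tail} cover all possibilities and that in each the explicit move sequence simultaneously preserves connectivity and hole-freeness and achieves the exact decrement.
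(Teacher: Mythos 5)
Your proposal takes a genuinely different route from the paper's, and as written it has two fatal gaps. The first is that you never invoke the comb operation, and without it your middle case does not terminate. In the paper, the only situation handled by a purely local slide is when a path of agents from the center reaches $H$ (the set of sites at distance exactly $r$) next to a corner: then the corner, followed move-by-move by its tail, is slid outward (Figure~\ref{fig:hexagon_tail1}) and that spine immediately drops below length $r$. In the other two cases (Figures~\ref{fig:hexagon_tail2} and~\ref{fig:hexagon_tail3}) the single move of a side agent, inward to $u_{-1}$ or outward to $u_{+1}$, changes no spine length whatsoever --- the moved agent is not a spine agent, and the corner remains anchored by its other neighbors on $H$; the move merely creates a gap in $H$. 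The actual shortening is then accomplished by combing the position below that gap, via Lemma~\ref{lem:combingspinelength}, which is a global rearrangement of the entire residual region, not a boundary slide. Your corresponding step (``move a side agent inward to repair the boundary so that the target spine lands at length exactly $r-1$'') asserts the conclusion but supplies no mechanism: immediately after that one move, all six spines still have length $r$. Relatedly, your final claim that the outcome is a hexagon with a tail of radius $r-1$ is both false (the other five spines still have length $r$ and the agents of $H$ are still in place) and unnecessary: the lemma only requires the minimum spine length to decrease, and the hexagonal shape is re-established only later, by seven further spine combs in Lemma~\ref{lem:reachinghexagon}.

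The second gap is your justification of move validity: ``since we always operate on the exposed boundary no agent is a cut vertex at the moment it moves'' is simply false, because a hexagon with a tail need not have a full ring or full interior. Concretely, take the up-left spine occupied at distances $1,\dots,r-2$ and at $r$; the left wall fully occupied from $(r,1)$ down to the down-left corner; a single top-wall agent at $(r-1,-1)$ whose only occupied neighbor is the corner $(r,0)$; and all other spines full lines of length $r$. This is connected, hole-free, and every spine has length exactly $r$, yet the corner is a boundary agent \emph{and} a cut vertex: sliding it down-left to $(r+1,1)$ strands the agent at $(r-1,-1)$, and sliding it up to $(r,-1)$ disconnects the resulting pair from everything else, so neither of your outward slides is a valid move, and the needed first move (relocating the hanging wall agent) is not along the outer boundary and appears nowhere in your case analysis. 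The paper avoids exactly this trap by extracting from connectivity a pair of agents $v_{-2},v_{-1}$ at distances $r-2,r-1$ and branching on how the path attaches to $H$; that choice simultaneously certifies that each move preserves connectivity and makes the case analysis exhaustive --- the two issues you yourself flag as ``the main obstacle'' but leave unresolved. (Your analysis also overlooks that when the corner to be slid carries the tail, the entire tail must be relocated one agent at a time, as in Figure~\ref{fig:hexagon_tail1}; a single-agent slide disconnects it.)
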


\begin{proof}
Consider the set $H$ of positions of distance exactly $r$ from the center. This set of positions is in the shape of a hexagon. If one of these sites is unfilled, without loss of generality assume this site $(r,d)$ is on the left side of the hexagon (it is not on a corner as all \spines{} have length $r$). The site $(r,d+1)$ is combable, which by Lemma~\ref{lem:combingspinelength} gives us a way to reduce the length of the \spine{} going down-left to at most $r-1$.

If no such gap in $H$ currently exists, we show that we can create such a gap.
If $r=1$, pick any agent on the hexagon $H$ aside from the one on the \spine{} the tail is on. This agent can be moved to a vacant spot between two \spines{}, reducing the minimum \spine{} length to $0$.

% NOTE: This proof is hard to visualize without diagrams
Otherwise, as the configuration is assumed to be connected, there is a path of agents from the center (immobile) agent to an agent on $H$. This implies that there is an agent of $v_{-2}$ distance $r-2$ from the center adjacent to an agent $v_{-1}$ of distance $r-1$ from the center. Note that if $r = 2$, $v_{-2}$ will be the immobile agent. If $v_{-1}$ is adjacent to a corner agent of the hexagon $H$, assuming without loss of generality that this corner is on the \spine{} going up-left, we can move this corner agent one step down-left, and if there are any tail agents attached to this corner agent, they can then subsequently be moved one-by-one one step down-left as well (Figure~\ref{fig:hexagon_tail1}). This reduces the minimum \spine{} length to at most $r-1$.

If $v_{-1}$ is not adjacent to a corner agent of $H$, we note that $v_{-1}$ and $v_{-2}$ will share a neighboring site $u_{-1}$ of distance $r-1$ from the center. The sites $u_{-1}$ and $v_{-1}$ share a neighbor agent $v_0$ on $H$. If site $u_{-1}$ is unoccupied, agent $v_0$ can be moved into site $u_{-1}$, creating a gap in the hexagon $H$ (Figure~\ref{fig:hexagon_tail2}). If $u_{-1}$ is occupied, $v_0$ can be moved in the opposite direction of $u_{-1}$, to a position $u_{+1}$ of distance $r+1$ from the center, creating a gap in the hexagon $H$ (Figure~\ref{fig:hexagon_tail3}).

In both of these cases, by reflection and rotational symmetry, without loss of generality, this newly created gap $v_0$ is on the left wall of the hexagon $H$, and if the agent was moved to $u_{+1}$, $u_{+1}$ is directly up-left of $v_0$. The site directly below $v_0$ is thus combable, and by Lemma~\ref{lem:combingspinelength}, combing this reduces the minimum \spine{} length to at most $r-1$.
\end{proof}

%\subsection{Forming a Straight Line of Agents}
Finally, we show that we can reach a straight line of agents, thus showing ergodicity of the chain since the chain is reversible.
\begin{lemma}
From any connected configuration of agents with one single immobile agent, there exists a sequence of valid moves to transform this configuration into a straight line of agents with the immobile agent at one end.
\end{lemma}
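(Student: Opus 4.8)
The plan is to reduce everything to the observation, built into the definition of a hexagon with a tail, that a straight line of agents emanating from the immobile agent is exactly a hexagon with a tail of radius $0$. It therefore suffices to exhibit moves driving an arbitrary connected configuration to a hexagon with a tail of radius $0$, and the whole section's machinery is organized to make this possible by monotonically shrinking a single integer quantity, the minimum spine length.

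First I would use Lemma~\ref{lem:reducespinelength} as a black box: whenever the minimum spine length is at least $1$, it supplies a sequence of valid moves that strictly decreases it. Since the minimum spine length is a non-negative integer, iterating this reduction terminates after finitely many rounds in a configuration whose minimum spine length is $0$. Valid moves preserve connectivity, so every intermediate configuration remains a legitimate input to Lemma~\ref{lem:reducespinelength}, and the strictly decreasing non-negative integer guarantees termination without any infinite regress.

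The remaining work is the endgame at minimum spine length $0$, where the configuration need not yet be a straight line (a configuration of minimum spine length $0$ can still be quite spread out). Here I would apply the spine-comb construction of Lemma~\ref{lem:reachinghexagon} directly in the degenerate case $r = 0$: pick any spine, which necessarily has length $0$, as $S_0$, and perform seven spine combs in counterclockwise order. Both hypotheses of that lemma hold automatically now, since no spine can acquire a length below $0$, and the line segment of Definition~\ref{defn:gapintheline} at radius $0$ consists only of the immobile agent, which is always occupied, so no gap can form. Lemma~\ref{lem:reachinghexagon} then produces a hexagon with a tail of radius $0$, that is, a straight line of agents with the immobile agent at one end, which is exactly the desired conclusion.

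The main obstacle is conceptual rather than computational: one must be careful that merely reducing the minimum spine length to $0$ does \emph{not} by itself finish the proof, and that the genuine final step is the degenerate $r = 0$ instance of the comb machinery, which must be checked to consolidate the agents into a line rather than stall. A secondary care point is consistency in how Lemma~\ref{lem:reducespinelength} is invoked as a single reduction step: its internal case analysis (a gap forcing a drop below the current minimum, or a spine collapsing early) all results in a strictly smaller minimum spine length, so treating it uniformly as ``decrease by at least one'' keeps the termination argument airtight. Once the straight line is reached, irreducibility of the chain follows from the reversibility of valid compression moves between hole-free configurations noted at the start of the section.
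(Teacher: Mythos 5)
Your proposal is correct and follows essentially the same route as the paper: repeatedly invoke Lemma~\ref{lem:reducespinelength} to drive the minimum \spine{} length down to $0$, then apply Lemma~\ref{lem:reachinghexagon} in the degenerate case $r=0$ to obtain a hexagon with a tail of radius $0$, i.e., a straight line anchored at the immobile agent. Your extra checks (termination via a strictly decreasing non-negative integer, and that the no-gap and no-shorter-\spine{} hypotheses hold vacuously at $r=0$) are sound elaborations of steps the paper leaves implicit.
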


\begin{proof}
Applying Lemma~\ref{lem:reducespinelength} repeatedly allows us to arrive at a configuration with minimum \spine{} length $0$. 
Applying Lemma~\ref{lem:reachinghexagon} from here gives us a hexagon with a tail arrangement of radius $0$, which is a straight line of agents with the immobile agent at one end.
\end{proof}

We observe that the direction in which the final tail faces is irrelevant, as there is a simple sequence of moves to change the direction of the tail, by moving the agents one by one to the location of the new tail, starting from the agent at the very end of the initial tail. This thus allows us to conclude Lemma~\ref{lemma:irreducible}, which also implies that the Markov chain is irreducible.

%\end{document}

\end{document}